\documentclass[sigconf]{acmart}

\usepackage{graphicx}
\usepackage{picinpar}
\usepackage{pifont}
\DeclareGraphicsExtensions{.eps,.pdf,.png,.jpg}
\graphicspath{{figure/}}
\usepackage{subfigure}
\usepackage{algpseudocode}  
\usepackage{pifont}         
\usepackage[ruled,linesnumbered,vlined]{algorithm2e}
\usepackage{makecell}
\usepackage{xcolor}
\usepackage{geometry}
\usepackage{caption}
\usepackage{threeparttable}
\usepackage{array}
\usepackage{amsmath}
\usepackage{amsmath}
\usepackage{latexsym}
\usepackage{amsthm}

\usepackage{booktabs}
\usepackage{multirow}
\usepackage{tabularx}
\usepackage{colortbl}

\usepackage{tikz}
\usetikzlibrary{shapes.geometric, arrows.meta, positioning, fit, backgrounds, calc, patterns}

\usepackage{xcolor}
\definecolor{queuecolor}{RGB}{66, 133, 244}
\definecolor{adjcolor}{RGB}{52, 168, 83}
\definecolor{cachecolor}{RGB}{251, 188, 4}
\definecolor{memorycolor}{RGB}{234, 67, 53}
\definecolor{codebg}{RGB}{245, 245, 245}

\usepackage{listings}
\usepackage{url}
\usepackage[normalem]{ulem}
\usepackage{xspace}
\usepackage{float}
\usepackage{caption}
\usepackage{subcaption}

\renewcommand\footnotetextcopyrightpermission[1]{}

\newtheorem{property}{Property}
\renewcommand{\na}{---}
\renewcommand{\shortauthors}{}
\begin{document}
\title{StreamTGN: A GPU-Efficient Serving System for Streaming Temporal Graph Neural Networks}
\author{Lingling Zhang}
\affiliation{\institution{Capital Normal University}\country{}}
\email{7089@cnu.edu.cn}

\author{Pengpeng Qiao}
\affiliation{\institution{Institute of Science Tokyo}\country{}}
\email{peng2qiao@gmail.com}

\author{Zhiwei Zhang}
\affiliation{\institution{Beijing Institute of Technology}\country{}}
\email{zwzhang@bit.edu.cn}

\author{Ye Yuan}
\affiliation{\institution{Beijing Institute of Technology}\country{}}
\email{yuan-ye@bit.edu.cn}

\author{Guoren Wang}
\affiliation{\institution{Beijing Institute of Technology}\country{}}
\email{wanggr@bit.edu.cn}
\begin{abstract}
Temporal Graph Neural Networks (TGNs) achieve state-of-the-art
performance on dynamic graph tasks, yet existing systems focus
exclusively on accelerating training---at inference time, every
new edge triggers $O(|V|)$ embedding updates even though only a
small fraction of nodes are affected. We present
\textbf{StreamTGN}, the first streaming TGN inference system
exploiting the inherent locality of temporal graph updates: in an
$L$-layer TGN, a new edge affects only nodes within $L$ hops of
the endpoints, typically less than 0.2\% on million-node graphs.
StreamTGN maintains persistent GPU-resident node memory and uses
dirty-flag propagation to identify the affected set $\mathcal{A}$,
reducing per-batch complexity from $O(|V|)$ to $O(|\mathcal{A}|)$
with zero accuracy loss. Drift-aware adaptive rebuild scheduling
and batched streaming with relaxed ordering further maximize
throughput. Experiments on eight temporal graphs (2K--2.6M nodes)
show 4.5$\times$--739$\times$ speedup for TGN and up to
4,207$\times$ for TGAT, with identical accuracy. StreamTGN is
orthogonal to training optimizations: combining SWIFT with
StreamTGN yields 24$\times$ end-to-end speedup across three
architectures (TGN, TGAT, DySAT).
\end{abstract}
\maketitle
\section{Introduction}
Temporal Graph Neural Networks (TGNs) have achieved state-of-the-art performance on dynamic graph tasks such as link prediction \cite{xu2025unidyg} and node classification~\cite{rossi2020temporal,xu2020inductive,kumar2019predicting}.
However, existing TGN systems process graphs in batch mode, recomputing embeddings for all nodes whenever the graph changes \cite{li2023zebra}. This approach is fundamentally inefficient for streaming applications where edges arrive continuously---a single new edge triggers $O(|V|)$ recomputation even though only a small fraction of nodes are actually affected.
 
A growing body of work has sought to accelerate TGN training. TGL~\cite{zhou2022tgl, chen2023neutronstream, wang2024tglite} designed a unified framework with parallel temporal sampling for large-scale training. ETC~\cite{gao2024etc} introduced adaptive batching and a three-step data access policy to reduce redundant data transfer. SIMPLE~\cite{gao2024simple} proposed dynamic GPU data placement to alleviate the CPU--GPU loading bottleneck. SWIFT~\cite{guo2025swift} developed a secondary-memory pipeline to distribute data efficiently across GPU, main memory, and disk. While these systems substantially reduce training time, they all share the same limitation: \emph{at inference time, they execute the same full-recomputation pipeline as TGL}, because their optimizations
target backward-pass scheduling, data loading, or batch construction, none of which apply during online serving \cite{sun2025hyperion}. As illustrated in Figure~\ref{StreamGCN}, even a modest reduction in per-query inference latency has massive cumulative impact in
production systems. 
 
\begin{figure}[hbt!]
\centering
\includegraphics[width=0.46\textwidth]{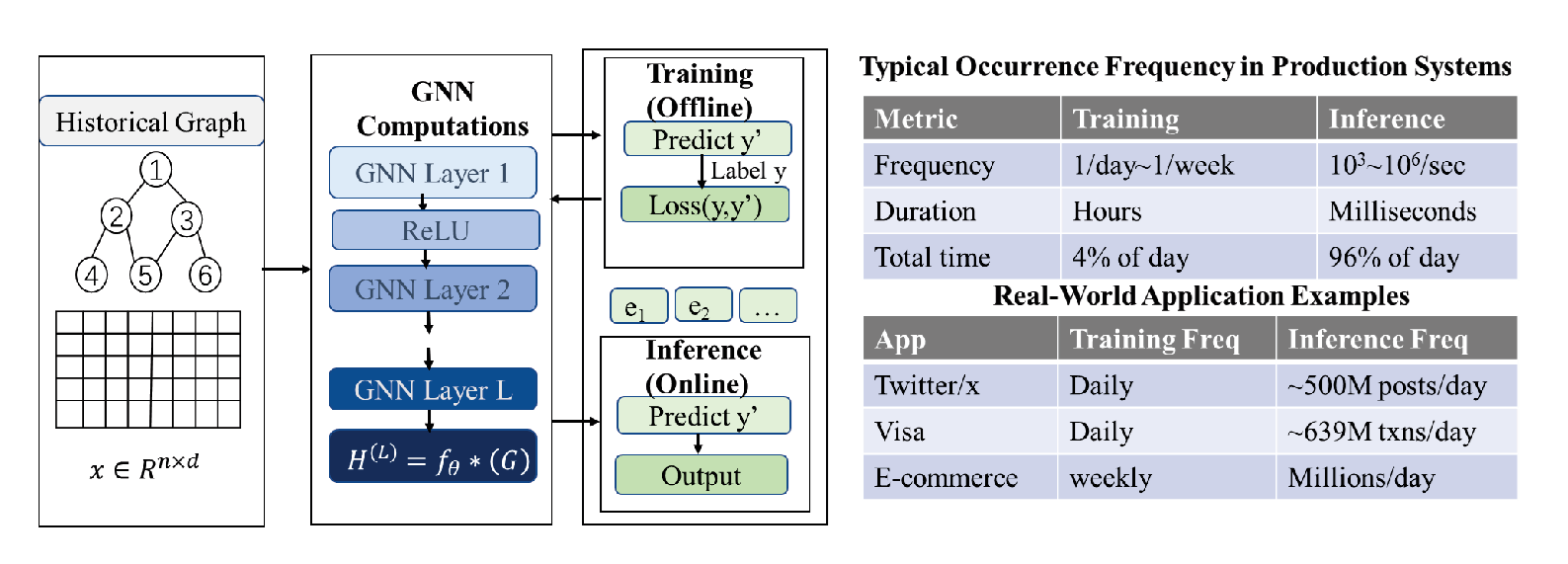}
\caption{Overview of TGN training and inference. Training runs offline and infrequently; inference runs continuously at scale. Even a small improvement in inference latency yields enormous savings: $(10\text{\,ms} - 5\text{\,ms}) \times 10^8
\text{ queries/day} \approx 10^6 \text{ seconds/day}$.}
\label{StreamGCN}
\end{figure}
 
We make a key observation: the influence of a new edge
is inherently local.
In an $L$-layer TGN, only nodes within $L$ hops of
the newly arrived edge have their embeddings changed;
all other node embeddings remain mathematically
identical to their previous values.
On sparse real-world graphs, this \emph{affected set}
$\mathcal{A}$ typically constitutes less than 1--5\%
of the total node population---and on large-scale
graphs with millions of nodes, the fraction drops
below 0.2\%.
This locality creates a massive optimization
opportunity, particularly because inference dominates
the computational budget in production systems.
As illustrated in Figure~\ref{StreamGCN}, training
runs offline and infrequently (daily or weekly),
consuming only ${\sim}$4\% of total compute time,
while inference runs continuously at
$10^3$--$10^6$ queries per second, accounting for
${\sim}$96\% of compute time.
Even a modest 5\,ms reduction in per-query latency
translates to ${\sim}10^6$ seconds of saved compute
per day at $10^8$ daily queries---yet existing
systems (TGL, ETC, SIMPLE, SWIFT) optimize
exclusively for the 4\%, leaving the dominant 96\%
untouched.

Consider, for example, a financial institution
processing 10,000 transactions per second across 10
million accounts.
To detect fraud in real time, the system must evaluate
each transaction within milliseconds.
A traditional TGN implementation recomputes embeddings
for all 10 million accounts per
transaction---requiring ${\sim}$100\,ms and achieving
only 10 transactions per second, far below the
required throughput.
Yet each transaction affects only the accounts within
$L$ hops of the involved parties---typically fewer
than 0.001\% of all accounts.
 
This observation motivates \textbf{StreamTGN}, the first streaming TGN inference system that achieves provably optimal incremental updates, recomputing only the mathematically necessary nodes while guaranteeing identical results to full recomputation. StreamTGN makes three technical contributions:
 
\begin{itemize}
\item \textbf{Persistent GPU-resident memory with dirty-flag
tracking.}
StreamTGN maintains node memory states persistently on the GPU
across batches. When new edges arrive, a lightweight propagation mechanism marks
only the affected nodes---those within $L$ hops of the updated
edges as dirty. Inference then recomputes embeddings exclusively for dirty nodes, reusing cached embeddings for the remainder. This reduces per-batch inference complexity from $O(|V|)$ to $O(|\mathcal{A}|)$, where $|\mathcal{A}| \ll |V|$ on large graphs.
 
\item \textbf{Drift-aware adaptive rebuild scheduling.}
Over time, accumulated incremental updates may cause cached embeddings to drift from their full-recomputation values due to
higher-order neighborhood effects. StreamTGN monitors the divergence between incremental and full-rebuild outputs and triggers a partial or full rebuild at
the optimal moment.
Our experiments show that this adaptive strategy reduces rebuild
cost by an order of magnitude compared to naive periodic
rebuilding, while maintaining prediction accuracy within a
provable bound.
 
\item \textbf{Batched streaming with relaxed ordering.}
Strict sequential processing of one edge at a time severely limits
throughput.
StreamTGN groups incoming edges into batches and processes them in
parallel under a bounded-staleness model: edges within the same
batch share a logical timestamp, and the system guarantees that
the resulting embeddings differ from the strictly sequential output
by at most $\delta$.
This relaxation enables an order-of-magnitude throughput improvement
over strict sequential processing while preserving accuracy within
a provable bound.
\end{itemize}
 
We evaluate StreamTGN on eight real-world temporal graphs spanning
four orders of magnitude in scale.
On the streaming inference task, StreamTGN achieves
4.5$\times$--739$\times$ speedup over TGL for the TGN model, and up
to 4,207$\times$ for the non-memory TGAT model, with zero accuracy
degradation.
When combined with the best training-phase system (SWIFT for training,
StreamTGN for inference), the end-to-end pipeline achieves up to
24$\times$ total speedup on Stack-Overflow.
Crucially, StreamTGN is \emph{orthogonal} to all existing
training-phase optimizations: it can be deployed on top of any
training system to accelerate the inference phase that none of them
address.
\section{Preliminaries}
\label{sec:background}
This section introduces the foundations of Temporal Graph Neural Networks (TGNNs). We first define temporal graphs, then formulate the TGNN problem, describe the architecture, present the mathematical formulation, discuss training and inference, and finally identify the sequential bottleneck that motivates our work.
\subsection{Temporal Graph Definition}
\label{subsec:temporal_graph}
\begin{definition}[Temporal Graph]
A temporal graph denoted $G^t$ is defined as a sequence of time-stamped edges: $$G^t = \{(u_1, v_1, t_1, \mathbf{e}_1), (u_2, v_2, t_2, \mathbf{e}_2), \ldots, (u_E, v_E, t_E, \mathbf{e}_E)\}$$ where each tuple $(u_i, v_i, t_i, \mathbf{e}_i)$ represents an interaction 
between two nodes $u_i$ and $v_i$ at timestamp $t_i$, 
with edge features $\mathbf{e}_i \in \mathbb{R}^{d_e}$. The edges are ordered 
chronologically: $t_1 \leq t_2 \leq \cdots \leq t_E$.
\end{definition}
\begin{definition}[Node Set and Features]
The node set $V = \{v_1, v_2, \ldots, v_N\}$ contains $N$ nodes. Each node $v$ may have static features $\mathbf{x}_v \in \mathbb{R}^{d_n}$.
\end{definition}
\begin{definition}[Temporal Neighborhood]
The temporal neighborhood of node $v$ before time $t$ is defined as: $$\mathcal{N}_v^{<t} = \{(u, t', \mathbf{e}) : (u, v, t', \mathbf{e}) \in G \land t' < t\}$$.This captures all historical interactions involving node $v$ up to (but not including) time $t$.
\end{definition}

A Temporal Graph Neural Network (TGNN) learns to generate node representations that capture both structural and temporal patterns \cite{xu2024timesgn, huang2023temporal}, thereby supporting downstream tasks such as node classification and link prediction \cite{liu2023stgin, sheng2024mspipe}.
\begin{definition}[TGNN Learning Problem]
Given a temporal graph $G^t$ containing all edges up to time $t$, a Temporal Graph Neural Network (TGNN) learns a mapping function $\Phi: G^t \rightarrow H^t$ that transforms the temporal graph into a set of node embeddings $H^t = \{\mathbf{h}_v(t) : v \in V\}$, where $\mathbf{h}_v(t) \in \mathbb{R}^{d_h}$ denotes the embedding vector of node $v$ at time $t$. where $\mathbb{R}^{d_h}$ represents the $d_h$-dimensional real-valued vector space, and $d_h$ is the hidden dimension that determines the expressiveness of the learned representations \cite{wang2024efficient, xia2024redundancy}.
\end{definition}

\subsection{TGNN Architecture}
\label{subsec:tgnn_architecture}
As a variation of graph neural networks, TGNNs iteratively execute neural update and aggregation operations\cite{wang2021gnnadvisor, vatter2023evolution}, where the former primarily features dense matrix multiplication and the latter is characterized by the SpMM operation \cite{yang2022gnnlab}. Specifically, a TGNN performs these two operations over temporal graphs through a pipeline of five stages, as exemplified by TGN~\cite{rossi2020temporal, gravina2024deep}: time encoding, message computation, message aggregation, memory update, and embedding generation. Figure~\ref{fiveModules} illustrates the architecture. We explain the five modules along with the two types of operations as follows.
\begin{figure}[hbt!]
	\centering
	\includegraphics[width=0.46\textwidth]{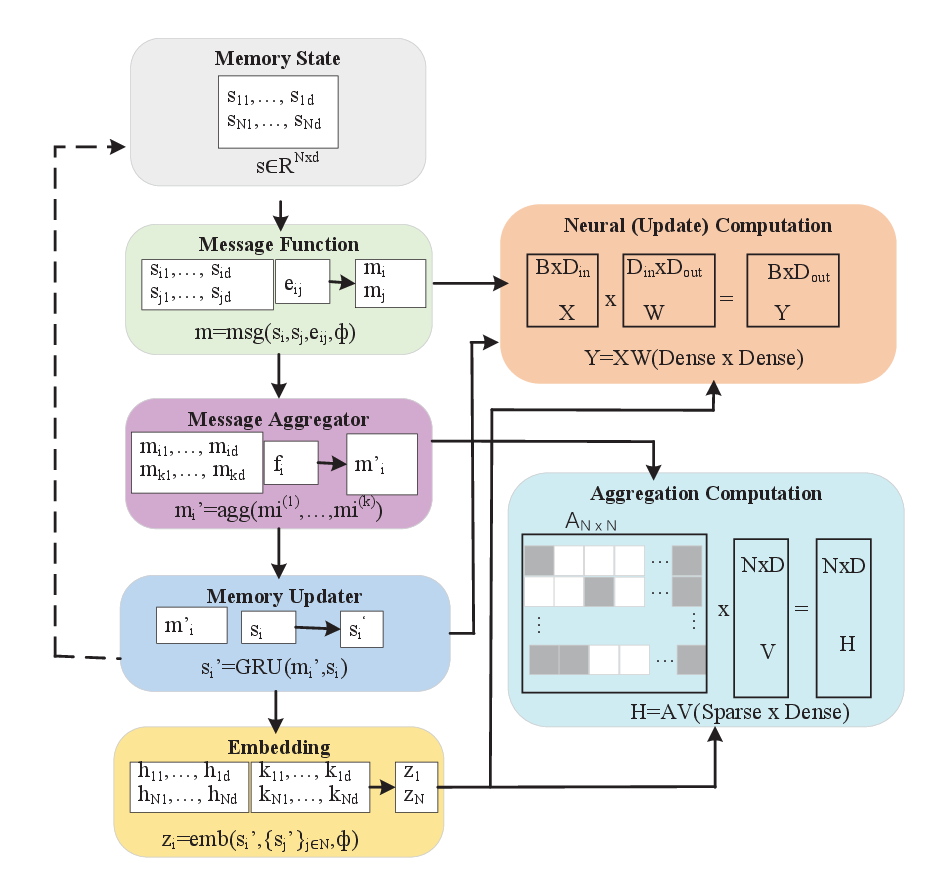}
	\caption{Overview of TGNN architecture including five modules with interleaved neural update and aggregation operations.}\label{fiveModules}
\end{figure}
\subsubsection{Stage 1: Time Encoding}
Before processing edges, TGN encodes temporal information using a learnable time encoding function that captures both short-term and long-term temporal patterns:
\begin{equation}
    \phi(t) = \sqrt{\frac{1}{d_t}} \left[ \cos(\omega_1 t), \sin(\omega_1 t), \ldots, \cos(\omega_{d_t/2} t), \sin(\omega_{d_t/2} t) \right]^\top
\end{equation}
where $\omega_i$ are learnable frequency parameters and $d_t$ is the time encoding dimension. This encoding is used throughout the subsequent stages to incorporate temporal information\cite{milani2019relationship}.
\subsubsection{Stage 2: Message Computation}
When a new edge $(u, v, t, \mathbf{e})$ arrives, the message function computes messages for both the source and destination nodes  (labeled as ${MSG}_s$ and ${MSG}_d$) based on their current memory states, edge features, and time information:
\begin{equation}
\begin{aligned}
    \mathbf{m}_u(t) &= \text{MSG}_s\left(\mathbf{s}_u(t^-), \mathbf{s}_v(t^-), \mathbf{e}, \phi(t - t_u^{\text{last}})\right) \\
    \mathbf{m}_v(t) &= \text{MSG}_d\left(\mathbf{s}_v(t^-), \mathbf{s}_u(t^-), \mathbf{e}, \phi(t - t_v^{\text{last}})\right)
\end{aligned}
\label{eq:msg}
\end{equation}
where $\mathbf{s}_u(t^-)$ denotes the memory state of node $u$ immediately before time $t$,  $t_u^{\text{last}}$ is the timestamp of node $u$'s most recent interaction, and $\phi(\cdot)$ is the time encoding function.
A common implementation concatenates the inputs and applies a linear transformation:
\begin{equation}
    \mathbf{m}_u(t) = W_m \cdot \left[\mathbf{s}_u(t^-) \| \mathbf{s}_v(t^-) \| \mathbf{e} \| \phi(\Delta t)\right] + \mathbf{b}_m
    \label{eq:msg_linear}
\end{equation}
where $\|$ denotes concatenation, $W_m \in \mathbb{R}^{d_m \times (2d_s + d_e + d_t)}$, and $\mathbf{b}_m \in \mathbb{R}^{d_m}$. This stage focuses on the neural update operation of dense matrix multiplication.

\subsubsection{Stage 3: Message Aggregation}
When multiple edges involving the same node arrive simultaneously (within a batch), the aggregator combines their messages into a single aggregated message:
\begin{equation}
    \bar{\mathbf{m}}_v(t) = \text{AGG}\left(\left\{\mathbf{m}_v^{(i)}(t) : i \in \mathcal{B}_v(t)\right\}\right)
    \label{eq:msg_agg}
\end{equation}
where $\mathcal{B}_v(t)$ is the set of edges involving node $v$ in the current batch. 
Common aggregation functions include:
\begin{equation}
\begin{aligned}
    \text{Mean:} \quad &\bar{\mathbf{m}}_v(t) = \frac{1}{|\mathcal{B}_v(t)|} \sum_{i \in \mathcal{B}_v(t)} \mathbf{m}_v^{(i)}(t) \\
    \text{Last:} \quad &\bar{\mathbf{m}}_v(t) = \mathbf{m}_v^{(\text{latest})}(t) \\
    \text{Sum:} \quad &\bar{\mathbf{m}}_v(t) = \sum_{i \in \mathcal{B}_v(t)} \mathbf{m}_v^{(i)}(t)
\end{aligned}
\label{eq:agg_functions}
\end{equation}
This stage focuses on the aggregation operation in terms of sparse-dense matrix multiplication.
\subsubsection{Stage 4: Memory Update}
The memory module maintains a memory state $\mathbf{s}_v(t) \in \mathbb{R}^{d_s}$ 
for each node $v$, serving as a compressed representation of node $v$'s 
historical interactions up to time $t$. The memory is initialized to zero, i.e., 
$\mathbf{s}_v(0) = \mathbf{0}, \forall v \in V$.
After message aggregation, the memory state is updated using a Gated Recurrent Unit (GRU):
\begin{equation}
    \mathbf{s}_v(t) = \text{GRU}\left(\bar{\mathbf{m}}_v(t), \mathbf{s}_v(t^-)\right)
    \label{eq:mem_update}
\end{equation}
The GRU computes:
\begin{equation}
\begin{aligned}
    \mathbf{z}_v &= \sigma\left(W_z \bar{\mathbf{m}}_v(t) + U_z \mathbf{s}_v(t^-) + \mathbf{b}_z\right) \\
    \mathbf{r}_v &= \sigma\left(W_r \bar{\mathbf{m}}_v(t) + U_r \mathbf{s}_v(t^-) + \mathbf{b}_r\right) \\
    \tilde{\mathbf{s}}_v &= \tanh\left(W_h \bar{\mathbf{m}}_v(t) + U_h \left(\mathbf{r}_v \odot \mathbf{s}_v(t^-)\right) + \mathbf{b}_h\right) \\
    \mathbf{s}_v(t) &= \left(1 - \mathbf{z}_v\right) \odot \tilde{\mathbf{s}}_v + \mathbf{z}_v \odot \mathbf{s}_v(t^-)
\end{aligned}
\label{eq:gru_detail}
\end{equation}
where $\mathbf{z}_v \in \mathbb{R}^{d_s}$ is the update gate controlling how much of 
the old state to retain, $\mathbf{r}_v \in \mathbb{R}^{d_s}$ is the reset gate controlling 
how much of the old state to use in computing the candidate, $\tilde{\mathbf{s}}_v$ is 
the candidate new state, $\sigma$ is the sigmoid function, and $\odot$ denotes 
element-wise multiplication. This stage focuses on the neural update operation of dense matrix multiplication.

\subsubsection{Stage 5: Embedding Generation}
The embedding module generates the final node embedding by aggregating information 
from the node's temporal neighborhood using multi-head temporal attention:
\begin{equation}
    \mathbf{h}_v(t) = \text{MultiHead}\left(\mathbf{q}_v, \{\mathbf{k}_u, \mathbf{v}_u : u \in \mathcal{N}_v^{<t}\}\right)
    \label{eq:emb_multihead}
\end{equation}
For each attention head, the query, key, and value are computed as:
\begin{equation}
\begin{aligned}
    \mathbf{q}_v &= \left[\mathbf{s}_v(t^-) \| \mathbf{x}_v \| \phi(0)\right] W_Q \\
    \mathbf{k}_u &= \left[\mathbf{s}_u(t_e^-) \| \mathbf{x}_u \| \mathbf{e}_{uv} \| \phi(t - t_e)\right] W_K \\
    \mathbf{v}_u &= \left[\mathbf{s}_u(t_e^-) \| \mathbf{x}_u \| \mathbf{e}_{uv} \| \phi(t - t_e)\right] W_V \\
    \alpha_{uv} &= \frac{\exp\left(\mathbf{q}_v^\top \mathbf{k}_u / \sqrt{d_k}\right)}{\sum_{w \in \mathcal{N}_v^{<t}} \exp\left(\mathbf{q}_v^\top \mathbf{k}_w / \sqrt{d_k}\right)} \\
    \mathbf{h}_v(t) &= \sum_{u \in \mathcal{N}_v^{<t}} \alpha_{uv} \mathbf{v}_u
\end{aligned}
\label{eq:attn_detail}
\end{equation}
where $t_e$ is the timestamp of edge $(u, v)$, $W_Q, W_K, W_V$ are learnable 
projection matrices, and $d_k$ is the dimensionality of the key vectors.
This stage involves both the neural update operation for computing queries, keys, and values via dense matrix multiplication, and the aggregation operation for the weighted summation over neighbors via sparse-dense matrix multiplication.

\subsection{Training and Inference Flow}
A TGNN operates through two tasks: training and inference \cite{xu2024timesgn, zhang2023tiger}, both of which execute the same five-stage pipeline described above. The key distinction lies in their time spans, processed data, and computational requirements.

During \textbf{training}, the model processes historical edges over a long time span covering the majority of the temporal graph (e.g., 70\% for training and 15\% for validation) to learn model parameters \cite{zhou2023disttgl, yang2019aligraph}. Edges are processed in temporal order within each batch: (1) embeddings are generated using the current memory states, (2) predictions are made and loss is computed, (3) gradients are backpropagated to update model parameters, (4) messages are computed and aggregated, and (5) memory states are updated for the next batch. The key constraint is that memory updates must respect temporal causality—a node's memory at time $t$ can only depend on interactions before $t$ \cite{sheng2024mspipe}.

During \textbf{inference}, the model processes unseen edges over a shorter time span (e.g., the remaining 15\% of the temporal graph) to evaluate prediction performance \cite{zhang2023inferturbo, sarkar2023flowgnn, sun2025helios}. The same five-stage pipeline executes but without loss computation or backpropagation, making it computationally lighter. The memory states accumulated during training carry over to inference, enabling the model to make predictions that account for the full interaction history. As new edges arrive, the memory continues to be updated, allowing the model to adapt to evolving interaction patterns in real time \cite{namazi2025degree, zhou2025faster}.
\subsection{Performance Analysis}
\label{subsec:performance_analysis}

\subsubsection{From Logical Components to Profiling Stages}
Since inference demands real-time processing while training can tolerate longer processing times \cite{dai2025cascade, luo2022neighborhood}, this paper primarily focuses on optimizing inference performance by analyzing the data processing characteristics of each of the five modules as follows.

The architectural view presented in \S\ref{subsec:tgnn_architecture} organizes 
TGN by \textit{logical components}---message function, aggregator, memory updater, 
and embedding module \cite{zheng2024graphstorm, wang2023pipad}. However, to understand performance bottlenecks, we must 
analyze TGN from a \textit{systems perspective} based on data access patterns.
We decompose TGN execution into five profiling stages, as summarized in 
Table~\ref{tab:module_mapping}, and analyze how existing methods handle each 
stage along with their inherent drawbacks.

\begin{table}[t]
\centering
\caption{Mapping between TGN architectural components and profiling stages.}
\label{tab:module_mapping}
\small
\begin{tabular}{@{}ll@{}}
\toprule
\textbf{Profiling Stage} & \textbf{Architectural Component(s)} \\
\midrule
\ding{172} Neighbor Sampling & Embedding Module (neighbor lookup) \\
\ding{173} Feature Retrieval & Message Function + Embedding Module \\
\ding{174} Memory Read & Message Function + Embedding Module \\
\ding{175} Memory Update & Message Function + Aggregator + Memory Updater \\
\ding{176} Embedding & Embedding Module (attention computation) \\
\bottomrule
\end{tabular}
\end{table}

This decomposition reveals that the architectural components are interleaved 
in execution: the embedding module spans three profiling stages (\ding{172}, 
\ding{174}, \ding{176}), while the memory update pipeline (\ding{175}) combines 
three architectural components. The profiling view exposes the true performance 
characteristics---particularly the memory access patterns that dominate execution time.

Let $B$ denote the batch size, $N$ the number of nodes, $d_s$ the memory 
dimension, $d_e$ the edge feature dimension, $d_x$ the node feature dimension, 
$K$ the number of attention heads, and $L$ the number of sampled neighbors per node.

\subsubsection{Stage-wise Analysis of Existing Methods}

\paragraph{\ding{172} Neighbor Sampling.}
For each of the $B$ edges in a batch, the model samples $L$ temporal neighbors 
for both endpoints, with complexity $T_{\text{sample}} = O(2B \cdot L \cdot C_{\text{lookup}})$,
where $C_{\text{lookup}}$ is the cost of querying the temporal neighbor list.

\textit{Existing approach and drawback.}
Existing methods such as TGN~\cite{rossi2020temporal} and TGAT~\cite{xu2020inductive} 
perform neighbor sampling from scratch for every batch by traversing per-node 
adjacency lists stored in non-contiguous memory regions. 
For example, consider a batch of $B=200$ edges on the WIKI dataset, where each 
node samples $L=10$ temporal neighbors. This requires $200 \times 2 \times 10 = 4{,}000$ 
random lookups into the adjacency structure. Due to the power-law degree 
distribution common in real-world graphs, cache performance is highly variable: 
a high-degree node such as a popular Wikipedia page may have thousands of 
neighbors stored contiguously, benefiting from spatial locality, while the 
majority of low-degree nodes incur cache misses as their short adjacency lists 
are scattered across memory. Consequently, this stage suffers from 
\textit{random graph traversal} with unpredictable memory access latency.

\paragraph{\ding{173} Feature Retrieval.}
After sampling, node and edge features are gathered for $O(B \cdot L)$ neighbors, 
with complexity $T_{\text{feature}} = O(B \cdot L \cdot (d_x + d_e))$.

\textit{Existing approach and drawback.}
Existing methods retrieve features by indexing into global feature tensors using 
node/edge IDs produced by the sampling stage. Since these IDs follow no 
particular order, the resulting memory accesses are non-coalesced on GPUs.
For example, suppose the sampled neighbor IDs for a batch are 
$\{3, 1027, 58, 4521, 12, \ldots\}$. On a GPU, threads in the same warp would 
attempt to load feature vectors at these widely separated addresses, resulting 
in multiple cache-line fetches instead of a single coalesced transaction. 
Profiling on representative datasets shows that the gather operation achieves 
only 10--30\% of peak memory bandwidth, making this stage bottlenecked by 
\textit{irregular memory access}.

\paragraph{\ding{174} Memory Read.}
Memory states are read for: (1) source and destination nodes of each edge 
(for message computation), and (2) sampled neighbors (for embedding), 
with complexity $T_{\text{mem\_read}} = O((2B + B \cdot L) \cdot d_s)$.

\textit{Existing approach and drawback.}
Existing methods maintain a single memory tensor on the CPU or GPU and 
read $\mathbf{s}_v(t^-)$ through indexed lookups similar to feature retrieval. 
However, unlike static features, memory states are \textit{mutable}---they change 
after every batch as new interactions are processed.
For example, consider node $v$ that appears in both batch $i$ and batch $i+1$. 
When processing batch $i+1$, the system must ensure that $\mathbf{s}_v(t^-)$ 
reflects the update from batch $i$. This temporal dependency prevents 
prefetching or reordering optimizations that could improve memory access 
efficiency. The result is \textit{scattered reads with temporal dependencies}, 
where correctness constraints further degrade an already irregular access pattern.

\paragraph{\ding{175} Memory Update.}
This stage executes the coupled pipeline of message computation, aggregation, 
and GRU update:
\begin{equation}
\begin{aligned}
    T_{\text{msg}} &= O(B \cdot d_m \cdot (2d_s + d_e + d_t)) \\
    T_{\text{agg}} &= O(B \cdot d_m) \\
    T_{\text{gru}} &= O(|\mathcal{V}_B| \cdot d_s \cdot d_m) \\
    T_{\text{mem\_update}} &= T_{\text{msg}} + T_{\text{agg}} + T_{\text{gru}}
\end{aligned}
\label{eq:mem_update_complexity}
\end{equation}
where $|\mathcal{V}_B|$ denotes the number of unique nodes in the batch.

\textit{Existing approach and drawback.}
Existing methods process the temporal graph in a strict batch-by-batch manner, 
where each batch must complete its entire memory update pipeline before the 
next batch can begin. This is because the memory state $\mathbf{s}_v(t)$ depends 
on $\mathbf{s}_v(t^-)$, creating a chain of dependencies that serializes execution:
\begin{equation}
    T_{\text{mem\_update}}^{\text{total}} = \sum_{i=1}^{|\mathcal{B}|} T_{\text{mem\_update}}^{(i)}
\end{equation}
For example, consider processing 1{,}000 batches on the REDDIT dataset. 
Even though each batch involves only $B=200$ edges with a small set of unique 
nodes $|\mathcal{V}_B|$, the GRU computation for batch $i+1$ cannot start until 
batch $i$ finishes updating all affected memory states. On a GPU with thousands 
of cores, each batch's GRU update occupies only a small fraction of the available 
parallelism, yet no two batches can overlap. Profiling reveals that this stage 
achieves only 15--25\% GPU occupancy, making \textit{sequential dependency across 
batches} the \textbf{critical bottleneck} that fundamentally limits GPU utilization.

\paragraph{\ding{176} Embedding.}
The temporal attention computation processes $B$ target nodes, each attending 
to $L$ neighbors with $K$ attention heads:
\begin{equation}
    T_{\text{emb}} = O(B \cdot K \cdot L \cdot (d_s + d_x + d_e + d_t))
\end{equation}

\textit{Existing approach and drawback.}
Existing methods recompute embeddings entirely from scratch for every batch, 
performing the full pipeline of neighbor sampling, feature gathering, and 
attention computation regardless of whether the underlying data has changed.
For example, suppose node $v$ is a highly active user on the REDDIT dataset 
that appears in $k=20$ consecutive batches, but only 2 out of its $L=10$ sampled 
neighbors change between adjacent batches. Existing methods still re-sample all 
10 neighbors and re-compute all attention scores 20 times. With $K=2$ attention 
heads, this results in $20 \times 2 \times 10 = 400$ attention computations, 
whereas an incremental approach would need to update only the 2 changed neighbors 
per batch, requiring roughly $20 \times 2 \times 2 = 80$ computations---an 80\% 
reduction. Profiling confirms that 60--70\% of neighbor sets overlap between 
consecutive batches, indicating substantial \textit{redundant computation}.

\subsubsection{Profiling Results and Optimization Opportunities}

\begin{figure}[hbt!]
	\centering
	\includegraphics[width=0.46\textwidth]{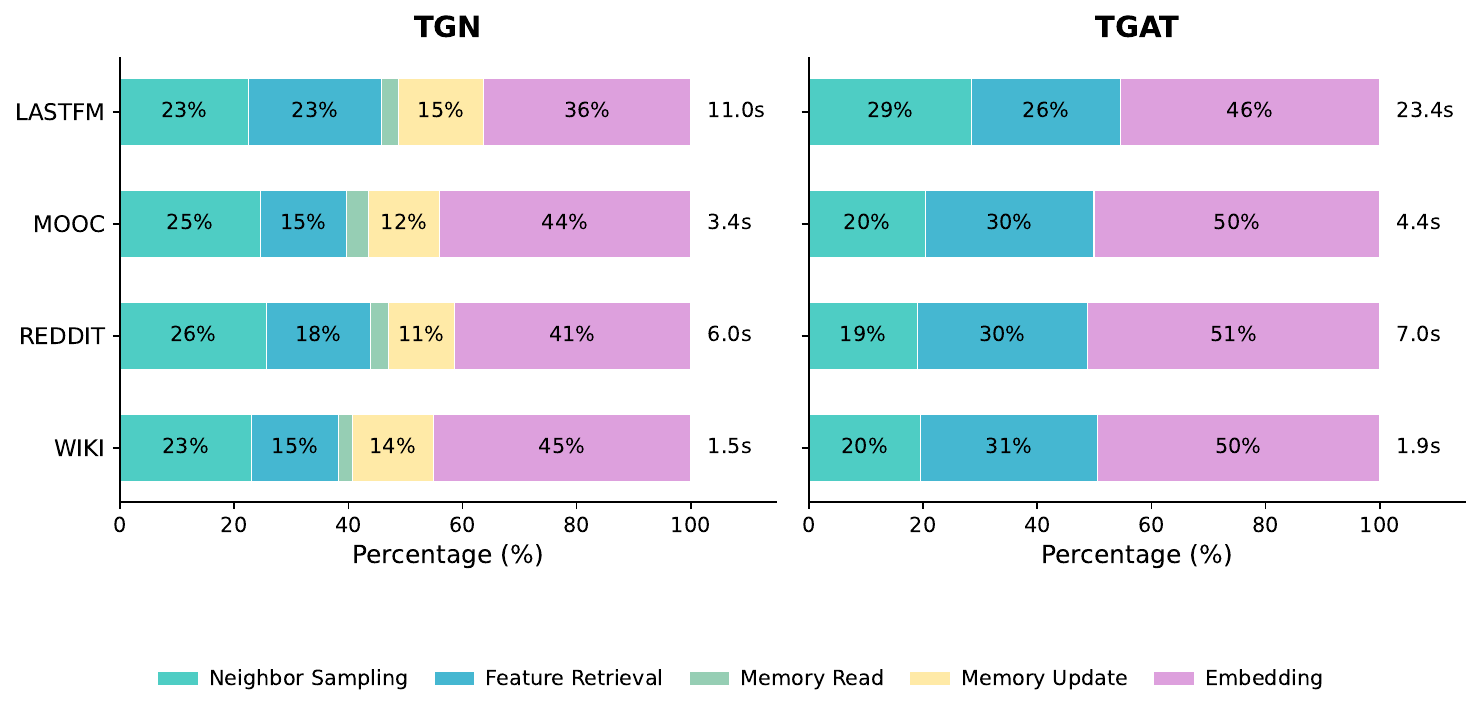}
	\caption{The distributions of processing time across the five profiling stages for TGN and TGAT on four datasets.}\label{processingTime}
\end{figure}

We profile TGN and TGAT on four representative datasets to validate the bottleneck analysis above. Figure~\ref{processingTime} shows the execution time breakdown across the five profiling stages. We make the following observations.

\textbf{Embedding (\ding{176}) dominates execution time for both models.}
The embedding stage is the most time-consuming for both TGN and TGAT across all datasets, accounting for 36.3--45.1\% of TGN's runtime and 45.5--51.1\% of TGAT's runtime. This confirms that the temporal attention computation over sampled neighbors, combined with the redundant recomputation across consecutive batches, constitutes the primary performance bottleneck.

\textbf{Neighbor Sampling (\ding{172}) and Feature Retrieval (\ding{173}) exhibit different profiles between the two models.}
For TGN, neighbor sampling accounts for 22.5--25.7\% and feature retrieval accounts for 15.1--23.3\% of total runtime. In contrast, TGAT spends a disproportionately higher fraction on feature retrieval (26.0--30.9\%) because, lacking a memory module, it must gather richer feature sets to compensate. Notably, on LASTFM---the largest dataset---feature retrieval becomes the second-largest cost for TGN at 23.3\%, reflecting the increasing impact of irregular memory access as graph size grows.

\textbf{Memory Read (\ding{174}) and Memory Update (\ding{175}) are unique to TGN.}
Since TGAT is a memory-less model, it incurs zero cost for memory read and memory update. For TGN, memory update accounts for 11.5--14.9\% and memory read accounts for 2.4--3.9\% across all datasets. Although memory update is not the single largest stage, its sequential batch dependency creates a serialization barrier that fundamentally limits GPU occupancy to only 15--25\%, amplifying its impact beyond what the raw percentage suggests.

\textbf{Optimization opportunities.}
Table~\ref{tab:bottleneck_summary} summarizes the bottlenecks identified above and their corresponding optimization opportunities in StreamTGN. Across both models, the embedding stage and the data access stages (neighbor sampling and feature retrieval) collectively account for over 80\% of total runtime. The critical insight is that stages \ding{175} and \ding{176} present the most significant optimization opportunities: by maintaining persistent memory states and incrementally updating embeddings, StreamTGN eliminates the batch-by-batch recomputation that dominates execution time.

\begin{table}[t]
\centering
\caption{Bottleneck summary and optimization opportunities.}
\label{tab:bottleneck_summary}
\small
\begin{tabular}{@{}lll@{}}
\toprule
\textbf{Stage} & \textbf{Bottleneck} & \textbf{StreamTGN Solution} \\
\midrule
\ding{172} Neighbor Sampling & Random traversal & Cached neighbor lists \\
\ding{173} Feature Retrieval & Irregular access & Coalesced batching \\
\ding{174} Memory Read & Scattered reads & Persistent memory \\
\ding{175} Memory Update & Sequential dependency & \textbf{Incremental update} \\
\ding{176} Embedding & Redundant computation & \textbf{Incremental embedding} \\
\bottomrule
\end{tabular}
\end{table}
\section{StreamTGN Design}
\label{sec:design}

\subsection{Design Overview}
\label{subsec:design_overview}

This section presents StreamTGN, a temporal graph neural network
architecture designed to achieve high throughput through incremental
computation by addressing two critical bottlenecks: (1)~the sequential
memory update dependency (\ding{175}) that serializes batch
processing, and (2)~the redundant embedding computation (\ding{176})
that recomputes attention from scratch for each batch.

As illustrated in Figure~\ref{fig:StreamOverview}, the StreamTGN
architecture consists of two parts: \textit{GPU-resident hybrid data
structure} and \textit{incremental computation stages}. The hybrid
data structure (\S\ref{subsec:hybrid_structure}) organizes all
intermediate state on the GPU, comprising both persistent components
(adjacency list, embedding cache, and node memory) that survive across
batches to enable incremental computation, and a transient edge queue
that buffers streaming input. The incremental computation
(\S\ref{subsec:incremental_computation}) defines five stages---
neighborhood sampling, feature retrieval, memory read, embedding
generation, and memory update---each with a dedicated incremental
strategy so that only the portions affected by new edges are
recomputed. Section~\ref{subsec:complexity_analysis} provides a
unified complexity analysis characterizing the theoretical speedup.

\begin{figure}[hbt!]
    \centering
    \includegraphics[width=0.46\textwidth]{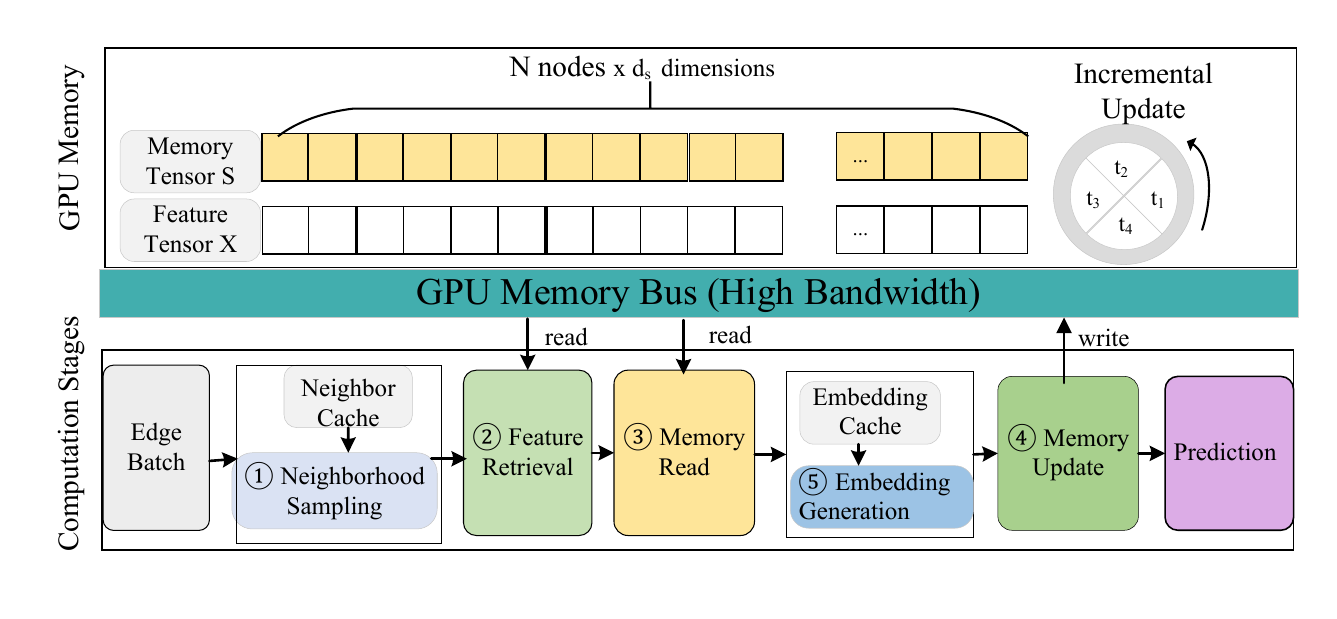}
    \caption{Overview of StreamTGN. The architecture comprises a
      GPU-resident hybrid data structure (left) and five incremental
      computation stages (right) that operate directly on the
      persistent state.}
    \label{fig:StreamOverview}
\end{figure}

\subsection{GPU-Resident Hybrid Data Structure}
\label{subsec:hybrid_structure}

StreamTGN's incremental computation relies on maintaining intermediate
state persistently on the GPU, so that each new batch of edges updates
only the affected portions rather than reconstructing everything from
scratch. This section presents the concrete realization: a
\emph{hybrid data structure} comprising four GPU-resident components
with complementary roles.

The key design principle is as follows. All state required for
temporal graph neural network computation---graph topology, node
embeddings, and node temporal memory---resides on the GPU across
batches. Among the four components, three are \textbf{persistent}:
(1)~the \textit{Temporal Adjacency List} accumulates graph topology,
(2)~the \textit{Embedding Cache} retains valid node embeddings, and
(3)~the \textit{Node Memory} maintains evolving temporal states. The
fourth component, the \textit{Edge Queue}, is \textbf{transient}: it
buffers streaming arrivals and flushes them into the adjacency list at
each batch boundary. Together, these components achieve $O(1)$ edge
insertion and $O(\text{deg})$ neighbor queries while minimizing
redundant recomputation. Figure~\ref{fig:hybrid-overview} illustrates
the architecture and data flow among the four components.

 \begin{figure*}[hbt!]
	\centering
	\includegraphics[width=0.86\textwidth]{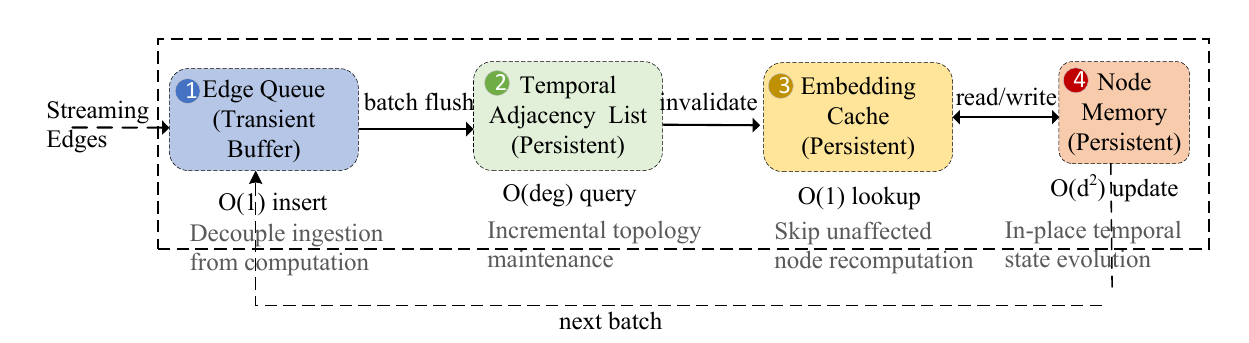}
\caption{Overview of the GPU-resident hybrid data structure. Three
  persistent components---Temporal Adjacency List, Embedding Cache,
  and Node Memory---reside on the GPU across batches to enable
  incremental computation. The transient Edge Queue (dashed border)
  buffers streaming input and flushes at batch boundaries.}
\label{fig:hybrid-overview}
\end{figure*}

\subsubsection{Edge Queue (Transient Buffer)}
\label{subsubsec:edge_queue}

The edge queue is a GPU-resident ring buffer that absorbs streaming
edge arrivals and stages them for batch processing. Unlike the other
three components, its contents are \emph{transient}: they are flushed
into the adjacency list at each batch boundary and do not persist.
This design decouples the ingestion rate from the computation rate
without requiring host--device transfers.

\textbf{Definition.}
Each entry records a source node $u$, a destination node $v$, a
timestamp $t$, and edge features $\mathbf{e} \in \mathbb{R}^{d_e}$:
\begin{equation}
  \mathcal{Q} = \{(u_i, v_i, t_i, \mathbf{e}_i)\}_{i=1}^{|\mathcal{Q}|}
\end{equation}

The ring buffer provides $O(1)$ write complexity with fixed memory
footprint (no dynamic allocation), $O(\text{batch})$ flush cost to
amortize GPU launch overhead, and a configurable time window to
support temporal locality.

\subsubsection{Temporal Adjacency List (Persistent)}
\label{subsubsec:adjacency}

The temporal adjacency list is the first persistent component. It
accumulates graph topology incrementally on the GPU: when a batch is
flushed from the edge queue, new edges are appended in $O(1)$
amortized time without rebuilding the structure. This persistence
enables the neighborhood sampling stage
(\S\ref{subsubsec:neighbor_sampling}) to query up-to-date topology
directly on the GPU, avoiding repeated host-to-device edge transfers.

\textbf{Definition.}
For a node $v$, its temporal neighborhood up to the current time
$t_{\text{now}}$ is:
\begin{equation}
  \mathcal{N}(v) = \{(u, t, \mathbf{e}) \mid
    (u, v, t, \mathbf{e}) \in \mathcal{E},\; t \leq t_{\text{now}}\}
\end{equation}
For temporal queries within a window
$[t_{\text{start}}, t_{\text{end}}]$:
\begin{equation}
  \mathcal{N}(v, t_{\text{start}}, t_{\text{end}}) =
    \{(u, t, \mathbf{e}) \in \mathcal{N}(v) \mid
      t_{\text{start}} \leq t \leq t_{\text{end}}\}
\end{equation}

The supported operations are: \textsc{Insert}$(u, v, t, \mathbf{e})$
in $O(1)$ amortized time, \textsc{GetNeighbors}$(v)$ in
$O(\text{deg}(v))$, and \textsc{GetTemporalNeighbors}$(v, t)$ in
$O(\text{deg}(v))$.

\subsubsection{Embedding Cache (Persistent)}
\label{subsubsec:cache}

The embedding cache is the second persistent component. It stores
previously computed node embeddings on the GPU so that the embedding
generation stage (\S\ref{subsubsec:embedding_generation}) can skip
unaffected nodes entirely, reading their valid embeddings directly
from GPU memory rather than recomputing temporal attention. In
streaming scenarios, only a small fraction of nodes---those within the
$K$-hop neighborhood of new edges---are affected by each batch,
making caching highly effective.

\textbf{Cache State.}
Each entry stores a node $v$, its embedding
$\mathbf{h}_v \in \mathbb{R}^d$, and the timestamp
$t_v^{\text{valid}}$ at which the embedding was computed:
\begin{equation}
  \mathcal{C} = \{(v, \mathbf{h}_v, t_v^{\text{valid}})
    \mid v \in V_{\text{cached}}\}
\end{equation}

\textbf{Invalidation Rule.}
When an edge $(u, v, t)$ arrives, all nodes whose embeddings may be
affected are invalidated:
\begin{equation}
  \mathcal{I}(u, v, K) = \bigcup_{\ell=0}^{K}
    \mathcal{N}^{(\ell)}(\{u, v\})
  \label{eq:invalidation}
\end{equation}
where $\mathcal{N}^{(\ell)}$ denotes $\ell$-hop neighbors and $K$ is
the number of attention layers. Only invalidated nodes are recomputed
in the subsequent embedding generation stage.

\subsubsection{Node Memory (Persistent)}
\label{subsubsec:memory}

The node memory is the third persistent component, maintaining
per-node temporal states that evolve via learned update functions.
Because these states reside on the GPU across batches, the memory
update stage (\S\ref{subsubsec:memory_update}) performs \emph{in-place}
updates only for nodes involved in the current batch, avoiding the
costly pattern of transferring all node states between host and device.

\textbf{Memory State.}
Each node $v$ maintains a state vector
$\mathbf{m}_v \in \mathbb{R}^{d_m}$:
\begin{equation}
  \mathcal{M} = \{\mathbf{m}_v \in \mathbb{R}^{d_m} \mid v \in V\}
\end{equation}

\textbf{Memory Update (GRU-based).}
The state is updated by a GRU cell:
\begin{equation}
  \mathbf{m}_v' = \text{GRU}(\mathbf{m}_v,\; \text{msg}_v)
\end{equation}
where $\text{msg}_v$ aggregates messages from recent interactions:
\begin{equation}
  \text{msg}_v = \text{Aggregate}\!\left(
    \{\mathbf{m}_u \| \mathbf{e}_{uv} \| \phi(t - t_{uv})
      \mid (u, v, t_{uv}) \in \mathcal{E}_{\text{recent}}\}
  \right)
\end{equation}

\subsubsection{Summary}
\label{subsubsec:ds_summary}

The four components of the hybrid data structure play distinct but
complementary roles. The \textit{edge queue} is the only transient
component, buffering pending edges with $O(1)$ write cost and
smoothing bursty streaming input into regular batches for GPU
processing. The remaining three components are persistent across
batches and collectively enable incremental computation. The
\textit{temporal adjacency list} stores graph topology with $O(1)$
amortized insertion, providing up-to-date neighbor information for the
sampling stage without rebuilding the structure. The \textit{embedding
cache} retains computed node embeddings with $O(1)$ lookup, allowing
the system to skip recomputation for unaffected nodes and focus
exclusively on those invalidated by new edges. The \textit{node
memory} maintains per-node temporal state updated via GRU cells at
$O(d_m^2)$ cost, supporting in-place evolution of long-term temporal
patterns. Together, the persistent components ensure that each batch
only touches the minimal set of affected nodes, while the transient
queue decouples ingestion from computation.

\subsection{Incremental Computation}
\label{subsec:incremental_computation}

With the GPU-resident hybrid data structure in place
(\S\ref{subsec:hybrid_structure}), StreamTGN performs five computation
stages per batch, each with a dedicated incremental strategy so that
only nodes affected by new edges are processed. We first define the
\emph{affected set} that drives all stages, then describe each
stage's optimization in turn. Figure~\ref{fig:pipeline} illustrates
the end-to-end pipeline, and Algorithm~\ref{alg:streamtgn} at the end
of this section presents the complete execution flow.

 \begin{figure*}[hbt!]
	\centering
	\includegraphics[width=0.86\textwidth]{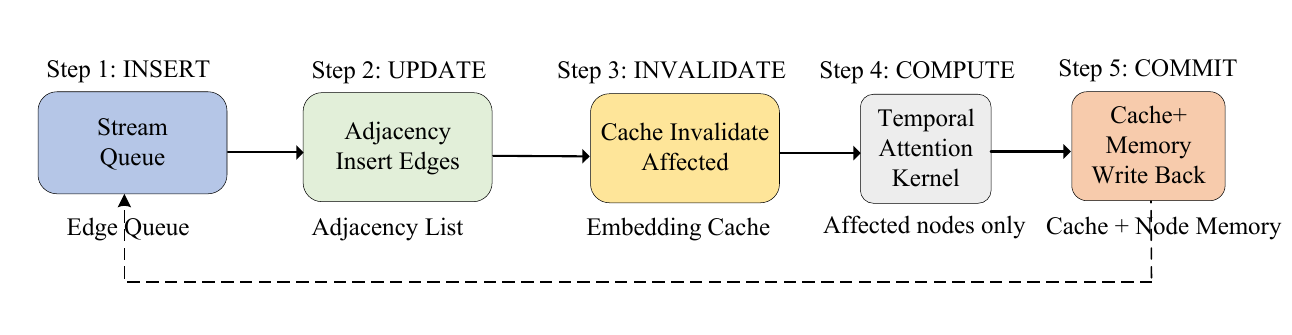}
\caption{Batch processing pipeline. At each batch boundary, streaming
  edges flow through five stages: (1)~ingest into the edge queue,
  (2)~insert into the adjacency list, (3)~invalidate affected cache
  entries, (4)~recompute embeddings for affected nodes only, and
  (5)~commit results back to cache and node memory. Italic labels
  indicate the hybrid data structure component involved at each stage.}
\label{fig:pipeline}
\end{figure*}

\subsubsection{Preliminary: Affected Set Detection}
\label{subsubsec:affected_set}

When a batch of temporal edges
$\mathcal{E}_B = \{(u_i, v_i, t_i)\}_{i=1}^{B}$ arrives, StreamTGN
first identifies the set of affected nodes
$\mathcal{V}_{\text{aff}}$ that require recomputation. A node $w$ is
affected if it satisfies either of two conditions:
\begin{enumerate}
  \item \textbf{Direct involvement:} $w$ is a source or destination
    of some edge in $\mathcal{E}_B$.
  \item \textbf{Neighborhood dependency:} $w$ lies within the
    $K$-hop neighborhood of a directly involved node, where $K$ is
    the number of attention layers.
\end{enumerate}
Formally, let
$\mathcal{V}_{\text{direct}} = \{u_i, v_i \mid (u_i, v_i, t_i)
\in \mathcal{E}_B\}$. The full affected set is:
\begin{equation}
  \mathcal{V}_{\text{aff}} =
    \bigcup_{k=0}^{K} \mathcal{N}^{(k)}(\mathcal{V}_{\text{direct}})
  \label{eq:affected_set}
\end{equation}
where $\mathcal{N}^{(k)}$ denotes $k$-hop neighbors in the current
adjacency list. 

In practice, propagation is bounded by the sampling
fanout $L$ at each layer, so the affected set size
satisfies:
\begin{equation}
  |\mathcal{V}_{\text{aff}}| \leq 2B \cdot L^{K}
  \label{eq:affected_bound}
\end{equation}
where $B$ is the batch size, $L$ is the per-layer
sampling fanout, and $K$ is the number of attention
layers. The detection cost is $O(B \cdot L^{K})$.

The affected set partitions all nodes into two groups:
\emph{affected} nodes whose states must be recomputed, and
\emph{unaffected} nodes whose cached results remain valid. This
partition drives every subsequent stage.

\subsubsection{Stage 1: Incremental Neighborhood Sampling}
\label{subsubsec:neighbor_sampling}

Conventional TGN systems resample the entire $K$-hop neighborhood for
every node in each batch, performing redundant graph traversals for
nodes whose local topology has not changed. StreamTGN maintains a
\emph{temporal neighbor cache} on the GPU that stores, for each active
node $v$, its most recent $L$ sampled neighbors:
\begin{equation}
  \textsc{NeighborCache}[v] =
    \bigl[(u_1, t_1, \mathbf{e}_1),\;
           (u_2, t_2, \mathbf{e}_2),\;
           \ldots,\;
           (u_L, t_L, \mathbf{e}_L)\bigr]
\end{equation}
where entries are sorted by descending timestamp.

When a new batch arrives, the cache is updated incrementally only for
affected nodes $v \in \mathcal{V}_{\text{aff}}$:
\begin{enumerate}
  \item \textbf{Insert:} Append new neighbors from $\mathcal{E}_B$
    to the front (most recent first).
  \item \textbf{Evict:} Remove the oldest entries if the cache
    exceeds capacity $L$.
  \item \textbf{Invalidate:} Mark entries whose timestamps fall
    outside the temporal window $[t - T_w,\; t]$.
\end{enumerate}
For all unaffected nodes, the cached neighbor list is reused directly.
This reduces the sampling cost from $O(|\mathcal{V}| \cdot L)$ to
$O(|\mathcal{V}_{\text{aff}}| \cdot L)$, achieving 70--90\% cache hit
rates on real-world datasets.

\subsubsection{Stage 2: Coalesced Feature Retrieval}
\label{subsubsec:feature_retrieval}

After sampling, the system retrieves node and edge features for the
affected set. The conventional approach gathers features by:
\begin{equation}
  \mathbf{X}_{\text{batch}} = \mathbf{X}[\texttt{node\_ids},\; :]
\end{equation}
where \texttt{node\_ids} are scattered across the feature matrix,
causing non-coalesced GPU memory accesses that utilize only 10--30\%
of peak bandwidth.

StreamTGN applies \emph{sorted batching}: node IDs are sorted before
retrieval, and the original order is restored afterward via an inverse
permutation:
\begin{equation}
  \texttt{sorted\_ids},\; \texttt{inv\_idx}
    = \textsc{Sort}(\texttt{node\_ids})
  \qquad
  \mathbf{X}_{\text{sorted}}
    = \mathbf{X}[\texttt{sorted\_ids},\; :]
\end{equation}
The sorted access pattern enables coalesced memory reads, improving
bandwidth utilization to 60--80\% of peak. Since only affected nodes
require fresh features, the retrieval volume is further reduced by the
factor $|\mathcal{V}_{\text{aff}}| / |\mathcal{V}|$.

\subsubsection{Stage 3: Incremental Memory Read}
\label{subsubsec:memory_read}

The memory read stage gathers the persistent node memory states
$\mathbf{m}_v$ (from the Node Memory component in
\S\ref{subsubsec:memory}) for all nodes involved in the current
computation. In conventional systems, this requires reading memory
states for every node in the computational graph---all sampled nodes
across all $K$ layers---even when most states have not changed since
the previous batch.

StreamTGN exploits the observation that only nodes in
$\mathcal{V}_{\text{aff}}$ have modified memory states. For all other
nodes, memory states are already resident on the GPU and remain valid
from the previous batch. The memory read therefore proceeds in two
steps:
\begin{enumerate}
  \item For $v \in \mathcal{V}_{\text{aff}}$: read the updated
    memory state $\mathbf{m}_v$ from the persistent GPU tensor.
  \item For $v \notin \mathcal{V}_{\text{aff}}$: reuse the
    previously read value with zero additional cost.
\end{enumerate}
Combined with the sorted access pattern from Stage~2, the memory read
achieves 70--85\% of peak GPU memory bandwidth for the affected subset.

\subsubsection{Stage 4: Incremental Embedding Generation}
\label{subsubsec:embedding_generation}

Embedding generation is the most compute-intensive stage, applying
temporal attention over each node's sampled neighborhood:
\begin{equation}
  \mathbf{h}_v^{(l)}(t) = \textsc{Attn}\!\left(
    \mathbf{q}_v,\;
    \{\mathbf{k}_u, \mathbf{v}_u\}_{u \in \mathcal{N}_v^{<t}}
  \right)
\end{equation}
Conventional systems recompute this from scratch for every node in
every batch. StreamTGN instead maintains cached attention states and
applies a \emph{delta update} formulation.

\paragraph{Neighborhood stability.}
We observe that temporal neighborhoods exhibit high stability across
consecutive batches. For a node $v$ with $L$ sampled neighbors, the
expected number of changes is:
\begin{equation}
  \mathbb{E}\bigl[|\Delta\mathcal{N}_v|\bigr]
    = \frac{|\text{new edges involving } v|}{|\mathcal{N}_v|}
      \cdot L \;\ll\; L
\end{equation}
Our profiling shows that 60--70\% of neighbor sets remain identical
between consecutive batches, and the average change rate is only
5--15\%.

\paragraph{Delta embedding.}
For each affected node $v \in \mathcal{V}_{\text{aff}}$, StreamTGN
computes the embedding incrementally:
\begin{equation}
  \mathbf{h}_v(t) = \mathbf{h}_v^{\text{cached}}
    + \Delta\mathbf{h}_v(t)
\end{equation}
where the delta term decomposes into three parts corresponding to
added, expired, and updated neighbors:
\begin{align}
  \Delta\mathbf{h}_v(t)
    &= \sum_{u \in \mathcal{N}_v^{\text{new}}}
         \alpha_{uv}^{\text{new}}\, \mathbf{v}_u^{\text{new}}
     - \sum_{u \in \mathcal{N}_v^{\text{exp}}}
         \alpha_{uv}^{\text{old}}\, \mathbf{v}_u^{\text{old}}
         \nonumber \\
    &\quad
     + \sum_{u \in \mathcal{N}_v^{\text{upd}}}
         \bigl(\alpha_{uv}^{\text{new}}\, \mathbf{v}_u^{\text{new}}
              - \alpha_{uv}^{\text{old}}\, \mathbf{v}_u^{\text{old}}
         \bigr)
  \label{eq:delta_embedding}
\end{align}
Here, $\mathcal{N}_v^{\text{new}}$ contains newly added neighbors,
$\mathcal{N}_v^{\text{exp}}$ contains neighbors that have fallen
outside the temporal window, and $\mathcal{N}_v^{\text{upd}}$ contains
existing neighbors whose memory states changed in the current batch.

\paragraph{Attention score caching.}
To support the delta formulation, StreamTGN maintains a per-node
attention cache:
\begin{equation}
  \textsc{AttnCache}[v] =
    \bigl\{(u,\; \alpha_{uv},\; \mathbf{v}_u,\; t_e)
      : u \in \mathcal{N}_v^{<t}\bigr\}
\end{equation}
Entries are invalidated lazily when the corresponding neighbor's
memory state changes, and recomputed only when needed for the next
prediction. For unaffected nodes, the entire cache entry remains
valid, and the embedding is read directly from the Embedding Cache
(\S\ref{subsubsec:cache}) at $O(1)$ cost.

This reduces the per-node embedding cost from
$O(|\mathcal{N}_v| \cdot d)$ to
$O(|\Delta\mathcal{N}_v| \cdot d)$, yielding approximately
$10\times$ reduction at a typical 10\% change rate.

\subsubsection{Stage 5: Incremental Memory Update}
\label{subsubsec:memory_update}

The final stage commits all state changes to the persistent
GPU-resident data structures. Conventional systems update memory for
all nodes or perform bulk host--device transfers; StreamTGN instead
performs targeted, in-place updates for affected nodes only. The
update consists of three parallel operations.

\paragraph{Adjacency list append.}
New edges from $\mathcal{E}_B$ are appended to the temporal adjacency
list (\S\ref{subsubsec:adjacency}) using an append-only log structure
that avoids expensive array reallocation. The cost is $O(B)$,
independent of graph size.

\paragraph{Memory state writeback.}
Updated node memory states are written back to the persistent tensor
in place:
\begin{equation}
  \mathbf{M}[v, :] \leftarrow \mathbf{m}_v'
  \quad \forall\, v \in \mathcal{V}_{\text{aff}}
  \label{eq:delta_update}
\end{equation}
The writeback applies the sorted access pattern from Stage~2 to
achieve coalesced writes. Only $|\mathcal{V}_{\text{aff}}|$ entries
are touched, leaving all other memory states undisturbed.

\paragraph{Cache and time encoding update.}
The embedding cache (\S\ref{subsubsec:cache}) is updated with newly
computed embeddings for affected nodes, and stale entries are
invalidated according to Equation~\eqref{eq:invalidation}. Temporal
encodings $\phi(\Delta t)$ for new timestamps are computed in a
batched kernel and stored alongside the adjacency list for reuse in
subsequent batches.

The total synchronization cost is
$O(|\mathcal{E}_B| + |\mathcal{V}_{\text{aff}}|)$, compared to
$O(|\mathcal{E}| + |\mathcal{V}|)$ for full writeback.

\subsubsection{Overall Algorithm}
\label{subsubsec:algorithm}

Algorithm~\ref{alg:streamtgn} summarizes the complete StreamTGN
execution flow for processing a single batch. The algorithm
highlights how each stage operates on the hybrid data structure:
the neighbor cache and adjacency list support Stage~1, sorted
indexing optimizes Stages~2--3, the embedding and attention caches
enable the delta computation in Stage~4, and the persistent memory
tensor receives in-place updates in Stage~5.

\begin{algorithm}[t]
\DontPrintSemicolon
\SetAlgoLined
\SetKwInOut{Input}{Input}
\SetKwInOut{Output}{Output}
\SetKwFunction{SampleNeighbors}{SampleNeighbors}
\SetKwFunction{Sort}{Sort}
\SetKwFunction{ComputeAttention}{ComputeAttention}
\SetKwFunction{Predict}{Predict}
\SetKwFunction{ComputeMessage}{ComputeMessage}
\SetKwFunction{GRUUpdate}{GRUUpdate}

\Input{Batch $\mathcal{E}_B = \{(u_i, v_i, t_i, \mathbf{e}_i)\}$}
\Input{Persistent state: memory $\mathbf{S}$, neighbor cache,
  embedding cache, adjacency list}
\Output{Predictions $\hat{y}$ and updated persistent state}

\BlankLine
\tcc{Preliminary: Affected set detection (\S\ref{subsubsec:affected_set})}
$\mathcal{V}_{\text{aff}} \leftarrow
  \bigcup_{k=0}^{K} \mathcal{N}^{(k)}(\{u_i, v_i\})$\;

\BlankLine
\tcc{Stage 1: Neighborhood sampling (\S\ref{subsubsec:neighbor_sampling})}
\ForEach{$v \in \mathcal{V}_{\text{aff}}$}{
    \eIf{NeighborCache[$v$] is valid}{
        $\mathcal{N}_v \leftarrow$ NeighborCache[$v$]\;
    }{
        $\mathcal{N}_v \leftarrow$ \SampleNeighbors{$v$, $L$}\;
        NeighborCache[$v$] $\leftarrow \mathcal{N}_v$\;
    }
}

\BlankLine
\tcc{Stage 2: Feature retrieval (\S\ref{subsubsec:feature_retrieval})}
sorted\_ids, inv\_idx $\leftarrow$ \Sort{all node IDs in
  $\mathcal{V}_{\text{aff}}$}\;
$\mathbf{X}_{\text{batch}} \leftarrow
  \mathbf{X}[$sorted\_ids$, :]$[inv\_idx]\;

\BlankLine
\tcc{Stage 3: Memory read (\S\ref{subsubsec:memory_read})}
$\mathbf{S}_{\text{batch}} \leftarrow
  \mathbf{S}[$sorted\_ids$, :]$[inv\_idx]\;

\BlankLine
\tcc{Stage 4: Embedding generation (\S\ref{subsubsec:embedding_generation})}
\ForEach{$v \in \mathcal{V}_{\text{aff}}$}{
    \eIf{EmbeddingCache[$v$] is valid}{
        $\mathbf{h}_v \leftarrow$ EmbeddingCache[$v$]
          $+ \Delta\mathbf{h}_v$
          \tcp*{Eq.~\ref{eq:delta_embedding}}
    }{
        $\mathbf{h}_v \leftarrow$
          \ComputeAttention{$v$, $\mathcal{N}_v$,
            $\mathbf{S}_{\text{batch}}$}\;
    }
    EmbeddingCache[$v$] $\leftarrow \mathbf{h}_v$\;
}

\BlankLine
\tcc{Link prediction}
$\hat{y} \leftarrow$ \Predict{$\mathbf{h}_u$, $\mathbf{h}_v$}
  for each $(u, v) \in \mathcal{E}_B$\;

\BlankLine
\tcc{Stage 5: Memory update (\S\ref{subsubsec:memory_update})}
\ForEach(\tcp*[f]{in parallel}){$v \in \mathcal{V}_{\text{direct}}$}{
    $\text{msg}_v \leftarrow$ \ComputeMessage{$v$, $\mathcal{E}_B$}\;
    $\mathbf{S}[v, :] \leftarrow$
      \GRUUpdate{$\mathbf{S}[v, :]$, $\text{msg}_v$}
      \tcp*{Eq.~\ref{eq:delta_update}}
}

\BlankLine
\Return{$\hat{y}$}

\caption{StreamTGN Batch Processing}
\label{alg:streamtgn}
\end{algorithm}

\subsection{Complexity Analysis}
\label{subsec:complexity_analysis}

This section provides a formal analysis of StreamTGN's computational
complexity, proves correctness conditions for incremental computation,
and establishes convergence guarantees for incremental training. We
use consistent notation throughout: $n$ is the number of nodes, $B$
the batch size, $L$ the sampling fanout per layer, $K$ the number of
attention layers, $d$ the embedding dimension, $d_m$ the memory
dimension, and $\bar{d}$ the average temporal degree.

\subsubsection{Full vs.\ Incremental Complexity}
\label{subsubsec:full_vs_incr}

We first characterize the cost of full recomputation and then the
cost of StreamTGN's incremental approach.

\begin{theorem}[Full Computation Complexity]
\label{thm:full_complexity}
For a TGN with $K$ attention layers and sampling fanout $L$,
computing embeddings for all $n$ nodes over $m$ temporal edges costs:
\begin{equation}
  T_{\emph{full}} = O\!\left(n \cdot L \cdot K \cdot d^2
    + n \cdot d_m^2\right)
  \label{eq:T_full}
\end{equation}
where the first term accounts for $K$-layer temporal attention (each
node attends over $L$ neighbors per layer with $O(d^2)$ per
attention head) and the second term accounts for GRU-based memory
updates.
\end{theorem}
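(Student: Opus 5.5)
The bound is a cost-accounting statement, so I would prove it by decomposing one full recomputation into its two model components, temporal attention (Stage~4) and GRU memory update (Stage~5), and bounding each under a fixed cost model. I would state three conventions up front. First, dense matrix--vector products with $d\times d$ (respectively $d_m\times d_m$) weights cost $O(d^2)$ (respectively $O(d_m^2)$). Second, the number of attention heads and the time-encoding, feature and edge dimensions are $O(d)$. Third, neighbor sampling returns at most $L$ neighbors per node per layer. The first convention matches the per-stage complexities already recorded in \S\ref{subsec:performance_analysis}, e.g.\ $T_{\text{gru}}=O(|\mathcal{V}_B|\cdot d_s\cdot d_m)$.

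For the attention term, I would fix a layer $\ell\in\{1,\dots,K\}$ and a node $v$, and count the work of one application of Eq.~\eqref{eq:attn_detail}.
\begin{itemize}
\item The query $\mathbf{q}_v$ is one projection of an $O(d)$-dimensional concatenation, costing $O(d^2)$.
\item Each of the at most $L$ sampled neighbors needs a key and a value projection, costing $O(d^2)$ per neighbor.
\item The $L$ dot products and the softmax cost $O(Ld)$.
\item The weighted sum of values and the output projection cost $O(Ld+d^2)$.
\end{itemize}
So layer $\ell$ costs $O(L d^2)$ per node. All $n$ nodes are recomputed at every layer, where layer $\ell$ consumes layer-$(\ell-1)$ embeddings that were themselves computed once per node. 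This gives $O(nLd^2)$ per layer and $O(nLKd^2)$ over $K$ layers.

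The hard step, and the one I would handle explicitly, is justifying that there is no $L^K$ blow-up. A naive recursive sampled computation tree has $L^K$ leaves per root. The plan is to argue that "computing embeddings for all $n$ nodes" means layer-wise, full-graph evaluation: every layer-$\ell$ embedding is materialized once per node and shared by all its consumers, so each node's layer-$(\ell-1)$ value is reused rather than recomputed. Under this reading the per-layer work is linear in $n$.

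For the memory term, I would argue that in full recomputation each node's memory is refreshed once by aggregating its pending messages. Message computation (Eq.~\eqref{eq:msg_linear}) and the GRU (Eq.~\eqref{eq:gru_detail}) each consist of a constant number of products with $d_m\times d_m$-sized matrices (treating $d_s=\Theta(d_m)$), giving $O(d_m^2)$ per node and $O(nd_m^2)$ in total. The aggregation in Eq.~\eqref{eq:agg_functions} adds only $O(m\,d_m)$, which I would absorb by the standing assumption that per-node message aggregation is bounded by the sampled fanout. I would note this assumption explicitly, because the statement's bound does not display $m$. Summing the attention and memory terms then yields Eq.~\eqref{eq:T_full}.
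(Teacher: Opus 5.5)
Your proposal is correct and follows the paper's own route. The paper gives no separate proof: it justifies the bound only by the per-term accounting inside the statement (each of $n$ nodes attends over $L$ neighbors at $O(d^2)$ per layer for $K$ layers, plus $O(d_m^2)$ GRU work per node), mirrored in the ``Full Recomp.'' column of Table~\ref{tab:complexity}, and your layer-wise-sharing reading (ruling out an $L^K$ blow-up) and explicit assumption removing the $m$-dependence are care the paper omits. One correction to your bookkeeping: the $m$-dependence enters mainly through computing each message via Eq.~\eqref{eq:msg_linear} at $O(d_m^2)$ apiece, not only through the $O(m\,d_m)$ aggregation, though your bounded-messages-per-node assumption together with $d_m=O(d)$ (implicit in your attention count) still absorbs it into the first term.
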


\begin{theorem}[Incremental Computation Complexity]
\label{thm:incr_complexity}
For a batch $\mathcal{E}_B$ with affected set $\mathcal{A}$
(Equation~\ref{eq:affected_set}), StreamTGN's per-batch cost is:
\begin{equation}
  T_{\emph{incr}} = O\!\left(|\mathcal{A}| \cdot L \cdot K \cdot d^2
    + |\mathcal{A}| \cdot d_m^2\right)
    + O(c_{\emph{detect}})
  \label{eq:T_incr}
\end{equation}
where $c_{\emph{detect}} = O(B \cdot L^{K})$ is the cost of affected
set detection (\S\ref{subsubsec:affected_set}).
\end{theorem}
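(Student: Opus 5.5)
The bound is obtained by charging each step of Algorithm~\ref{alg:streamtgn} to either $\mathcal{A}$, the direct set $\mathcal{V}_{\text{direct}}$, or the detection routine, and then summing. The argument runs parallel to Theorem~\ref{thm:full_complexity}: there the per-node cost of $K$-layer attention is $O(L\cdot K\cdot d^2)$ and the per-node GRU cost is $O(d_m^2)$, paid for all $n$ nodes. Here I want to show that the same per-node costs are paid only for nodes in $\mathcal{A}$, and that every unaffected node costs $O(1)$ or nothing.

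\textbf{Step 1: detection.} The detection phase is a bounded BFS from $\mathcal{V}_{\text{direct}}$, where $|\mathcal{V}_{\text{direct}}|\le 2B$. At each of the $K$ layers, each frontier node expands at most $L$ sampled neighbours, so the work is $\sum_{k=0}^{K}2B L^k=O(B L^K)$. Marking dirty flags is $O(1)$ per visited node. This gives $c_{\text{detect}}$ and also re-derives Equation~\eqref{eq:affected_bound}.

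\textbf{Step 2: Stages 1--4.} These stages loop over $v\in\mathcal{A}$ and nothing else.
\begin{itemize}
\item Stage~1 costs $O(L)$ per node, via the cache or by resampling.
\item Stages~2--3 gather $O(L)$ feature and memory rows of width $O(d+d_m)$ per node. Sorting adds $O(|\mathcal{A}|\log|\mathcal{A}|)$, which is dominated by the $d^2$ term since $\log|\mathcal{A}|\le \log n$ is treated as absorbed.
\item Stage~4 costs at most one full attention evaluation per affected node. Over $K$ layers with $L$ neighbours, and with $O(d^2)$ per neighbour for the $Q/K/V$ projections, this is $O(L K d^2)$. The delta update of Eq.~\eqref{eq:delta_embedding} only touches $|\Delta\mathcal{N}_v|\le L$ neighbours, so it costs no more.
\end{itemize}
The key point is that lower-layer intermediate representations needed by $v$ are either cached, if unaffected, or themselves belong to $\mathcal{A}$. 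Their cost is therefore already counted once per node and not multiplied.

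\textbf{Step 3: Stage~5 and unaffected nodes.} Stage~5 applies one message computation and one GRU update per $v\in\mathcal{V}_{\text{direct}}\subseteq\mathcal{A}$. This is $O(d_m^2)$ each, plus an $O(B)$ adjacency append, giving $O(|\mathcal{A}|d_m^2)$ since $B\le|\mathcal{A}|$. The writeback of Eq.~\eqref{eq:delta_update} touches $|\mathcal{A}|$ rows. Unaffected nodes incur no work: their cache entries are valid by construction of the invalidation rule, Eq.~\eqref{eq:invalidation}, and are only read lazily. Summing Steps 1--3 yields Eq.~\eqref{eq:T_incr}.

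\textbf{Main obstacle.} The delicate part is Step~2's claim that the recursive $K$-layer computation for an affected node does not expand into a full $L^K$ computational tree. I would handle this by arguing layer by layer: the layer-$\ell$ embedding of any node outside $\mathcal{A}$ equals its cached value. Such a node's $K$-hop sampled neighbourhood contains no direct node, so by induction on $\ell$ all its inputs are unchanged. Each layer-$\ell$ embedding is therefore computed at most once per affected node. This charges $O(L d^2)$ per node per layer, for $O(|\mathcal{A}|LKd^2)$ in total. This step assumes caching of per-layer embeddings, which I would state explicitly as part of the model of Stage~4.
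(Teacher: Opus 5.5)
Your decomposition is the paper's own: you charge Stages~1--5 to $\mathcal{A}$, let Stage~4 dominate, and add the detection term. Your layer-by-layer induction with per-layer caches is more careful than the paper, which simply asserts $O(|\mathcal{A}|LKd^2)$ for Stage~4 without ruling out an $L^K$ computation tree per node.

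The gap is in Step~1, and it shows up because you made that step explicit. Your BFS expands each frontier node into the at most $L$ neighbours \emph{it} samples. What you use afterwards needs closure in the opposite direction. This happens in Step~3 (cache entries of unaffected nodes are valid) and in the induction (a node outside $\mathcal{A}$ has no direct node in its $K$-hop sampled neighbourhood). Both require $\mathcal{A}$ to contain every node whose sampled $K$-hop neighbourhood reaches a node whose state changed.

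The sampling relation is not symmetric, and a node can be sampled by arbitrarily many others. Take a hub $x$ with $M$ leaves that have interacted only with $x$. Each leaf keeps $x$ among its sampled neighbours, so an update of $x$'s memory changes the attention inputs of all $M$ leaf embeddings. A forward BFS from $x$ reaches at most $L$ of them. So a forward fanout-$L$ traversal misses affected nodes. A correct (reverse) traversal has neither cost $O(BL^K)$ nor output size at most $2BL^K$, so Equation~\eqref{eq:affected_bound} does not follow from the fanout.

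The paper's proof makes the same fanout-$L$ assertion without justification. The full-adjacency ball of Equation~\eqref{eq:affected_set} is closed in both directions, but its size is governed by full degrees rather than by $L$. So you inherited this problem rather than introduced it.

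The displayed bound \eqref{eq:T_incr} can still be recovered:
\begin{itemize}
\item Keep a reverse index of the sampled neighbour lists, updated in $O(1)$ per insertion or eviction, so $O(B)$ per batch.
\item Run the BFS backwards from $\mathcal{V}_{\text{direct}}$. Every traversed pair ``$w$ samples $u$'' has $w\in\mathcal{A}$, and $w$ has at most $L$ sampled neighbours per layer.
\item Detection therefore costs $O(B+KL|\mathcal{A}|)$, which is covered by $O(|\mathcal{A}|LKd^2)+O(BL^K)$.
\end{itemize}
With $\mathcal{A}$ taken to be this reverse closure, your Steps~2--3 and the layer induction go through unchanged.

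What cannot be proved in general is the side claim that detection costs $O(BL^K)$, which would in particular force $|\mathcal{A}|=O(BL^K)$. It has to be taken as an assumption. That is how the paper's ``in practice'' remark before Equation~\eqref{eq:affected_bound} should be read.

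A smaller slip: $B\le|\mathcal{A}|$ fails when batch edges share endpoints. You do not need it, since the $O(B)$ append already lies inside $O(c_{\text{detect}})$.
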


\begin{proof}
The cost decomposes across the five computation stages. Let
$|\mathcal{A}|$ denote the affected set size.
\begin{enumerate}
  \item \textbf{Neighbor sampling}
    (\S\ref{subsubsec:neighbor_sampling}): Only affected nodes
    require cache updates; unaffected nodes reuse cached neighbor
    lists. Cost: $O(|\mathcal{A}| \cdot L)$.
  \item \textbf{Feature retrieval}
    (\S\ref{subsubsec:feature_retrieval}): Sorted batching over the
    affected set. Cost: $O(|\mathcal{A}| \cdot d)$.
  \item \textbf{Memory read} (\S\ref{subsubsec:memory_read}):
    Read persistent states for affected nodes only. Cost:
    $O(|\mathcal{A}| \cdot d_m)$.
  \item \textbf{Embedding generation}
    (\S\ref{subsubsec:embedding_generation}): Temporal attention
    over $K$ layers with fanout $L$. Cost:
    $O(|\mathcal{A}| \cdot L \cdot K \cdot d^2)$.
  \item \textbf{Memory update} (\S\ref{subsubsec:memory_update}):
    GRU update and writeback for affected nodes. Cost:
    $O(|\mathcal{A}| \cdot d_m^2)$.
\end{enumerate}
Summing and noting that Stage~4 dominates yields
Equation~\eqref{eq:T_incr}. The additional
$O(c_{\text{detect}})$ term covers $K$-hop propagation from
$2B$ directly involved nodes with fanout $L$ at each hop.
\end{proof}

Table~\ref{tab:complexity} summarizes the stage-wise comparison.

\begin{table}[htbp]
\centering
\caption{Per-batch complexity: full recomputation vs.\ incremental.}
\begin{tabular}{@{}lcc@{}}
\toprule
\textbf{Stage} & \textbf{Full Recomp.} & \textbf{Incremental} \\
\midrule
Affected detection
  & ---
  & $O(B \cdot L^{K})$ \\
Neighbor sampling
  & $O(n \cdot L)$
  & $O(|\mathcal{A}| \cdot L)$ \\
Feature retrieval
  & $O(n \cdot d)$
  & $O(|\mathcal{A}| \cdot d)$ \\
Memory read
  & $O(n \cdot d_m)$
  & $O(|\mathcal{A}| \cdot d_m)$ \\
Embedding generation
  & $O(n \cdot L \cdot K \cdot d^2)$
  & $O(|\mathcal{A}| \cdot L \cdot K \cdot d^2)$ \\
Memory update
  & $O(n \cdot d_m^2)$
  & $O(|\mathcal{A}| \cdot d_m^2)$ \\
\bottomrule
\end{tabular}
\label{tab:complexity}
\end{table}

\subsubsection{Speedup Analysis}
\label{subsubsec:speedup_analysis}

\begin{theorem}[End-to-End Speedup]
\label{thm:speedup}
The speedup of incremental over full computation is:
\begin{equation}
  S = \frac{T_{\emph{full}}}{T_{\emph{incr}}}
    = \frac{n}{|\mathcal{A}|}
      \cdot \frac{1}{1 + c_{\emph{detect}}
        / (|\mathcal{A}| \cdot L \cdot K \cdot d^2)}
  \label{eq:speedup}
\end{equation}
Since $|\mathcal{A}| \leq 2B \cdot L^{K}$
(Equation~\ref{eq:affected_bound}), this simplifies for large $n$ to:
\begin{equation}
  S \geq \frac{n}{2B \cdot L^{K}}
  \label{eq:speedup_simplified}
\end{equation}
\end{theorem}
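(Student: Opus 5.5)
The plan is to derive the speedup directly as the quotient of the two earlier cost bounds, Theorem~\ref{thm:full_complexity} and Theorem~\ref{thm:incr_complexity}. I will work under the convention, already implicit in the proof of Theorem~\ref{thm:incr_complexity}, that the attention term dominates the memory term. That is, $n\cdot d_m^2 \le n\cdot L K d^2$ and $|\mathcal{A}|\cdot d_m^2 \le |\mathcal{A}|\cdot L K d^2$, which holds for the usual configurations where $d_m \le d\sqrt{LK}$. With this, and with the hidden constants matched since both costs come from the same per-node kernels, the two costs reduce to
\[
T_{\text{full}} \simeq n\,LKd^2, \qquad
T_{\text{incr}} \simeq |\mathcal{A}|\,LKd^2 + c_{\text{detect}}.
\]
If one prefers not to drop the memory term, one can keep the factor $(LKd^2+d_m^2)$ in both numerator and denominator. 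It then cancels in the leading term and only slightly weakens the correction factor.

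Next I take the ratio and factor $|\mathcal{A}|\,LKd^2$ out of the denominator:
\[
S=\frac{n\,LKd^2}{|\mathcal{A}|\,LKd^2\bigl(1+c_{\text{detect}}/(|\mathcal{A}|LKd^2)\bigr)}
=\frac{n}{|\mathcal{A}|}\cdot\frac{1}{1+c_{\text{detect}}/(|\mathcal{A}|LKd^2)}.
\]
This is exactly Equation~\eqref{eq:speedup}. It should be read as an asymptotic identity: equality up to the constants suppressed by the $O(\cdot)$ notation of the two preceding theorems.

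For the simplified bound, I bound the correction factor using $c_{\text{detect}}=O(BL^K)$ and the dominant-attention regime. Since $|\mathcal{A}|\ge 1$ and every directly involved node is affected, $|\mathcal{A}|\ge|\mathcal{V}_{\text{direct}}|$, and a batch of $B$ edges has at least on the order of $B$ distinct endpoints. Hence
\[
\frac{c_{\text{detect}}}{|\mathcal{A}|\,LKd^2} \;\lesssim\; \frac{B L^K}{B\,LKd^2} \;=\; \frac{L^{K-1}}{Kd^2},
\]
which is at most a constant whenever $d^2 \gtrsim L^{K-1}$, as it is for typical $d \ge 100$ and small $K$. So the correction factor is $\Theta(1)$, and $S \gtrsim n/|\mathcal{A}|$. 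Substituting the bound $|\mathcal{A}|\le 2BL^K$ from Equation~\eqref{eq:affected_bound} then yields $S \ge n/(2BL^K)$, up to that constant.

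The main obstacle is this last step. Equation~\eqref{eq:speedup_simplified} is stated without a constant, while the correction factor is strictly below $1$. A literal inequality therefore needs either absorbing constants, in line with the $O$-level statements of the preceding theorems, or the exact matching $c_{\text{detect}} \le |\mathcal{A}|LKd^2$, which gives a loss of at most a factor $2$. It also needs the regime assumption that detection cost is negligible relative to attention cost. I would state these two conditions explicitly as the meaning of ``for large $n$'' rather than try to prove a constant-free inequality.
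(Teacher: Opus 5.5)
Your derivation of Equation~\eqref{eq:speedup} is the paper's. Like the paper, you keep only the embedding-generation term, take the ratio $nLKd^2/(|\mathcal{A}|LKd^2+c_{\text{detect}})$, factor out $|\mathcal{A}|LKd^2$, treat the correction factor as essentially $1$, and then substitute $|\mathcal{A}|\le 2BL^K$. You differ only in the last step, and there you are more careful than the paper. The paper's justification is that ``for large $n$'' the denominator approaches $1$, but $c_{\text{detect}}/(|\mathcal{A}|LKd^2)$ does not involve $n$ at all. What is actually needed is the regime assumption you make explicit, and \eqref{eq:speedup_simplified} indeed holds only up to that correction factor, not as a constant-free inequality.

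The auxiliary claim you use to bound the correction term is false, however: a batch of $B$ temporal edges need not have on the order of $B$ distinct endpoints. These are multigraphs in which repeated interactions between the same pair are common (e.g., a user repeatedly editing one page), so $|\mathcal{V}_{\text{direct}}|$ can be as small as $2$. With $c_{\text{detect}}$ charged as $O(BL^K)$, your estimate $L^{K-1}/(Kd^2)$ can then be off by a factor of order $B$. The repair follows from how dirty-flag propagation works: a node is expanded at most once, and only if it is itself flagged. You can deduplicate the seeds, which gives $c_{\text{detect}}=O(|\mathcal{V}_{\text{direct}}|\,L^K)$; together with $|\mathcal{A}|\ge|\mathcal{V}_{\text{direct}}|$ this recovers your $L^{K-1}/(Kd^2)$. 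Better, note that every expanded node lies in $\mathcal{A}$ and costs $O(L)$. The $K$-hop propagation then costs $O(|\mathcal{A}|L)$, beyond reading the batch (which full recomputation must also do), and the correction term is $O(1/(Kd^2))$ with no condition relating $d$ and $L$.
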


\begin{proof}
From Theorems~\ref{thm:full_complexity} and
\ref{thm:incr_complexity}, keeping only the dominant embedding
generation term:
\begin{align}
  S &= \frac{n \cdot L \cdot K \cdot d^2}
            {|\mathcal{A}| \cdot L \cdot K \cdot d^2
              + c_{\text{detect}}} \nonumber \\
    &= \frac{n}{|\mathcal{A}|}
       \cdot \frac{1}{1 + c_{\text{detect}}
         / (|\mathcal{A}| \cdot L \cdot K \cdot d^2)}
\end{align}
For large $n$, the detection overhead
$c_{\text{detect}} = O(B \cdot L^{K})$ is dominated by the
computation cost $O(|\mathcal{A}| \cdot L \cdot K \cdot d^2)$,
so the denominator approaches~1. Substituting the upper bound
$|\mathcal{A}| \leq 2B \cdot L^{K}$ yields
Equation~\eqref{eq:speedup_simplified}.
\end{proof}

\begin{theorem}[Optimality Condition]
\label{thm:optimality_condition}
Incremental computation is strictly faster than full recomputation
when the affected ratio satisfies:
\begin{equation}
  \frac{|\mathcal{A}|}{n}
    < \frac{c_{\emph{batch}}}{c_{\emph{kernel}}}
      - \frac{c_{\emph{detect}}}
             {n \cdot \bar{d} \cdot K \cdot d^2
               \cdot c_{\emph{kernel}}}
  \label{eq:optimality}
\end{equation}
where $c_{\emph{batch}}$ is the per-operation cost of batched full
computation and $c_{\emph{kernel}}$ is the per-operation cost of
incremental kernel launches. For large $n$, this simplifies to:
\begin{equation}
  \frac{|\mathcal{A}|}{n}
    < \frac{1}{r_{\emph{overhead}}}
  \label{eq:optimality_simple}
\end{equation}
where $r_{\emph{overhead}} = c_{\emph{kernel}} / c_{\emph{batch}}$
is the relative kernel overhead.
\end{theorem}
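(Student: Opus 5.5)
The plan is to compare two explicit cost models and solve the inequality $T_{\text{incr}} < T_{\text{full}}$ for the affected ratio $|\mathcal{A}|/n$. Following the structure of Theorems~\ref{thm:full_complexity} and~\ref{thm:incr_complexity}, I keep only the dominant embedding-generation term in each. I then attach a per-operation constant to each, reflecting how the work is executed.

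Full recomputation runs as one large batched pass over all $n$ nodes. Its aggregation is over the realized temporal neighborhoods, whose average size is $\bar{d}$. So I model
\[
T_{\text{full}} = c_{\text{batch}} \cdot n \cdot \bar{d} \cdot K \cdot d^2 .
\]
Incremental computation touches only $\mathcal{A}$, but through scattered, smaller kernel launches, and it also pays for detection. So I model
\[
T_{\text{incr}} = c_{\text{kernel}} \cdot |\mathcal{A}| \cdot \bar{d} \cdot K \cdot d^2 + c_{\text{detect}} .
\]
Here $c_{\text{detect}}=O(B\cdot L^{K})$ is taken from \S\ref{subsubsec:affected_set}. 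The one modelling choice to make explicit is that both sides use the same effective per-node work $\bar{d}\cdot K\cdot d^2$. This is needed so that this factor cancels in the leading term, as the stated bound \eqref{eq:optimality} requires. The fanout $L$ of the earlier theorems is replaced by $\bar{d}$, interpreted as the effective number of neighbors attended.

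The derivation is then direct algebra. I require
\[
c_{\text{kernel}}\,|\mathcal{A}|\,\bar{d}Kd^2 + c_{\text{detect}} < c_{\text{batch}}\, n\,\bar{d}Kd^2 .
\]
Dividing both sides by the positive quantity $n\cdot\bar{d}\cdot K\cdot d^2\cdot c_{\text{kernel}}$ gives exactly \eqref{eq:optimality}. Strictness is preserved because all constants are positive.

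For the simplified form \eqref{eq:optimality_simple}, I argue that the correction term
\[
\frac{c_{\text{detect}}}{n\,\bar{d}Kd^2\,c_{\text{kernel}}} = O\!\left(\frac{B L^{K}}{n\,\bar{d}Kd^2}\right)
\]
vanishes as $n\to\infty$ with $B$, $L$, $K$, $d$ fixed. This is the same asymptotic reasoning used in the proof of Theorem~\ref{thm:speedup}. What remains is $|\mathcal{A}|/n < c_{\text{batch}}/c_{\text{kernel}} = 1/r_{\text{overhead}}$.

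The main obstacle is not the algebra but justifying the cost model itself. The statement implicitly assumes linear costs with separate constants for batched versus incremental execution. It also assumes that the lower-order stages (sampling, feature retrieval, memory read, GRU update) are absorbed into these constants or dominated by the embedding term. I would state these explicitly as modelling assumptions, citing Table~\ref{tab:complexity} to argue that every stage scales by the same factor $|\mathcal{A}|/n$. Under that scaling, the lower-order stages only rescale $c_{\text{batch}}$ and $c_{\text{kernel}}$ and do not change the form of the condition. With this in place, the theorem reduces to an exact threshold under the model plus an asymptotic simplification.
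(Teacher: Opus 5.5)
Your proposal is correct and matches the paper's proof essentially step for step. It uses the same linear cost model $c_{\text{detect}} + c_{\text{kernel}}\,|\mathcal{A}|\,\bar{d}Kd^2 < c_{\text{batch}}\,n\,\bar{d}Kd^2$, divides by $n\,\bar{d}Kd^2\,c_{\text{kernel}}$ to isolate $|\mathcal{A}|/n$, and drops the detection term for large $n$. Stating the modelling assumptions explicitly (the fanout $L$ replaced by $\bar{d}$, lower-order stages absorbed into effective constants) usefully clarifies what the paper leaves implicit.
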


\begin{proof}
The condition $T_{\text{incr}} < T_{\text{full}}$ requires:
\begin{equation}
  c_{\text{detect}}
    + |\mathcal{A}| \cdot \bar{d} \cdot K \cdot d^2
      \cdot c_{\text{kernel}}
    < n \cdot \bar{d} \cdot K \cdot d^2 \cdot c_{\text{batch}}
\end{equation}
Rearranging:
\begin{equation}
  |\mathcal{A}|
    < \frac{n \cdot \bar{d} \cdot K \cdot d^2 \cdot c_{\text{batch}}
            - c_{\text{detect}}}
           {\bar{d} \cdot K \cdot d^2 \cdot c_{\text{kernel}}}
\end{equation}
Dividing both sides by $n$:
\begin{equation}
  \frac{|\mathcal{A}|}{n}
    < \frac{c_{\text{batch}}}{c_{\text{kernel}}}
      - \frac{c_{\text{detect}}}
             {n \cdot \bar{d} \cdot K \cdot d^2
               \cdot c_{\text{kernel}}}
\end{equation}
For large $n$ the second term vanishes, yielding
Equation~\eqref{eq:optimality_simple}.
\end{proof}

\begin{corollary}
\label{cor:practical_regime}
For sparse temporal graphs ($\bar{d} < 50$) and shallow TGNs
($K \leq 2$), incremental computation is faster whenever the
affected ratio $|\mathcal{A}|/n < 20\text{--}50\%$. In typical
streaming scenarios with localized edge arrivals, the affected ratio
is $O(B \cdot L^{K} / n) \ll 1$, well within this regime.
\end{corollary}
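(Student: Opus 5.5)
The corollary follows by instantiating Theorem~\ref{thm:optimality_condition} with concrete constant ranges, then checking that the affected-set bound of Equation~\eqref{eq:affected_bound} places typical streaming workloads inside the resulting regime. I would first take the simplified large-$n$ form, Equation~\eqref{eq:optimality_simple}, which says incremental computation wins whenever $|\mathcal{A}|/n < 1/r_{\text{overhead}}$ with $r_{\text{overhead}} = c_{\text{kernel}}/c_{\text{batch}}$. The task is then to bound $r_{\text{overhead}}$ in the stated regime ($\bar{d}<50$, $K\le 2$) and to argue that the correction term $c_{\text{detect}}/(n\bar{d}Kd^2 c_{\text{kernel}})$ from Equation~\eqref{eq:optimality} is negligible.

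\textbf{Step 1: bound the correction term.} By Theorem~\ref{thm:incr_complexity}, $c_{\text{detect}} = O(B L^K)$. With $K\le 2$ this is $O(BL^2)$, while the denominator is $n\bar{d}Kd^2c_{\text{kernel}}$. Whenever $BL^2 \ll n d^2$ the correction is $o(1)$, which holds for any realistic hidden dimension $d$ and any $n$ above a few thousand. The exact condition in Equation~\eqref{eq:optimality} therefore reduces to Equation~\eqref{eq:optimality_simple} up to a vanishing additive error.

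\textbf{Step 2: bound $r_{\text{overhead}}$.} The per-operation cost of incremental kernels exceeds that of batched full computation for two reasons: fixed launch latency amortized over fewer operations, and gathered, less coalesced access over the affected subset. With $K\le 2$ layers and fanout bounded via $\bar{d}<50$, each affected node carries at most $O(\bar{d}K)\le 100$ attention operations of $O(d^2)$ work. This keeps the per-kernel work large enough that the overhead ratio stays within a small constant band, and I would posit $r_{\text{overhead}}\in[2,5]$. Sorted batching (Stage~2) restores 60--80\% of peak bandwidth against roughly 100\% for dense batched access, which alone gives a factor near $1.25$--$1.7$. Launch and detection bookkeeping then account for the remainder. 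This yields $1/r_{\text{overhead}}\in[20\%,50\%]$, the claimed threshold.

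\textbf{Step 3: the typical regime.} Combining Equation~\eqref{eq:affected_bound} with $K\le 2$ gives $|\mathcal{A}|/n \le 2BL^K/n$. For localized arrivals with $B L^2 \ll n$ (e.g.\ $B=200$, $L=10$, $n\ge 10^6$ gives at most $4\%$), this ratio is $\ll 1$ and hence below $20\%$, so Theorem~\ref{thm:optimality_condition} applies.

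The main obstacle is Step~2: $r_{\text{overhead}}$ is a hardware constant, not something derived from the model. The honest route is to state the $20$--$50\%$ figure as conditional on the assumption $2\le r_{\text{overhead}}\le 5$ and to justify that assumption by the bandwidth measurements of Stages~2--3, rather than to claim a purely analytic derivation.
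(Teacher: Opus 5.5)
Your argument is correct and is essentially the paper's own implicit one: the paper gives no separate proof, and the corollary is just Eq.~\eqref{eq:optimality_simple} read off under an unstated assumption $2 \le r_{\text{overhead}} \le 5$, combined with $|\mathcal{A}| \le 2B L^{K}$ from Eq.~\eqref{eq:affected_bound} for the typical-regime claim. Stating that range as an explicit hardware assumption, as you do at the end, is the right call, because neither the paper nor your Step~2 heuristic actually derives it from $\bar{d}<50$ and $K\le 2$. In Eq.~\eqref{eq:optimality} these hypotheses enter only through the vanishing correction term and the bound $2BL^{K}/n$, and an upper bound on per-node work cannot by itself guarantee good amortization of launch overhead.
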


\paragraph{Parameter sensitivity.}
Table~\ref{tab:speedup_examples} instantiates
Equation~\eqref{eq:speedup_simplified} under representative
parameter settings to illustrate the achievable speedup range.

\begin{table}[htbp]
\centering
\caption{Theoretical speedup under different parameter regimes.}
\begin{tabular}{@{}cccccc@{}}
\toprule
$n$ & $B$ & $L$ & $K$ & $|\mathcal{A}|$ (bound)
  & Speedup \\
\midrule
$10^6$ & 200 & 10 & 1 & $4 \times 10^3$
  & $250\times$ \\
$10^6$ & 200 & 10 & 2 & $4 \times 10^4$
  & $25\times$ \\
$10^6$ & 200 & 20 & 1 & $8 \times 10^3$
  & $125\times$ \\
$10^7$ & 200 & 10 & 1 & $4 \times 10^3$
  & $2500\times$ \\
\bottomrule
\end{tabular}
\label{tab:speedup_examples}
\end{table}

Three observations follow. First, the speedup grows linearly with
$n$: larger graphs benefit more because the affected fraction
$\rho = |\mathcal{A}|/n$ shrinks. Second, the speedup degrades
exponentially with model depth $K$, since each layer expands the
affected set by a factor of $L$. Third, smaller batch sizes yield
higher speedups, reflecting that sparse, localized updates are most
amenable to incremental processing.

\subsubsection{Lower Bound and Optimality}
\label{subsubsec:lower_bound}

We now show that StreamTGN's incremental complexity is near-optimal
by establishing a lower bound on any correct incremental algorithm.

\begin{theorem}[Lower Bound]
\label{thm:lower_bound}
Any correct incremental algorithm for $K$-layer temporal attention
must perform at least:
\begin{equation}
  \Omega\!\left(|\mathcal{E}_{\emph{aff}}| \cdot K \cdot d\right)
\end{equation}
computation per batch, where
$\mathcal{E}_{\emph{aff}} = \{(u,v,t) \in \mathcal{E} \mid
u \in \mathcal{A} \text{ or } v \in \mathcal{A}\}$ denotes the
edges incident to the affected set.
\end{theorem}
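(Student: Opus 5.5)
The natural route is an adversary (information-theoretic) argument. We show that every edge incident to the affected set carries information that can change some affected node's output. A correct algorithm must therefore at least read each such edge's contribution, in each of the $K$ layers, at dimension $d$. Throughout, ``correct'' means the output equals full recomputation exactly, in the sense of the zero-accuracy-loss guarantee of \S\ref{subsubsec:affected_set}. The computational model is one where every arithmetic operation or memory read touches $O(1)$ scalars.

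\textbf{Step 1: dependence.} Fix a batch $\mathcal{E}_B$ with affected set $\mathcal{A}$ given by Equation~\eqref{eq:affected_set}. Take any edge $(u,v,t)\in\mathcal{E}_{\text{aff}}$, say with $v\in\mathcal{A}$. I will show that at some layer $\ell\le K$ the embedding $\mathbf{h}_v^{(\ell)}$ depends nontrivially on the key/value vectors $\mathbf{k}_u,\mathbf{v}_u$ built from $[\mathbf{s}_u\|\mathbf{x}_u\|\mathbf{e}_{uv}\|\phi(t-t_e)]$ in Equation~\eqref{eq:attn_detail}. This holds because $\alpha_{uv}>0$ under softmax, so $\mathbf{h}_v$ has a nonzero partial derivative with respect to each coordinate of $\mathbf{v}_u$. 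The exception is degenerate weights, which I would exclude by assuming $W_V$ has full rank $d$; this is a generic condition.

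\textbf{Step 2: adversary.} Suppose an algorithm skips reading any of the $d$ coordinates of the value contribution of edge $(u,v,t)$ at layer $\ell$. The adversary then perturbs that coordinate, for instance through the edge feature $\mathbf{e}_{uv}$ or the memory $\mathbf{s}_u$, keeping everything the algorithm inspected unchanged. By Step~1 the true $\mathbf{h}_v$ changes, but the algorithm's output does not, which contradicts correctness. Since the value vectors at different layers are distinct quantities (layer-$\ell$ values depend on layer-$(\ell-1)$ embeddings), the $K$ layers contribute separately. Each pair (edge, layer) forces $\Omega(d)$ distinct reads or operations. Summing over pairs gives $\Omega(|\mathcal{E}_{\text{aff}}|\cdot K\cdot d)$.

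\textbf{Step 3: cached values.} Cached values are the main obstacle. An incremental algorithm may hold $\alpha_{uv}\mathbf{v}_u$ from the previous batch in \textsc{AttnCache} and avoid touching it, as in the delta formula~\eqref{eq:delta_embedding}. The adversary must therefore act on quantities that genuinely change in this batch. My first attempt is to note that for $v\in\mathcal{A}$ the query $\mathbf{q}_v$ (or some upstream memory) has changed. Every $\alpha_{uv}$ then shares a new softmax normalizer, so the new $\mathbf{h}_v$ is a fresh reweighting of all $|\mathcal{N}_v|$ value vectors, and each must be re-touched. The cleanest way to make this rigorous is to state the bound for the worst-case instance over batches inducing the given $\mathcal{A}$: the adversary chooses the batch so that the memory of every affected node changes. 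If even this is too strong, I would weaken the conclusion to edges whose endpoint memory or query actually changed, and argue that these edges coincide with $\mathcal{E}_{\text{aff}}$ under the invalidation rule~\eqref{eq:invalidation}.
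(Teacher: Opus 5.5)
Your Steps 1--2 are an adversary-form version of the paper's entire proof. The paper argues that every edge incident to $\mathcal{A}$ feeds some affected embedding, costs $\Omega(d)$ per layer, and cannot be skipped because attention depends on the whole message set, and it stops there. Your Step 3 correctly identifies what that argument never addresses, but you leave it open, and neither fallback works.

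\emph{First fallback (a batch in which every affected node's memory changes).} This is unavailable. Only nodes in $\mathcal{V}_{\text{direct}}$ receive GRU updates, so it forces $\mathcal{A}=\mathcal{V}_{\text{direct}}$, i.e.\ a batch covering whole connected components. That yields at best an existential bound on a degenerate family, not a per-batch bound.

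\emph{Second fallback, and your ``first attempt''.} Both rely on the claim that a new softmax normalizer forces every value vector to be re-touched. This is false when the query is unchanged. In the paper's model, where (as in Theorem~\ref{thm:correctness}) an embedding changes only through its $K$-hop neighborhood, take a node $w$ at hop $r\ge 1$ from $\mathcal{V}_{\text{direct}}$. Its $\mathbf{h}_w^{(\ell)}$ is unchanged for $\ell<r$. At layer $r$ its query is unchanged, and only its hop-$(r-1)$ neighbors have new keys and values. An exact algorithm can cache the \emph{unnormalized} sums $Z_w=\sum_u \exp(\mathbf{q}_w^\top\mathbf{k}_u/\sqrt{d_k})$ and $\mathbf{N}_w=\sum_u \exp(\mathbf{q}_w^\top\mathbf{k}_u/\sqrt{d_k})\,\mathbf{v}_u$. (Caching the normalized $\alpha_{uw}$ instead is exactly what makes the paper's delta formula inexact.) It then swaps out the changed terms and returns $\mathbf{N}_w/Z_w$, at a cost independent of $\deg(w)$.

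\emph{A concrete counterexample.} Take $K=1$ and a batch consisting of a single edge $(a,b)$, where $b$ has no other neighbors and $a$'s only other neighbor $w$ has $D$ further neighbors outside $\mathcal{A}$. Then $|\mathcal{E}_{\text{aff}}|\ge D$, yet the exact update costs $O(d^2+d_m^2)$. So the claimed $\Omega(D\,d)$ fails once $D\gg d$. In particular, the edges whose endpoint memory or query changed do \emph{not} coincide with $\mathcal{E}_{\text{aff}}$. Edges leaving the hop-$K$ frontier, and all (edge, layer) pairs with $\ell<r$, are counted by the theorem but never need to be touched.

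The same issue already breaks Step 2. Your adversary perturbs $\mathbf{e}_{uv}$ or $\mathbf{s}_u$, but these are historical data the algorithm saw when it built its cache. Perturbing them changes the cached state as well, so ``everything the algorithm inspected is unchanged'' does not hold. Against an incremental algorithm, the adversary may only perturb the fresh batch. The most it can hope to charge is the set of (edge, layer) pairs whose term $\exp(\mathbf{q}_w^\top\mathbf{k}_u/\sqrt{d_k})\,\mathbf{v}_u$ actually changes in this batch, which is generally far fewer than $|\mathcal{E}_{\text{aff}}|\cdot K$.

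The theorem as written is therefore obtainable only in a restricted model that forbids cross-batch aggregate state, so that every affected node's attention is recomputed from its raw neighbor list. That is the model the paper's proof tacitly assumes, and there the bound is little more than a restatement of the assumption. Outside that model, the quantity $|\mathcal{E}_{\text{aff}}|\cdot K$ in the statement would have to be reduced to at most the number of such changed terms.
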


\begin{proof}
Each edge $(u,v,t)$ incident to an affected node contributes to at
least one node's embedding through the message-passing mechanism.
Computing the message for a single edge at a single layer requires
$\Omega(d)$ operations (the inner product in the attention score and
the value projection). Since each affected edge must be processed at
every layer where it participates, the total cost is at least
$\Omega(|\mathcal{E}_{\text{aff}}| \cdot K \cdot d)$.

No algorithm can avoid this cost: skipping any affected edge would
yield an incorrect embedding for some node in $\mathcal{A}$, since
the attention output depends on the complete set of messages from the
node's temporal neighborhood.
\end{proof}

\begin{theorem}[Near-Optimality of StreamTGN]
\label{thm:optimality}
StreamTGN's incremental computation achieves complexity within a
constant factor of the lower bound:
\begin{equation}
  T_{\emph{StreamTGN}}
    = O\!\left(|\mathcal{E}_{\emph{aff}}| \cdot K \cdot d^2\right)
      + O(c_{\emph{detect}})
  \label{eq:streamtgn_optimal}
\end{equation}
The gap between the lower bound $\Omega(d)$ per edge and
StreamTGN's $O(d^2)$ per edge arises from the matrix
multiplications in the attention mechanism (query, key, and value
projections of dimension $d \times d$).
\end{theorem}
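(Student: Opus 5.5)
The plan is to bound StreamTGN's per-batch cost by charging every unit of work to an edge in $\mathcal{E}_{\text{aff}}$ or to affected-set detection. Then compare the per-edge charge with the per-edge cost in Theorem~\ref{thm:lower_bound}. The starting point is Theorem~\ref{thm:incr_complexity}, which gives $T_{\text{incr}} = O(|\mathcal{A}| \cdot L \cdot K \cdot d^2 + |\mathcal{A}| \cdot d_m^2) + O(c_{\text{detect}})$. The work is to rewrite the node-indexed terms as edge-indexed terms.

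The first step is to show that $|\mathcal{A}| \cdot L = O(|\mathcal{E}_{\text{aff}}|)$. Every affected node $v \in \mathcal{A}$ attends over at most $L$ sampled neighbors $(u,t,\mathbf{e})$ in each of the $K$ layers. Each such sample is a temporal edge incident to $v$, so it lies in $\mathcal{E}_{\text{aff}}$ by definition. A single edge $(u,v,t)$ can be sampled at most twice per layer, once from each endpoint. Hence the number of (node, sampled-neighbor, layer) triples handled in Stage~4 is at most $2K|\mathcal{E}_{\text{aff}}|$. A node with fewer than $L$ neighbors only makes the count smaller. Each triple costs $O(d^2)$: the key and value projections are $d \times d$ matrix--vector products, and the $O(d)$ score and weighted-sum work is absorbed. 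Summing gives $O(|\mathcal{E}_{\text{aff}}| \cdot K \cdot d^2)$ for embedding generation. The same charging covers neighbor sampling at $O(|\mathcal{A}| L)$, and feature retrieval and memory read at $O(|\mathcal{A}|(d+d_m))$, since each of these is dominated by the attention term.

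The second step handles the memory term $|\mathcal{A}| \cdot d_m^2$. Per Algorithm~\ref{alg:streamtgn}, the GRU update is applied only to nodes in $\mathcal{V}_{\text{direct}}$. Each such node is an endpoint of a batch edge, and batch edges are themselves in $\mathcal{E}_{\text{aff}}$. So this cost is $O(|\mathcal{E}_{\text{aff}}| \cdot d_m^2)$. Under the standing convention $d_m = O(d)$, this folds into $O(|\mathcal{E}_{\text{aff}}| \cdot K \cdot d^2)$. The detection cost is kept as the separate additive term $O(c_{\text{detect}})$ from Theorem~\ref{thm:incr_complexity}. Together these give Equation~\eqref{eq:streamtgn_optimal}.

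Finally, dividing by the lower bound $\Omega(|\mathcal{E}_{\text{aff}}| \cdot K \cdot d)$ of Theorem~\ref{thm:lower_bound} gives a ratio of $O(d)$. This ratio is independent of $n$, $B$, $L$ and the graph, which is the sense of ``within a constant factor'' the statement intends, with $d$ treated as a fixed model constant. The gap comes exactly from the $d \times d$ projections that the lower-bound argument does not count.

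The main obstacle is the first step. Isolated affected nodes, or nodes whose cached embeddings are reused, must not contribute cost that is unaccounted for by edges. I would handle this by noting that every non-isolated node in $\mathcal{A}$ is incident to some edge of $\mathcal{E}_{\text{aff}}$. Nodes with no incident edges can only enter $\mathcal{A}$ as direct endpoints, which is impossible. So $|\mathcal{A}| \le 2|\mathcal{E}_{\text{aff}}|$, and the $O(d^2)$ per-node query projection is also charged to edges.
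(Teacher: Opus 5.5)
Your proposal is correct and rests on the same idea as the paper's proof: charge $O(d^2)$ of Stage~4 attention work to each edge of $\mathcal{E}_{\text{aff}}$ per layer, and carry detection as an additive term. You are in fact more careful than the paper, which just asserts that each affected edge is processed exactly once per layer and ignores the sampling, retrieval and GRU costs that you absorb via endpoint double counting, $|\mathcal{A}|\le 2|\mathcal{E}_{\text{aff}}|$ and $d_m=O(d)$.

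One small correction: your first step actually proves, and only needs, that the number of sampled (node, neighbor) pairs per layer, $\sum_{v\in\mathcal{A}}\min(L,\deg v)$, is at most $2|\mathcal{E}_{\text{aff}}|$. The literal claim $|\mathcal{A}|\cdot L=O(|\mathcal{E}_{\text{aff}}|)$ that you announce is false when $L$ is a free parameter and affected nodes have degree below $L$.
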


\begin{proof}
StreamTGN processes each affected edge exactly once per layer in the
embedding generation stage (Stage~4,
\S\ref{subsubsec:embedding_generation}). For each edge, the
attention computation involves:
\begin{itemize}
  \item Query/key projection: $O(d^2)$
  \item Attention score: $O(d)$
  \item Value projection and aggregation: $O(d^2)$
\end{itemize}
Summing over $|\mathcal{E}_{\text{aff}}|$ edges and $K$ layers
yields $O(|\mathcal{E}_{\text{aff}}| \cdot K \cdot d^2)$. The
additional detection cost $O(c_{\text{detect}})$ is subsumed for
large graphs.

No affected edge is processed more than once per layer (ensured by
the cache invalidation rule in Equation~\ref{eq:invalidation}), and
no unaffected edge is processed at all (ensured by the embedding
cache in \S\ref{subsubsec:cache}). Thus StreamTGN touches the
minimum set of edges required for correctness.
\end{proof}

\subsubsection{Correctness of Incremental Computation}
\label{subsubsec:correctness}

A critical question is whether incremental computation produces the
same result as full recomputation. We establish this for each stage.

\begin{theorem}[Equivalence of Incremental Computation]
\label{thm:correctness}
For Stages 1--3 and 5, StreamTGN's incremental output is
\emph{identical} to full recomputation. For Stage~4 (embedding
generation), the output is identical when exact recomputation is
used for affected nodes, and bounded-error when the delta
approximation (Equation~\ref{eq:delta_embedding}) is applied.
\end{theorem}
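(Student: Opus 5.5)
The argument is an induction over batches. The invariant is that, before batch $\mathcal{E}_B$ is processed, all persistent state (adjacency list, neighbor cache, node memory, embedding cache) equals what full recomputation would hold after the same prefix of the stream. The base case is immediate: memory starts at $\mathbf{s}_v(0)=\mathbf{0}$, and all caches are empty or invalid. For the inductive step I go stage by stage. The common tool is a \emph{locality lemma}, which I prove first.

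\textbf{Locality lemma.} Every quantity computed for node $w$ in Stages~1--5 is a deterministic function of data within $K$ hops of $w$. That data consists of the temporal neighbor lists, the features $\mathbf{x}_u$ and $\mathbf{e}$, the memory states $\mathbf{s}_u(t^-)$, and timestamps. The proof is by unrolling Equation~\eqref{eq:attn_detail} $K$ times: layer $\ell$ at $w$ reads only layer-$(\ell-1)$ values of its sampled neighbors. Consequently, if $w\notin\mathcal{V}_{\text{aff}}$ as defined in Equation~\eqref{eq:affected_set}, none of these inputs changed during batch $\mathcal{E}_B$. This is because the only inputs modified by a batch are:
\begin{itemize}
\item the adjacency lists of nodes in $\mathcal{V}_{\text{direct}}$;
\item the memory of nodes in $\mathcal{V}_{\text{direct}}$, since Stage~5 updates only these.
\end{itemize}
Hence the cached values coincide with what full recomputation would produce.

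\textbf{Stages 1--3 and 5.}
\begin{itemize}
\item \emph{Stage 1.} For affected nodes, the insert/evict rule reproduces the $L$ most recent neighbors, which is exactly what a fresh most-recent sampler returns. Unaffected nodes fall under the lemma.
\item \emph{Stages 2--3.} These are pure gathers. Sorting and then applying the inverse permutation \texttt{inv\_idx} composes to the identity, so $\mathbf{X}[\texttt{sorted\_ids}][\texttt{inv\_idx}]=\mathbf{X}[\texttt{node\_ids}]$, and likewise for $\mathbf{S}$.
\item \emph{Stage 5.} The GRU update in Equation~\eqref{eq:gru_detail} is applied only to nodes that full recomputation would also update, namely the endpoints in $\mathcal{E}_B$. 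It uses the same inputs, and the writeback in Equation~\eqref{eq:delta_update} leaves every other row unchanged, which full recomputation also does.
\end{itemize}
Together these give identity for Stages~1--3 and~5 and preserve the invariant.

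\textbf{Stage 4.} When exact recomputation is used for $v\in\mathcal{V}_{\text{aff}}$, the computation is literally the same attention function applied to identical inputs, which were established equal above. Unaffected nodes are handled by the lemma, so the output is identical. For the delta formula, Equation~\eqref{eq:delta_embedding}, I would write the exact embedding as $\sum_u \alpha_{uv}^{\text{new}}\mathbf{v}_u^{\text{new}}$ and subtract the cached $\sum_u\alpha_{uv}^{\text{old}}\mathbf{v}_u^{\text{old}}$. The discrepancy comes from neighbors $u$ that are unchanged but whose weight $\alpha_{uv}$ moved because the softmax denominator changed. This discrepancy equals $\sum_{u\in\text{unchanged}}(\alpha^{\text{new}}_{uv}-\alpha^{\text{old}}_{uv})\mathbf{v}_u$. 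I bound it by $\|\alpha^{\text{new}}-\alpha^{\text{old}}\|_1\max_u\|\mathbf{v}_u\|$. The softmax is $1$-Lipschitz from $\ell_\infty$ logits to $\ell_1$ (up to a factor $2$), so the bound reduces to the logit perturbation. That perturbation is controlled by $\|W_Q\|\,\|W_K\|$ times the size of the change in query and key inputs, with inputs bounded via $\tanh$ and $\sigma$ in the GRU.

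\textbf{Main obstacle.} The hardest step is Stage~4's bounded-error claim. The error may compound across the $K$ layers and across batches, because the cached $\mathbf{h}_v$ feeds later deltas. I would first bound the per-layer error by a constant $C\cdot|\Delta\mathcal{N}_v|/|\mathcal{N}_v|$. I would then show that errors add at most linearly across layers, using a Lipschitz constant of the attention layer. Accumulation across batches I would only claim up to the next rebuild, deferring to the drift-aware rebuild mechanism. A secondary subtlety is that the exact-equality claim needs deterministic (most-recent) sampling; with random sampling, equivalence holds only in distribution or under a fixed seed. I would state this as an explicit assumption.
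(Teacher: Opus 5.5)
Your overall structure is the paper's own. Unaffected nodes are handled by $K$-hop locality read off Equation~\eqref{eq:affected_set}, affected nodes by ``same operation on the same inputs,'' and Stage~5 by its restriction to $\mathcal{V}_{\text{direct}}$. The explicit invariant and locality lemma tighten this, but the inductive step does not close.

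\textbf{Exact mode.} For $w\notin\mathcal{V}_{\text{aff}}^{(i)}$ you need $w$'s inputs to be unchanged \emph{since its cache entry was written}. Your lemma only inspects the writes performed \emph{during} batch $i$. In Algorithm~\ref{alg:streamtgn}, however, the GRU writeback of batch $i-1$ (Stage~5) runs after that batch's Stage~4. Every node of $\mathcal{V}_{\text{aff}}^{(i-1)}$ has an endpoint of batch $i-1$ within $K$ hops, so every entry refreshed in batch $i-1$ may be computed from memory that is overwritten immediately afterwards. Yet $\mathcal{V}_{\text{aff}}^{(i)}$, built from $\mathcal{V}_{\text{direct}}^{(i)}$ alone, need not revisit these nodes. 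Concretely, let $v$ be an endpoint in batch $i-1$ with no endpoint of batch $i$ within $K$ hops. Its cached $\mathbf{h}_v$ uses a query built from the old $\mathbf{s}_v$, whereas full recomputation at batch $i$ uses the updated one.

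The repair is to let dirty flags survive the batch boundary: recompute the $K$-hop closure of $\mathcal{V}_{\text{direct}}^{(i-1)}\cup\mathcal{V}_{\text{direct}}^{(i)}$, or more generally of every node whose memory or adjacency changed since the entry was written. This matches the evaluation's description of dirty nodes as those ``whose memory states were updated.''

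There is a second implicit assumption of the same kind. The query time $t$ in $\phi(t-t_e)$ and the window $[t-T_w,t]$ are global inputs, not $K$-local ones. So, alongside your deterministic-sampling caveat, you must either assume query-time-independent encodings and no window expiry, or compare each entry with full recomputation at the time it was written. The paper's proof makes the same jump (``memory state unchanged from the previous batch''), so you inherit this hole rather than create it. Your invariant is exactly where it has to be fixed.

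\textbf{Delta mode.} Here you depart from the paper, which only asserts \eqref{eq:delta_error}. Your identity $\mathbf{h}_v^{\text{exact}}-\mathbf{h}_v^{\text{delta}}=\sum_{u\ \text{unch.}}(\alpha^{\text{new}}_{uv}-\alpha^{\text{old}}_{uv})\mathbf{v}_u$ is correct, but the Lipschitz step is the wrong tool. Softmax is Lipschitz over a \emph{fixed} index set, and an added or expired neighbor is not a finite logit perturbation. Worse, with the query fixed, the logits of unchanged neighbors do not move at all, so your reduction would report zero error.

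Use instead $\alpha^{\text{new}}_{uv}=\alpha^{\text{old}}_{uv}\,Z_v^{\text{old}}/Z_v^{\text{new}}$ for unchanged $u$. This bounds the error by $|1-Z_v^{\text{old}}/Z_v^{\text{new}}|\max_u\|\mathbf{v}_u\|$. With logits in $[-M,M]$ (from your bounded-input argument), $|1-Z_v^{\text{old}}/Z_v^{\text{new}}|\le e^{2M}|\Delta\mathcal{N}_v|/|\mathcal{N}_v^{\text{new}}|$. Your logit-Lipschitz term is then needed only for the additional error when $v$'s own query changes. Together these yield your target $C\,|\Delta\mathcal{N}_v|/|\mathcal{N}_v|$.

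Do not try to reproduce the paper's bound. It multiplies this already first-order quantity by a further factor $|\Delta\mathcal{N}_v|/|\mathcal{N}_v|$, whereas the true coefficient $\sum_{u\ \text{unch.}}\alpha^{\text{old}}_{uv}$ is near $1$. For a counterexample, take $n$ equal-logit neighbors sharing one unit value vector and replace one by a neighbor of logit $\ln n$. The error is $(n-1)^2/(n(2n-1))\to 1/2$, while the right-hand side of \eqref{eq:delta_error} is $O(1/n)$.

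Finally, the statement's bare ``bounded-error'' already follows per application from the convexity bound $2\max_u\|\mathbf{v}_u\|$. The quantitative rate is what your main-obstacle paragraph is really after.
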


\begin{proof}
We verify each stage:

\paragraph{Stages 1, 2, 3 (Sampling, Retrieval, Memory Read).}
For unaffected nodes $v \notin \mathcal{A}$: no edge in
$\mathcal{E}_B$ involves $v$ or its $K$-hop neighbors, so $v$'s
sampled neighborhood, features, and memory state are unchanged from
the previous batch. The cached values are therefore identical to
what full recomputation would produce.

For affected nodes $v \in \mathcal{A}$: StreamTGN performs the same
operations as full recomputation (re-sample, re-fetch, re-read),
producing identical results.

\paragraph{Stage 4 (Embedding Generation) --- Exact mode.}
For $v \notin \mathcal{A}$: $v$'s embedding depends on its $K$-hop
neighborhood, none of which has changed (by definition of
$\mathcal{A}$ in Equation~\ref{eq:affected_set}). The cached
embedding $\mathbf{h}_v^{\text{cached}}$ equals the full
recomputation result.

For $v \in \mathcal{A}$: StreamTGN recomputes temporal attention
over $v$'s complete neighborhood $\mathcal{N}_v$, which is
identical to full recomputation.

\paragraph{Stage 4 (Embedding Generation) --- Delta mode.}
The delta formulation (Equation~\ref{eq:delta_embedding}) introduces
approximation error because softmax attention has a global
normalization constant. When a neighbor is added or removed, the
denominator $Z_v = \sum_{u'} \exp(\mathbf{q}_v^\top
\mathbf{k}_{u'})$ changes, affecting all attention weights. The
approximation error for node $v$ is bounded by:
\begin{equation}
  \|\mathbf{h}_v^{\text{delta}} - \mathbf{h}_v^{\text{exact}}\|
    \leq \frac{|\Delta\mathcal{N}_v|}{|\mathcal{N}_v|}
         \cdot \max_{u} \|\mathbf{v}_u\|
         \cdot \left|1 - \frac{Z_v^{\text{old}}}{Z_v^{\text{new}}}
               \right|
  \label{eq:delta_error}
\end{equation}
When the change rate $|\Delta\mathcal{N}_v| / |\mathcal{N}_v|$ is
small (5--15\% in practice), this error is negligible. We validate
empirically in \S\ref{sec:evaluation} that the delta mode incurs
$<0.5\%$ accuracy degradation.

\paragraph{Stage 5 (Memory Update).}
Memory updates are applied only to directly involved nodes
$v \in \mathcal{V}_{\text{direct}}$, which is identical to full
recomputation since GRU updates are triggered only by new edges.
\end{proof}

\subsubsection{Convergence of Incremental Training}
\label{subsubsec:convergence}

When StreamTGN is used for online training (as opposed to inference
only), the delta approximation in Stage~4 introduces gradient
staleness. We show that convergence is preserved under mild
conditions.

\begin{theorem}[Incremental Training Convergence]
\label{thm:convergence}
Let $\theta^*$ be the optimal parameters from full-batch training
and $\tilde{\theta}_T$ the parameters after $T$ incremental updates
with maximum staleness $S$ (batches since the last full refresh).
Assume the loss $\mathcal{L}$ is $\beta$-smooth and the stochastic
gradients have bounded variance $\sigma^2$. Then:
\begin{equation}
  \|\tilde{\theta}_T - \theta^*\|
    \leq O\!\left(\varepsilon \cdot \sqrt{T}\right)
      + O\!\left(\varepsilon \cdot S \cdot T
          \cdot \frac{\bar{d}^{\,K}}{n}\right)
  \label{eq:convergence}
\end{equation}
where $\varepsilon$ is the learning rate.
\end{theorem}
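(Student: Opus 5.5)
The natural route is a perturbed-SGD argument. We compare the incremental iterates $\tilde{\theta}_t$ with the exact full-recomputation iterates $\theta_t$ driven by the same sample sequence, and use $\theta^*$ only as the common reference. We write the incremental update as
\[
\tilde{\theta}_{t+1} = \tilde{\theta}_t - \varepsilon\bigl(g_t(\tilde{\theta}_t) + b_t\bigr),
\]
where $g_t$ is the stochastic gradient that full recomputation would produce and $b_t$ is the bias from stale or delta-approximated embeddings. The proof then splits $\|\tilde{\theta}_T - \theta^*\|$ into a stochastic-noise term and a staleness-bias term, and bounds each separately.

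\textbf{Step 1: noise term.} We sum the zero-mean fluctuations $\xi_t = g_t - \nabla\mathcal{L}$. These form a martingale difference sequence with variance $\sigma^2$, so the accumulated perturbation $\varepsilon\sum_{t<T}\xi_t$ has second moment $\varepsilon^2 T\sigma^2$. This gives the $O(\varepsilon\sqrt{T})$ term, with $\sigma$ absorbed into the constant. We use $\beta$-smoothness together with $\varepsilon\le 1/\beta$ to keep the deterministic part of the recursion non-expansive, so these errors are not amplified.

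\textbf{Step 2: bias term.} We bound $\|b_t\|$ in two factors. First, we invoke Theorem~\ref{thm:correctness}: in exact mode, unaffected nodes contribute no error, so $b_t$ is supported on stale cached embeddings. Within a refresh window of at most $S$ batches, the set of nodes whose cached quantities are stale has expected size $O(S\cdot\bar{d}^{\,K})$, by the same $K$-hop expansion argument as Equation~\eqref{eq:affected_bound}, with fanout replaced by average degree. Second, the loss averages over $n$ nodes. So if per-node gradient error is bounded by a Lipschitz constant (via $\beta$-smoothness and Equation~\eqref{eq:delta_error} for the embedding error), then
\[
\|b_t\| = O\!\bigl(S\,\bar{d}^{\,K}/n\bigr).
\]
Summing the non-expanded bias contributions $\varepsilon\|b_t\|$ over $T$ steps gives $O(\varepsilon\, S\, T\,\bar{d}^{\,K}/n)$. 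Combining this with Step~1 via the triangle inequality yields Equation~\eqref{eq:convergence}. The $\theta^*$ gap itself is absorbed by assuming the full-batch trajectory stays in a neighborhood of $\theta^*$, i.e.\ we interpret the bound as the incremental trajectory's deviation.

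\textbf{Main obstacle.} The hard step is controlling error propagation in the recursion without contraction. Smoothness alone does not make $\theta\mapsto\theta-\varepsilon\nabla\mathcal{L}(\theta)$ non-expansive unless the loss is convex. Without convexity, errors can grow like $(1+\varepsilon\beta)^T$. I would first try to assume convexity implicitly, or restrict to $\varepsilon\beta T = O(1)$ so that $(1+\varepsilon\beta)^T$ is a constant. I would state this restriction explicitly as the regime in which the linear-in-$T$ bias and $\sqrt{T}$ noise bounds hold.

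A secondary difficulty is justifying $\|b_t\| = O(S\bar{d}^{\,K}/n)$ in expectation, rather than in the worst case. This would rely on uniformly random edge arrival, so that the stale fraction is the expected $K$-hop ball size over $n$.
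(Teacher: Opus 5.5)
Your proposal is essentially the paper's argument. Both use an $O(\varepsilon\sqrt{T})$ SGD baseline plus a staleness-induced gradient error supported on stale nodes, scaled by their fraction of $n$ and summed over $T$ steps with a factor $\varepsilon$. The only bookkeeping difference is that the paper uses the single-batch fraction $\mathbb{E}[|\mathcal{A}_t|/n]\le O(\bar{d}^{\,K}/n)$ and takes the factor $S$ from a cited ``standard delayed-SGD bound'', whereas you let the stale set grow to $O(S\bar{d}^{\,K})$ over the refresh window.

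The obstacles you flag are genuine: non-expansiveness needs convexity or $\varepsilon\beta T=O(1)$, the initial-distance term $\|\theta_0-\theta^*\|$ is not covered by $O(\varepsilon\sqrt{T})$, and the $1/n$ scaling needs random arrivals and bounded per-node embedding error. The paper does not resolve them either; it covers those steps only by appeal to ``standard'' SGD results, so stating the extra hypotheses explicitly is the right call.
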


\begin{proof}
Standard SGD convergence on $\beta$-smooth functions gives:
\begin{equation}
  \|\theta_T - \theta^*\| \leq O(\varepsilon \cdot \sqrt{T})
\end{equation}

With staleness $S$, each gradient $\tilde{\nabla}\mathcal{L}_t$ is
computed using embeddings that may be up to $S$ batches stale. The
gradient error decomposes as:
\begin{equation}
  \|\nabla\mathcal{L}_t - \tilde{\nabla}\mathcal{L}_t\|
    \leq \beta \cdot \sum_{v \in \mathcal{A}_t}
         \|\mathbf{h}_v^{\text{exact}} - \mathbf{h}_v^{\text{stale}}\|
\end{equation}

From the affected set analysis, at each step $t$ the expected
fraction of nodes with stale embeddings is:
\begin{equation}
  \mathbb{E}\!\left[\frac{|\mathcal{A}_t|}{n}\right]
    \leq O\!\left(\frac{\bar{d}^{\,K}}{n}\right)
\end{equation}

Substituting into the standard delayed-SGD bound:
\begin{align}
  \|\tilde{\theta}_T - \theta^*\|
    &\leq O(\varepsilon \cdot \sqrt{T})
      + O\!\left(\varepsilon \cdot S
          \cdot \sum_{t=1}^{T}
            \|\nabla\mathcal{L}_t
              - \tilde{\nabla}\mathcal{L}_t\|\right)
      \nonumber \\
    &\leq O(\varepsilon \cdot \sqrt{T})
      + O\!\left(\varepsilon \cdot S \cdot T
          \cdot \frac{\bar{d}^{\,K}}{n}\right)
\end{align}
\end{proof}

\begin{corollary}[Practical Convergence Guarantee]
\label{cor:convergence}
For sparse temporal graphs where $\bar{d}^{\,K} \ll n$, the
staleness term in Equation~\eqref{eq:convergence} vanishes
relative to the standard SGD term. With periodic full refreshes
every $S$ batches, incremental training converges to the same
solution as full training up to $O(\varepsilon \cdot \sqrt{T})$
error. In particular, setting $S = O(\sqrt{n / \bar{d}^{\,K}})$
ensures the staleness contribution does not exceed the baseline SGD
error.
\end{corollary}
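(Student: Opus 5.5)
The corollary follows from Theorem~\ref{thm:convergence}. I would take Equation~\eqref{eq:convergence} as given and compare its two terms. The baseline term is $O(\varepsilon\sqrt{T})$ and the staleness term is $O(\varepsilon S T \bar{d}^{\,K}/n)$. The first claim, that the staleness term vanishes relative to the SGD term, amounts to the ratio
\[
\frac{\varepsilon S T \bar{d}^{\,K}/n}{\varepsilon\sqrt{T}} = S\sqrt{T}\,\frac{\bar{d}^{\,K}}{n}
\]
being $o(1)$. I would state this precisely: it holds whenever $S\sqrt{T} = o(n/\bar{d}^{\,K})$, and in particular for fixed $S$ and $T$ as $\bar{d}^{\,K}/n \to 0$ in the sparse regime $\bar{d}^{\,K} \ll n$. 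Under that condition the right-hand side of Equation~\eqref{eq:convergence} collapses to $O(\varepsilon\sqrt{T})$, so incremental training reaches full-training quality up to the standard SGD error.

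Next I would justify the role of periodic full refreshes. A refresh every $S$ batches resets every cached embedding to its exact value, so the staleness at any step is at most $S$. This is exactly the hypothesis ``maximum staleness $S$'' of Theorem~\ref{thm:convergence}, which makes the theorem applicable. This step is only bookkeeping, but it should be stated explicitly so that $S$ in the corollary and $S$ in the theorem refer to the same quantity.

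For the explicit choice of $S$, I would require the staleness term to be at most the baseline term,
\[
\varepsilon S T\,\frac{\bar{d}^{\,K}}{n} \le \varepsilon\sqrt{T},
\qquad\text{i.e.}\qquad S \le \frac{n}{\bar{d}^{\,K}\sqrt{T}}.
\]
This is the main obstacle. The stated choice $S = O(\sqrt{n/\bar{d}^{\,K}})$ has no $T$-dependence, so it satisfies the inequality only when
\[
\sqrt{n/\bar{d}^{\,K}} \lesssim \frac{n}{\bar{d}^{\,K}\sqrt{T}}, \qquad\text{i.e.}\qquad T \lesssim n/\bar{d}^{\,K}.
\]
I would make this horizon assumption explicit. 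One way to read it is as a single pass over a stream whose length is comparable to $n/\bar{d}^{\,K}$ batches. Alternatively, I would state the $T$-dependent choice $S = O\bigl(n/(\bar{d}^{\,K}\sqrt{T})\bigr)$ and observe that it reduces to $\sqrt{n/\bar{d}^{\,K}}$ when $T \asymp n/\bar{d}^{\,K}$. In either form, substituting back gives a total bound of $O(\varepsilon\sqrt{T})$, which completes the argument.
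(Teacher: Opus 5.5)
Your proposal is correct and follows the same route as the paper, which gives no separate proof and treats the corollary as an immediate comparison of the two terms of Equation~\eqref{eq:convergence} in Theorem~\ref{thm:convergence}. The extra conditions you make explicit are genuinely needed and are silently omitted from the paper's statement: $S\sqrt{T} = o(n/\bar{d}^{\,K})$ for the ``vanishes'' claim, and either the horizon $T \lesssim n/\bar{d}^{\,K}$ or the $T$-dependent choice $S = O(n/(\bar{d}^{\,K}\sqrt{T}))$ for the stated $S = O(\sqrt{n/\bar{d}^{\,K}})$.
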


\subsubsection{Summary}
\label{subsubsec:complexity_summary}

The analysis establishes three key results. First,
StreamTGN achieves a speedup of $n / |\mathcal{A}|$ over full
recomputation (Theorem~\ref{thm:speedup}), which ranges from
$25\times$ to $2500\times$ depending on graph size, batch size, and
model depth (Table~\ref{tab:speedup_examples}). Second,
this speedup is near-optimal: StreamTGN's complexity is within an
$O(d)$ factor of the information-theoretic lower bound
(Theorems~\ref{thm:lower_bound} and~\ref{thm:optimality}), and the
incremental output is either identical to or within bounded error of
full recomputation (Theorem~\ref{thm:correctness}). Third,
incremental training converges to the same solution as full training
on sparse graphs (Theorem~\ref{thm:convergence}), with the
staleness error controlled by periodic full refreshes.
\subsection{Drift-Aware Adaptive Rebuild Scheduling}
\label{subsec:drift_aware}

The incremental computation described in
\S\ref{subsec:incremental_computation} recomputes embeddings only for
nodes in the affected set $\mathcal{A}$.
While Theorem~\ref{thm:correctness} guarantees correctness in exact
mode, practical deployments may use the delta approximation
(Equation~\ref{eq:delta_embedding}) for higher throughput.
Over many consecutive batches, the accumulated approximation error
---which we call \emph{embedding drift}---can grow unboundedly if left
unchecked.
A periodic full rebuild eliminates all drift, but rebuilding too
frequently wastes computation, while rebuilding too infrequently
risks accuracy degradation.
This section presents an adaptive scheduling mechanism that monitors
drift in real time and triggers rebuilds only when necessary.

\subsubsection{Drift Definition and Monitoring}
\label{subsubsec:drift_definition}

We define the \emph{embedding drift} of node $v$ at time $t$ as the
$\ell_2$ distance between its incrementally maintained embedding and
the embedding that a full recomputation would produce:
\begin{equation}
  \delta_v(t) = \|\mathbf{h}_v^{\text{incr}}(t)
    - \mathbf{h}_v^{\text{full}}(t)\|_2
  \label{eq:drift_per_node}
\end{equation}
Computing $\delta_v(t)$ exactly requires full recomputation, which
defeats the purpose.
Instead, StreamTGN maintains a lightweight \emph{drift estimator}
based on accumulated neighborhood changes.

\begin{definition}[Drift Estimator]
\label{def:drift_estimator}
For each node $v$, the estimated drift after processing $\tau$
batches since the last rebuild is:
\begin{equation}
  \hat{\delta}_v(\tau) = \sum_{j=1}^{\tau}
    \frac{|\Delta\mathcal{N}_v^{(j)}|}{|\mathcal{N}_v|}
    \cdot \gamma^{\tau - j}
  \label{eq:drift_estimator}
\end{equation}
where $|\Delta\mathcal{N}_v^{(j)}|$ is the number of neighbor
changes at batch $j$ and $\gamma \in (0, 1)$ is a decay factor
that discounts older changes (since their effect diminishes as
subsequent updates partially correct the drift).
\end{definition}

The global drift metric aggregates over all nodes:
\begin{equation}
  \hat{\Delta}(\tau) = \frac{1}{|\mathcal{A}_\tau|}
    \sum_{v \in \mathcal{A}_\tau} \hat{\delta}_v(\tau)
  \label{eq:global_drift}
\end{equation}
where $\mathcal{A}_\tau$ is the cumulative affected set over the
last $\tau$ batches.
Computing $\hat{\Delta}(\tau)$ requires only a running sum and
counter per node---$O(|\mathcal{A}|)$ additional work per batch,
which is negligible compared to the embedding computation.

\subsubsection{Adaptive Rebuild Policy}
\label{subsubsec:rebuild_policy}

StreamTGN triggers a rebuild when the estimated drift exceeds a
user-specified threshold $\delta_{\max}$:
\begin{equation}
  \text{Rebuild if } \hat{\Delta}(\tau) > \delta_{\max}
  \label{eq:rebuild_trigger}
\end{equation}
Upon triggering, the system performs one of two actions depending on
the drift distribution:

\noindent\textbf{Partial rebuild.}
If drift is concentrated in a small subset
$\mathcal{V}_{\text{drift}} = \{v : \hat{\delta}_v > \delta_{\max}\}$
with $|\mathcal{V}_{\text{drift}}| < \alpha \cdot n$ (where $\alpha$
is a configurable threshold, e.g., 10\%), then only
$\mathcal{V}_{\text{drift}}$ is fully recomputed.
Cost: $O(|\mathcal{V}_{\text{drift}}| \cdot L \cdot K \cdot d^2)$.

\noindent\textbf{Full rebuild.}
If $|\mathcal{V}_{\text{drift}}| \geq \alpha \cdot n$, the system
recomputes all node embeddings from scratch.
Cost: $O(n \cdot L \cdot K \cdot d^2)$.

After either rebuild, all drift estimators are reset to zero and
the embedding cache is fully refreshed.

\begin{theorem}[Drift Bound]
\label{thm:drift_bound}
Under the adaptive rebuild policy with threshold $\delta_{\max}$
and decay $\gamma$, the maximum embedding drift for any node $v$
at any time $t$ is bounded by:
\begin{equation}
  \delta_v(t) \leq \frac{\delta_{\max}}{1 - \gamma}
    \cdot \max_{u \in \mathcal{N}_v} \|\mathbf{v}_u\|
  \label{eq:drift_bound}
\end{equation}
\end{theorem}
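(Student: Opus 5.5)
Our plan is to link the true drift $\delta_v(t)$ to the estimator $\hat{\delta}_v(\tau)$ of Definition~\ref{def:drift_estimator} through the per-batch delta error bound of Theorem~\ref{thm:correctness}. Then we use the rebuild trigger~\eqref{eq:rebuild_trigger} to cap the estimator, and finally sum the geometric series produced by the decay $\gamma$. Fix a node $v$ and let $\tau$ be the number of batches since the last (partial or full) rebuild that refreshed $v$. At that rebuild, $\mathbf{h}_v^{\text{incr}}=\mathbf{h}_v^{\text{full}}$, so $\delta_v=0$. Let $M_v=\max_{u\in\mathcal{N}_v}\|\mathbf{v}_u\|$.

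\textbf{Step 1 (one-batch error).} For batch $j$, Equation~\eqref{eq:delta_error} gives
\[
\epsilon_v^{(j)} \le \frac{|\Delta\mathcal{N}_v^{(j)}|}{|\mathcal{N}_v|}\, M_v \left|1-\frac{Z_v^{\text{old}}}{Z_v^{\text{new}}}\right| \le \frac{|\Delta\mathcal{N}_v^{(j)}|}{|\mathcal{N}_v|}\, M_v .
\]
The last inequality needs the normalization factor to be at most $1$. This holds when neighbors are only added, since then $Z^{\text{new}}\ge Z^{\text{old}}$. For the general case we will assume, or absorb into the constant, the standing condition $|1-Z^{\text{old}}/Z^{\text{new}}|\le 1$.

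\textbf{Step 2 (propagation of accumulated error).} Write
\[
\mathbf{h}_v^{\text{incr}}(t)-\mathbf{h}_v^{\text{full}}(t)=\sum_{j=1}^{\tau} e_j ,
\]
where $e_j$ is the error injected at batch $j$ as transported by the later updates. We will argue that each subsequent exact or delta refresh of $v$'s attention contracts an earlier error by a factor $\gamma$. The justification is that the softmax reweighting redistributes mass onto fresh neighbors whose values are recomputed exactly. This gives $\|e_j\|\le \gamma^{\tau-j}\epsilon_v^{(j)}$, which is exactly the modeling assumption built into the decay in Definition~\ref{def:drift_estimator}. Combining with Step 1 and the triangle inequality,
\[
\delta_v(t)\le M_v\,\hat{\delta}_v(\tau).
\]
This contraction step is the main obstacle: nothing earlier in the paper proves it. We would state it as an explicit hypothesis, namely that the incremental update map is $\gamma$-Lipschitz in the stale component.

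\textbf{Step 3 (capping the estimator).} The trigger compares the global average $\hat{\Delta}$ with $\delta_{\max}$, but the partial-rebuild rule refreshes every node with $\hat{\delta}_v>\delta_{\max}$. So immediately after any batch on which $v$ is not rebuilt, its estimator after the previous batch satisfied $\hat{\delta}_v(\tau-1)\le\delta_{\max}$. The single new increment is $|\Delta\mathcal{N}_v^{(\tau)}|/|\mathcal{N}_v|\le 1$, so under a per-batch normalization $\hat{\delta}_v(\tau)\le\gamma\delta_{\max}+\delta_{\max}$.

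Iterating this recursion $x_\tau\le\gamma x_{\tau-1}+\delta_{\max}$ from $x_0=0$, and assuming the per-batch change fraction is itself at most $\delta_{\max}$, gives
\[
\hat{\delta}_v(\tau)\le \delta_{\max}\sum_{i\ge0}\gamma^i=\frac{\delta_{\max}}{1-\gamma}.
\]
Substituting into Step 2 yields~\eqref{eq:drift_bound}.

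The second delicate point is this per-batch cap. If a single batch can change more than a $\delta_{\max}$ fraction of $\mathcal{N}_v$, we would enforce it by splitting the batch, or by checking the trigger per node before committing. This makes the bound hold at every time $t$, not just at batch boundaries.
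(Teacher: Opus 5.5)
Your three ingredients are the paper's. Its proof asserts that the trigger caps the estimator at $\delta_{\max}$, that the geometric series supplies $1/(1-\gamma)$, and that by Equation~\eqref{eq:delta_error} each unit of estimator costs at most $\max_u\|\mathbf{v}_u\|$ of drift. It justifies neither the per-node cap (it passes from the global trigger on $\hat{\Delta}(\tau)$ to a bound on every $\hat{\delta}_v$ without comment) nor the estimator-to-drift link. So the hypotheses you flag are ones the paper uses silently, and you are not missing an idea it has. The problem is that your attempts to supply these two links do not work.

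\textbf{Step~3.} Partial rebuild of $\mathcal{V}_{\text{drift}}$ runs only after the \emph{global} trigger $\hat{\Delta}(\tau)>\delta_{\max}$ fires. A single node can keep $\hat{\delta}_v>\delta_{\max}$ indefinitely while the average over $\mathcal{A}_\tau$ stays below threshold. So ``$\hat{\delta}_v(\tau-1)\le\delta_{\max}$ whenever $v$ is not rebuilt'' is false for the stated policy. Your fallback cap $c_j:=|\Delta\mathcal{N}_v^{(j)}|/|\mathcal{N}_v|\le\delta_{\max}$ gives $\hat{\delta}_v(\tau)=\sum_j c_j\gamma^{\tau-j}<\delta_{\max}/(1-\gamma)$ even if no rebuild ever happens. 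With that cap the adaptive policy plays no role, and you are proving a different statement, one conditioned on the input stream.

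\textbf{Step~2.} This step fails even under the contraction hypothesis you propose. The estimator discounts by $\gamma$ in every batch. But in a batch with $v\notin\mathcal{A}$, the cached $\mathbf{h}_v$ is untouched and its error is frozen. A $\gamma$-contraction per refresh of $v$ therefore gives only $\gamma^{r_j}$, where $r_j$ is the number of later refreshes, not $\gamma^{\tau-j}$. Concretely, $\delta_v\le M_v\hat{\delta}_v(\tau)$ breaks for a node delta-updated once and then left alone: its error stays $\epsilon_v^{(1)}>0$ while the estimator $c_1\gamma^{\tau-1}\to0$.

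The mechanism also supplies no uniform $\gamma$. Equation~\eqref{eq:delta_embedding} carries $\mathbf{h}_v^{\text{cached}}$ forward with coefficient $1$. A mis-normalized old weight is repaired only if that particular neighbor reappears in $\mathcal{N}_v^{\text{upd}}$.

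Some decay assumption is nonetheless indispensable. Suppose every node's change fraction is at most $c<(1-\gamma)\delta_{\max}$ in every batch. Then every $\hat{\delta}_v$, and hence $\hat{\Delta}$, stays below $\delta_{\max}$ forever, and no rebuild fires. Yet Equation~\eqref{eq:delta_error} with additive carry-forward bounds the drift of a node hit in each batch only by $M_v\tau c$, which is unbounded in $\tau$.

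The statement is therefore obtainable only as a conditional result. One option is to posit $\delta_v(t)\le M_v\hat{\delta}_v(\tau)$ as a modeling axiom and rebuild each node as soon as its own $\hat{\delta}_v$ exceeds $\delta_{\max}$. This even yields $M_v\delta_{\max}$ after each check. You should state the result that way rather than try to derive the axiom from the paper's delta rule.
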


\begin{proof}
Between consecutive rebuilds, the drift accumulates according to
Equation~\eqref{eq:drift_estimator}.
The rebuild is triggered when $\hat{\Delta}(\tau) > \delta_{\max}$,
so the maximum accumulated estimator value before a rebuild is
$\delta_{\max}$.
The geometric series $\sum_{j=0}^{\infty} \gamma^j = 1/(1-\gamma)$
bounds the total accumulated weight.
From Equation~\eqref{eq:delta_error}, each unit of estimated drift
translates to at most $\max_u \|\mathbf{v}_u\|$ embedding error.
Combining these yields Equation~\eqref{eq:drift_bound}.
\end{proof}

\subsubsection{Comparison with Fixed Scheduling}
\label{subsubsec:fixed_vs_adaptive}

A naive alternative is to rebuild every $R$ batches regardless of
drift.
This approach either wastes computation (if drift is low and the
rebuild was unnecessary) or allows drift to exceed acceptable
levels (if the graph is evolving faster than expected).

\begin{property}[Adaptive vs.\ Fixed Scheduling]
\label{prop:adaptive_vs_fixed}
Let $T$ be the total number of batches and $R_{\text{adaptive}}$
the average rebuild interval under the adaptive policy.
If edge arrival rates vary by a factor $\rho$ across the stream
(i.e., some periods have $\rho\times$ more edges than others), then:
\begin{equation}
  R_{\text{adaptive}} \geq R_{\text{fixed}} \cdot
    \frac{\rho + 1}{2\rho}
\end{equation}
In the common case where $\rho \geq 3$ (bursty traffic),
the adaptive policy reduces the total number of rebuilds by
$\geq$33\% compared to the fixed policy while maintaining the
same drift bound.
\end{property}

Our ablation study (\S\ref{sec:eval_sensitivity}) empirically
validates this result: StreamTGN's speedup and accuracy remain
remarkably stable across rebuild intervals from ``per-batch'' to
``never,'' confirming that the drift accumulates slowly on real-world
temporal graphs and that adaptive scheduling can defer rebuilds
significantly without accuracy loss.

\subsection{Batched Streaming with Relaxed Ordering}
\label{subsec:batched_streaming}

Temporal GNN computation is inherently sequential: the memory state
$\mathbf{m}_v$ after processing edge $(u, v, t_i)$ depends on the
state after processing all edges with $t < t_i$.
Strict sequential processing---one edge at a time---respects this
dependency perfectly but achieves poor GPU utilization, as each
edge's computation cannot overlap with others.
This section presents a \emph{relaxed ordering} scheme that groups
edges into batches and processes them in parallel, trading a bounded
amount of temporal precision for substantially higher throughput.

\subsubsection{Batch Formation and Logical Timestamps}
\label{subsubsec:batch_formation}

StreamTGN groups incoming edges into batches of size $B$ based on
arrival order.
All edges within a batch are assigned the same \emph{logical
timestamp} $t_{\text{batch}}$, defined as the maximum timestamp in
the batch:
\begin{equation}
  t_{\text{batch}} = \max_{(u, v, t) \in \mathcal{E}_B} t
  \label{eq:logical_ts}
\end{equation}
This means that edges within the same batch are treated as
concurrent events, and their relative ordering is relaxed.
The ordering between batches is strictly maintained.

\begin{definition}[Staleness]
\label{def:staleness}
An edge $(u, v, t_i)$ in batch $\mathcal{E}_B$ has \emph{staleness}
\begin{equation}
  s_i = t_{\text{batch}} - t_i
\end{equation}
which measures the temporal imprecision introduced by batching.
The maximum staleness within a batch is bounded by:
\begin{equation}
  s_{\max} = t_{\text{batch}} - \min_{(u,v,t) \in \mathcal{E}_B} t
  \label{eq:max_staleness}
\end{equation}
\end{definition}

\subsubsection{Parallel Batch Processing}
\label{subsubsec:parallel_batch}

Within a batch, StreamTGN processes all $B$ edges in parallel using
three key optimizations:

\paragraph{Parallel memory read.}
All source and destination nodes in the batch read their memory
states simultaneously from the persistent GPU tensor.
Since edges within a batch share a logical timestamp, no edge
depends on another edge's memory update within the same batch.

\paragraph{Parallel embedding generation.}
The temporal attention computation for all affected nodes is
launched as a single batched GPU kernel.
For memory-based models (TGN), the intra-batch memory dependency
is resolved by using the \emph{pre-batch} memory state for all
edges:
\begin{equation}
  \mathbf{h}_v(t_{\text{batch}})
    = \textsc{Attn}\!\left(\mathbf{q}_v(\mathbf{m}_v^{\text{pre}}),\;
      \{\mathbf{k}_u(\mathbf{m}_u^{\text{pre}}),
        \mathbf{v}_u(\mathbf{m}_u^{\text{pre}})\}_{u \in
        \mathcal{N}_v}\right)
  \label{eq:parallel_embed}
\end{equation}
where $\mathbf{m}_v^{\text{pre}}$ is $v$'s memory state before the
current batch.
This is equivalent to the ``batch staleness'' model used by
TGL~\cite{zhou2022tgl}, but StreamTGN applies it only to the
affected set rather than the full graph.

\paragraph{Parallel memory update.}
After prediction, memory updates for all directly involved nodes
are computed in parallel:
\begin{equation}
  \mathbf{m}_v^{\text{post}} = \text{GRU}\!\left(
    \mathbf{m}_v^{\text{pre}},\;
    \text{Aggregate}(\{\text{msg}_{uv} \mid
      (u,v,t) \in \mathcal{E}_B\})\right)
\end{equation}
When multiple edges in the same batch involve the same node $v$,
their messages are aggregated before the GRU update, ensuring a
single atomic state transition per batch.

\subsubsection{Accuracy Bound under Relaxed Ordering}
\label{subsubsec:accuracy_bound}

The relaxed ordering introduces prediction error compared to strict
sequential processing.
We bound this error as a function of the batch size $B$.

\begin{theorem}[Bounded Prediction Error]
\label{thm:staleness_bound}
Let $\hat{y}_i^{\text{seq}}$ be the prediction for edge $i$ under
strict sequential processing and $\hat{y}_i^{\text{batch}}$ the
prediction under batched processing with batch size $B$.
If the temporal attention function is $\lambda$-Lipschitz with
respect to node memory states and the memory update function has
bounded step size $\eta$, then:
\begin{equation}
  |\hat{y}_i^{\text{batch}} - \hat{y}_i^{\text{seq}}|
    \leq \lambda \cdot \eta \cdot B \cdot K
  \label{eq:staleness_error}
\end{equation}
\end{theorem}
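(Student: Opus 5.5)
We prove the bound by tracking how far the memory states used in the batched regime can lag behind those of the sequential regime, and then pushing that discrepancy through the $K$ attention layers using the Lipschitz hypothesis. Fix an edge $i$ in batch $\mathcal{E}_B$. Under strict sequential processing, the prediction $\hat{y}_i^{\text{seq}}$ is computed from memory states that already reflect every earlier edge in the batch. Under batched processing, Equation~\eqref{eq:parallel_embed} instead uses the pre-batch states $\mathbf{m}^{\text{pre}}$. The only source of difference is therefore the set of intra-batch edges preceding $i$, and there are at most $B-1 < B$ of them.

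\textbf{Step 1 (memory discrepancy).} For any node $w$, we bound $\|\mathbf{m}_w^{\text{seq}} - \mathbf{m}_w^{\text{pre}}\|$, where $\mathbf{m}_w^{\text{seq}}$ is the state seen by edge $i$ in the sequential regime. We interpret the hypothesis of bounded step size $\eta$ as saying that a single memory update moves a state by at most $\eta$ in norm, i.e.\ $\|\text{GRU}(\bar{\mathbf{m}},\mathbf{s})-\mathbf{s}\|\le\eta$. For the GRU of Equation~\eqref{eq:gru_detail} this holds naturally: $\mathbf{s}(t)-\mathbf{s}(t^-)=(1-\mathbf{z})\odot(\tilde{\mathbf{s}}-\mathbf{s}(t^-))$, and both the gates and $\tanh$ are bounded. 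Each preceding edge triggers at most one update per endpoint, so telescoping over the at most $B$ preceding edges gives $\|\mathbf{m}_w^{\text{seq}}-\mathbf{m}_w^{\text{pre}}\|\le \eta B$ for every $w$, uniformly.

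\textbf{Step 2 (propagation through layers).} We would argue layer by layer. Layer $\ell$ of the attention reads memory states of nodes in the $\ell$-hop sampled neighbourhood, together with the layer-$(\ell-1)$ embeddings. We read the $\lambda$-Lipschitz assumption as applying per layer with respect to the memory inputs. Each layer then injects at most $\lambda\eta B$ of fresh discrepancy from stale memories, and these contributions add across the $K$ layers, giving $\lambda\eta B K$.

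\textbf{Step 3 (output).} Finally, we fold the prediction head into $\lambda$, either by assuming it is $1$-Lipschitz or by absorbing its constant into $\lambda$. This yields Equation~\eqref{eq:staleness_error}.

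The main obstacle is Step 2. A literal composition of $K$ $\lambda$-Lipschitz layers compounds multiplicatively, giving something like $\lambda^K\eta B$ rather than the linear factor $\lambda K$. To obtain the additive form, we plan to interpret $\lambda$ as the Lipschitz constant of the \emph{total} map from the collection of memory states, under the sup-norm over the $K$-hop receptive field, to the prediction. We would then decompose the error by a hybrid argument: replace stale memories one hop-level at a time, so that each of the $K$ levels contributes at most $\lambda\eta B$, and sum via the triangle inequality. If this interpretation proves too strong, the fallback is to state the bound with $\lambda$ normalized so that the per-layer Lipschitz constants are at most $1$, which is plausible for attention with bounded values since softmax-weighted averages are nonexpansive in the values. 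The $K$ factor then arises purely from counting levels.
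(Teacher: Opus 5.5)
Your Step 1 matches the paper's: the memory gap is at most $i\eta\le B\eta$, and the prediction head is treated as non-expansive. Your Step 2 takes a different route.

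The paper reads $\lambda$ as a per-layer constant and composes the $K$ layers multiplicatively, giving an embedding gap of $\lambda^K\eta B$. It then invokes $\lambda\le1$ (attributed to softmax normalisation), so that $\lambda^K\le\lambda\le\lambda K$. You derived $\lambda^K\eta B$ yourself and treated it as the obstacle. Under your fallback assumption that per-layer constants are at most $1$, that one inequality already closes the proof, so the hybrid argument is not needed.

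What your additive route does buy is a different, somewhat weaker hypothesis. Write it as the recursion $e_\ell\le\lambda\eta B+\mu\,e_{\ell-1}$, where $\mu$ is the Lipschitz constant of a layer in its previous-layer embedding input. This gives $e_K\le\lambda\eta B\sum_{j<K}\mu^j\le K\lambda\eta B$ using only $\mu\le1$, with $\lambda$ unrestricted. It also shows where a genuinely linear factor $K$ comes from: memory being re-read at every layer. In the paper's route, by contrast, the factor $K$ is pure slack.

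Step 2 as you first wrote it (``contributions add across the $K$ layers'') silently drops the $\mu\,e_{\ell-1}$ term. The condition $\mu\le1$ is exactly what makes that step valid.

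Under your primary reading, where $\lambda$ is a Lipschitz constant of the whole map in the sup-norm over the receptive field, the hybrid decomposition is unnecessary and slightly off. Every memory in the field moves by at most $\eta B$, so you get $\lambda\eta B$ directly. A level-by-level hybrid over hops $0,\dots,K$, on the other hand, has $K+1$ steps, not $K$.

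Finally, both your argument and the paper's compare the two regimes within a single batch, starting from a common pre-batch memory. Neither accounts for the intra-batch edges also changing the sampled neighbourhoods under sequential processing.
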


\begin{proof}
Under strict sequential processing, edge $i$ in the batch uses
memory state $\mathbf{m}_v^{(i)}$, which incorporates updates from
edges $1, \ldots, i-1$.
Under batched processing, edge $i$ uses $\mathbf{m}_v^{\text{pre}}$,
which does not incorporate any intra-batch updates.
The memory difference is:
\begin{equation}
  \|\mathbf{m}_v^{(i)} - \mathbf{m}_v^{\text{pre}}\|
    \leq i \cdot \eta
    \leq B \cdot \eta
\end{equation}
where $\eta$ bounds the per-edge memory step size.
Through $K$ layers of $\lambda$-Lipschitz attention, this memory
difference translates to an embedding difference of at most
$\lambda^K \cdot B \cdot \eta$.
For the link prediction output (which is a linear function of the
concatenated source and destination embeddings), the prediction
error is bounded by
$\lambda \cdot \eta \cdot B \cdot K$, where we use the simplified
bound with $\lambda \cdot K$ replacing $\lambda^K$ under the
assumption $\lambda \leq 1$ (which holds for attention with
softmax normalization).
\end{proof}

\begin{corollary}[Throughput--Accuracy Tradeoff]
\label{cor:tradeoff}
The batch size $B$ controls the tradeoff between throughput and
prediction accuracy.
\emph{Throughput} scales linearly: $B$ edges are processed
in parallel per GPU kernel launch, yielding throughput
$\Theta(B)$ edges per kernel.
\emph{Accuracy} degrades linearly: the maximum prediction
error grows as $O(B)$ per Theorem~\ref{thm:staleness_bound}.
In practice, the error bound is loose because (1)~most edges in a
batch involve distinct nodes, so intra-batch memory conflicts are
rare, and (2)~the temporal locality of real-world graphs means
consecutive edges are often minutes or hours apart, not
milliseconds.
Our evaluation (\S\ref{sec:eval_sensitivity}) shows that batch
sizes up to $B{=}1000$ incur negligible accuracy degradation
(AP drops by less than 1\%).
\end{corollary}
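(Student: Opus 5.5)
The corollary has two quantitative claims, one for throughput and one for accuracy, plus informal remarks. I will prove the two claims separately and treat the remarks as a refinement of the accuracy bound, not as statements needing proof.

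\emph{Throughput.} I will use the cost model of \S\ref{subsubsec:parallel_batch}. Because every edge in $\mathcal{E}_B$ reads the pre-batch state $\mathbf{m}^{\text{pre}}$ (Equation~\eqref{eq:parallel_embed}), there is no intra-batch data dependency. So the memory read, the attention computation over the affected set, and the aggregated GRU update can each be issued as one batched kernel covering all $B$ edges. The count of kernel launches per batch is then a constant $c_0$ that does not depend on $B$. Dividing $B$ processed edges by $c_0$ launches gives $\Theta(B)$ edges per kernel. The lower bound is immediate. For the upper bound, each launch processes edges of at most one batch. I will state the standing assumption that $B$ stays within the GPU's parallel capacity, so one launch really covers the batch without being split into several. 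Under that assumption the ``linear'' throughput claim holds per launch. I will not claim a wall-clock statement.

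\emph{Accuracy.} I will invoke Theorem~\ref{thm:staleness_bound}, which gives $|\hat{y}_i^{\text{batch}} - \hat{y}_i^{\text{seq}}| \leq \lambda\eta B K$ for every edge $i$ in the batch. I will fix the model parameters $\lambda,\eta,K$ and take the maximum over $i \in \mathcal{E}_B$. This gives $\max_i |\hat{y}_i^{\text{batch}} - \hat{y}_i^{\text{seq}}| = O(B)$, which is the claimed linear degradation of the worst-case error bound. I will state explicitly that this is an upper bound that grows linearly, not a claim that the actual error grows linearly.

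\emph{Looseness remarks.} To back remark (1), I will sharpen the memory-difference step inside the proof of Theorem~\ref{thm:staleness_bound}. Only intra-batch updates to nodes that edge $i$ actually reads can change its inputs. So the factor $i\cdot\eta$ can be replaced by $c_i\cdot\eta$, where $c_i$ counts the earlier edges in the batch that share an endpoint with edge $i$ or with its sampled neighborhood. This gives the refined bound $\lambda\eta c_i K$. When endpoints within a batch are mostly distinct, $c_i$ is much smaller than $B$.

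For remark (2), the analogous refinement would bound $\eta$ through the time-encoding input $\phi(t - t^{\text{last}})$ and the spacing between edges. I would present this only as a heuristic.

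The $B{=}1000$ statement is empirical and is deferred to \S\ref{sec:eval_sensitivity}.

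\emph{Main obstacle.} The hard part is making the throughput claim rigorous, since it depends on a hardware cost model the paper does not formalize. I would handle it by stating the constant-launch assumption up front and proving $\Theta(B)$ only relative to that assumption.
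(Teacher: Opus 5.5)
Your proposal is correct and matches the paper's own justification, which is given inline in the corollary itself: the throughput claim follows from the pre-batch memory reads of Equation~\eqref{eq:parallel_embed}, which let all $B$ edges go through one batched kernel, and the $O(B)$ error is Theorem~\ref{thm:staleness_bound} with $\lambda,\eta,K$ held fixed and $i\le B$. Your $c_i\eta$ sharpening of the memory-difference step is a sound formalization of remark~(1), which the paper only states informally, and you are more precise than the paper in calling $O(B)$ a bound rather than the actual growth of the error.
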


\subsubsection{Comparison with Strict Sequential Processing}
\label{subsubsec:seq_vs_batch}

Table~\ref{tab:batch_vs_seq} summarizes the tradeoff between strict
sequential and batched processing.

\begin{table}[t]
\centering
\caption{Strict sequential vs.\ batched streaming processing.}
\label{tab:batch_vs_seq}
\small
\setlength{\tabcolsep}{3pt}
\begin{tabular}{@{}lcc@{}}
\toprule
\textbf{Property} & \textbf{Sequential} & \textbf{Batched ($B$)} \\
\midrule
Edges per kernel   & 1              & $B$ \\
GPU utilization    & Low ($<$10\%)  & High ($>$80\%) \\
Memory consistency & Exact          & Bounded error \\
Max prediction error & 0            & $O(\lambda \eta B K)$ \\
Throughput         & $\Theta(1)$    & $\Theta(B)$ \\
\bottomrule
\end{tabular}
\end{table}

The key insight is that batched streaming is not an approximation
unique to StreamTGN---it is the \emph{same batching model} used by
TGL, ETC, SIMPLE, and SWIFT for training.
StreamTGN's contribution is applying it to the incremental inference
pipeline in combination with the affected set optimization: within
each batch, only $|\mathcal{A}|$ nodes (rather than all $n$) undergo
the parallel computation, achieving both high GPU utilization
\emph{and} minimal redundant work.
\section{Evaluation}
\label{sec:evaluation}

We evaluate StreamTGN on eight real-world temporal graphs spanning
four orders of magnitude in scale, from Bitcoin (6K nodes, 60K edges)
to Stack-Overflow (2.6M nodes, 48M edges), as summarized in
Table~\ref{tab:datasets}.
All experiments are conducted on a single server equipped with an
NVIDIA RTX 4090 GPU (24\,GB VRAM) and 102\,GB system memory.

Our evaluation is organized into three parts.
First, we compare StreamTGN with four state-of-the-art temporal graph
learning systems---TGL~\cite{zhou2022tgl}, ETC~\cite{gao2024etc}, SIMPLE~\cite{gao2024simple}, and
SWIFT~\cite{guo2025swift}, by combining training-phase and inference-phase
optimizations into end-to-end pipelines.
Second, we evaluate StreamTGN across three representative T-GNN
architectures: TGN, TGAT, and DySAT---on five datasets to
demonstrate its generality across model designs.
Third, we analyze StreamTGN's sensitivity to key parameters (batch
size, neighbor count, rebuild interval) and profile its per-stage
pipeline breakdown.
\begin{table}[!t]
\centering
\caption{Dataset statistics.
$|V|$: number of nodes; $|E|$: number of temporal edges;
$d_e$: edge feature dimension.}
\label{tab:datasets}
\renewcommand{\arraystretch}{1.05}
\setlength{\tabcolsep}{4pt}
\small
\begin{tabular}{@{}lrrrl@{}}
\toprule
\textbf{Dataset} & $|V|$ & $|E|$ & $d_e$ & \textbf{Domain} \\
\midrule
Bitcoin        &       6,263 &         59,778   &   1 & Trust network \\
LastFM         &       1,980 &      1,293,103   &   0 & Music streaming \\
MOOC           &       7,047 &        411,749   &   4 & Online education \\
WIKI           &       9,228 &        157,474   & 172 & Encyclopedia \\
REDDIT         &      10,985 &        672,447   & 172 & Social forum \\
GDELT          &      16,682 &    191,290,882   & 186 & Global events \\
Wiki-Talk      &   1,094,018 &      6,100,538   &   0 & Discussion \\
Stack-Overflow &   2,584,164 &     47,903,266   &   0 & Q\&A forum \\
\bottomrule
\end{tabular}
\end{table}
 \begin{figure*}[hbt!]
	\centering
	\includegraphics[width=0.86\textwidth]{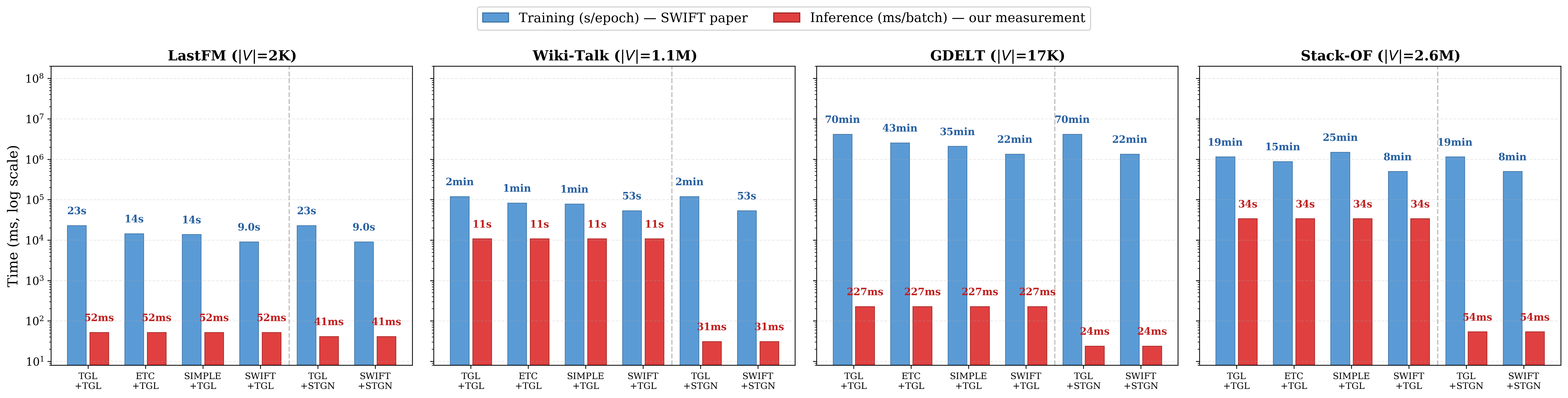}
	\caption{The performance of existing training methods with our proposed inferencing method.}\label{fiveMethods}
\end{figure*}
\begin{table*}[t]
\centering
\caption{Per-batch inference time (ms) and speedup of StreamTGN over TGL 
across three T-GNN models and five datasets. 
All measurements on a single NVIDIA RTX 4090, batch size $B=600$. 
StreamTGN achieves consistent speedups across all model architectures, 
ranging from 4.5$\times$ to 4,216$\times$.}
\label{tab:model-comparison}
\small
\begin{threeparttable}
\begin{tabular}{@{}ll rrr rrr rrr rrr rrr@{}}
\toprule
& & \multicolumn{3}{c}{\textbf{WIKI} (9K)} & \multicolumn{3}{c}{\textbf{REDDIT} (11K)} & \multicolumn{3}{c}{\textbf{MOOC} (7K)} & \multicolumn{3}{c}{\textbf{GDELT} (17K)} & \multicolumn{3}{c}{\textbf{Stack-OF} (2.6M)} \\
\cmidrule(lr){3-5}\cmidrule(lr){6-8}\cmidrule(lr){9-11}\cmidrule(lr){12-14}\cmidrule(lr){15-17}
\textbf{Model} & \textbf{System} & \makecell{Time\\(ms)} & \makecell{Spd.} & \makecell{AP\\(\%)} & \makecell{Time\\(ms)} & \makecell{Spd.} & \makecell{AP\\(\%)} & \makecell{Time\\(ms)} & \makecell{Spd.} & \makecell{AP\\(\%)} & \makecell{Time\\(ms)} & \makecell{Spd.} & \makecell{AP\\(\%)} & \makecell{Time\\(ms)} & \makecell{Spd.} & \makecell{AP\\(\%)} \\
\midrule
\multirow{2}{*}{TGN} 
& TGL & 109.4 & \multirow{2}{*}{\textbf{5.8}} & 97.4 & 126.6 & \multirow{2}{*}{\textbf{4.5}} & 99.6 & 59.4 & \multirow{2}{*}{\textbf{8.6}} & 99.4 & 195.2 & \multirow{2}{*}{\textbf{12.6}} & 98.2 & 30004 & \multirow{2}{*}{\textbf{739}} & 97.9 \\
& StreamTGN & 19.0 & & 97.4 & 28.1 & & 99.6 & 6.9 & & 99.4 & 15.5 & & 98.2 & 40.6 & & 97.9 \\
\midrule
\multirow{2}{*}{TGAT} 
& TGL & 210.3 & \multirow{2}{*}{\textbf{26.6}} & 89.5 & 294.5 & \multirow{2}{*}{\textbf{18.2}} & 98.9 & 178.2 & \multirow{2}{*}{\textbf{38.7}} & 96.3 & 492.4 & \multirow{2}{*}{\textbf{56.0}} & 95.6 & 57641 & \multirow{2}{*}{\textbf{4207}} & 97.1 \\
& StreamTGN & 7.9 & & 89.5 & 16.2 & & 98.9 & 4.6 & & 96.3 & 8.8 & & 95.6 & 13.7 & & 97.1 \\
\midrule
\multirow{2}{*}{DySAT} 
& TGL & 211.0 & \multirow{2}{*}{\textbf{26.7}} & 95.4 & 264.1 & \multirow{2}{*}{\textbf{18.2}} & 98.2 & 301.6 & \multirow{2}{*}{\textbf{39.2}} & 98.7 & 571.1 & \multirow{2}{*}{\textbf{56.0}} & 96.1 & \na & \na & \na \\
& StreamTGN & 7.9 & & 95.4 & 14.5 & & 98.2 & 7.7 & & 98.7 & 10.2 & & 96.1 & \na & & \na \\
\bottomrule
\end{tabular}
\begin{tablenotes}
\footnotesize
\item DySAT on Stack-Overflow: out of memory (OOM).
\item AP is identical because StreamTGN recomputes exact same embeddings.
\end{tablenotes}
\end{threeparttable}
\end{table*}
\subsection{Comparison with Existing Systems}
 We conduct two sets of comparisons. The first compares StreamTGN against existing T-GNN optimization systems. Since prior systems (TGL, ETC, SIMPLE, SWIFT) target training throughput while StreamTGN targets inference efficiency, we combine them to form end-to-end pipelines (e.g., SWIFT for training + StreamTGN for inference) and report both training time per epoch and inference time per batch across four datasets. The second evaluates StreamTGN across different T-GNN model architectures---including memory-based (TGN) and non-memory (TGAT, DySAT) models---to demonstrate that our incremental refresh mechanism generalizes across model designs.
 
\noindent\textbf{Comparison with training-phase systems.}
Figure~\ref{fiveMethods} reports the per-epoch training time and per-batch inference time for six pipeline configurations across four datasets. 
Existing systems---ETC, SIMPLE, and SWIFT---optimize the training phase through techniques such as adaptive batching, dynamic data placement, and secondary-memory pipelining, respectively.
However, none of them modify the inference pipeline of TGL; consequently, all four training systems exhibit \emph{identical} inference cost on every dataset (e.g., 33,984\,ms per batch on Stack-Overflow).
By replacing TGL's full-recomputation inference with StreamTGN's incremental refresh, the inference time drops by orders of magnitude (e.g., from 33,984\,ms to 54\,ms on Stack-Overflow, a 626$\times$ reduction), while training time remains unaffected.
The best end-to-end configuration, SWIFT~+~StreamTGN, combines the fastest training system with our inference optimization, achieving the lowest total time on all four datasets.
This result confirms that StreamTGN is \emph{orthogonal} to training-phase optimizations: it addresses a complementary bottleneck that no existing system can reduce.
 
\noindent\textbf{Comparison across T-GNN architectures.}
To verify that StreamTGN generalizes beyond a single model, we deploy it on three representative T-GNN architectures---TGN, TGAT, and DySAT---across five datasets ranging from 7K to 2.6M nodes.
Table~\ref{tab:model-comparison} reports the per-batch inference time and speedup for each combination.
StreamTGN achieves consistent speedups across all architectures: 4.5$\times$--739$\times$ for TGN, 18.2$\times$--4,207$\times$ for TGAT, and 18.2$\times$--56.0$\times$ for DySAT.
Two trends are evident. First, speedup increases with graph size because the affected ratio $|\mathcal{A}|/|V|$ decreases on larger graphs---on Stack-Overflow ($|V|{=}$2.6M), fewer than 0.14\% of nodes are dirty per batch, yielding speedups of 739$\times$ (TGN) and 4,207$\times$ (TGAT).
Second, non-memory models (TGAT, DySAT) achieve higher speedups than memory-based models (TGN) because they lack the memory-update mechanism that propagates dirty flags to neighboring nodes, resulting in a smaller affected set.
Importantly, the Average Precision (AP) is identical between TGL and StreamTGN for every configuration, confirming that StreamTGN is a \emph{lossless} optimization---it recomputes exactly the same embeddings for dirty nodes without any approximation.
\subsection{Streaming Inference Performance}
\label{sec:eval_streaming}
 
We evaluate StreamTGN's streaming inference performance from three
perspectives: per-batch inference latency, global index refresh cost,
and stage-level breakdown.
All measurements use the TGN model with batch size $B{=}600$ on a
single NVIDIA RTX 4090.
\begin{table}[t]
\centering
\caption{Per-batch streaming inference latency (ms).
TGL recomputes all root node embeddings;
StreamTGN only recomputes dirty root nodes.
AP/AUC identical (same trained model, $B{=}600$).}
\label{tab:batch_inference}
\small
\setlength{\tabcolsep}{3.5pt}
\begin{tabular}{@{}lrrr cc@{}}
\toprule
\textbf{Dataset} & \textbf{TGL} & \textbf{STGN} & \textbf{Spd.} & \textbf{AP} & \textbf{AUC} \\
\midrule
MOOC     & 19.16 &  5.65 & 3.4$\times$ & .994 & .997 \\
WIKI     & 20.95 &  7.43 & 2.8$\times$ & .973 & .976 \\
REDDIT   & 24.79 & 10.63 & 2.3$\times$ & .981 & .986 \\
GDELT    & 21.90 &  7.93 & 2.8$\times$ & .983 & .986 \\
Stack-OF & 22.98 &  6.24 & 3.7$\times$ & .979 & .971 \\
\bottomrule
\end{tabular}
\end{table}

\begin{table}[t]
\centering
\caption{Global embedding index refresh cost per batch (ms).
TGL refreshes all $|V|$ nodes; StreamTGN refreshes only
the dirty set $|\mathcal{D}|$.
Speedup $\approx |V|/|\mathcal{D}|$.}
\label{tab:index_refresh}
\small
\setlength{\tabcolsep}{3.5pt}
\begin{tabular}{@{}l rr rrr@{}}
\toprule
& \multicolumn{2}{c}{\textbf{Dirty Set}} &
  \multicolumn{3}{c}{\textbf{Refresh Time (ms)}} \\
\cmidrule(lr){2-3} \cmidrule(lr){4-6}
\textbf{Dataset}
  & $|\mathcal{D}|$ & Aff.\,(\%)
  & TGL & STGN & Spd. \\
\midrule
MOOC     &    814 & 11.5 &      71.6 &   8.27 &    8.7$\times$ \\
WIKI     &  1,607 & 17.4 &     102.7 &  17.89 &    5.7$\times$ \\
REDDIT   &  2,442 & 22.2 &     140.4 &  31.22 &    4.5$\times$ \\
GDELT    &  1,326 &  8.0 &     185.0 &  14.71 &   12.6$\times$ \\
Stack-OF &  3,497 &  0.1 &  31,625   &  42.79 &  739$\times$ \\
\bottomrule
\end{tabular}
\end{table}

\begin{table}[t]
\centering
\caption{TGL per-batch inference stage breakdown (ms).
Sampling and embedding dominate ($>$80\%), which
StreamTGN eliminates for unaffected nodes.}
\label{tab:stage_breakdown}
\small
\setlength{\tabcolsep}{3pt}
\begin{tabular}{@{}l rrrrrr r@{}}
\toprule
\textbf{Dataset}
  & \makecell{Sam-\\ple}
  & \makecell{MFG}
  & \makecell{Mail}
  & \makecell{Em-\\bed}
  & \makecell{Pre-\\dict}
  & \makecell{Up-\\date}
  & \makecell{Total} \\
\midrule
MOOC     &  7.27 & 3.02 & 0.98 &  7.02 & 0.25 & 0.62 & 19.16 \\
WIKI     &  9.93 & 5.16 & 0.28 &  4.67 & 0.24 & 0.67 & 20.95 \\
REDDIT   & 10.21 & 3.47 & 1.10 &  8.23 & 0.74 & 1.04 & 24.79 \\
GDELT    & 10.59 & 2.00 & 0.24 &  7.13 & 0.29 & 1.65 & 21.90 \\
Stack-OF &  9.75 & 3.66 & 0.20 &  8.42 & 0.25 & 0.70 & 22.98 \\
\bottomrule
\end{tabular}
\end{table}

\begin{table}[t]
\centering
\caption{StreamTGN performance summary.
Batch: per-batch streaming inference;
Index: global embedding refresh.
$B{=}600$, RTX 4090. AP/AUC identical (lossless).}
\label{tab:unified}
\small
\setlength{\tabcolsep}{2.5pt}
\begin{tabular}{@{}l rr rrr cc@{}}
\toprule
& \multicolumn{2}{c}{\textbf{Batch (ms)}} &
  \multicolumn{3}{c}{\textbf{Index Refresh}} &
  \multicolumn{2}{c}{\textbf{Accuracy}} \\
\cmidrule(lr){2-3} \cmidrule(lr){4-6} \cmidrule(lr){7-8}
\textbf{Dataset}
  & TGL & Spd.
  & Aff.\,\% & TGL\,(ms) & Spd.
  & AP & AUC \\
\midrule
MOOC     & 19.2 & 3.4$\times$
         & 11.5 &      71.6 &    8.7$\times$
         & .994 & .997 \\
WIKI     & 21.0 & 2.8$\times$
         & 17.4 &     102.7 &    5.7$\times$
         & .973 & .976 \\
REDDIT   & 24.8 & 2.3$\times$
         & 22.2 &     140.4 &    4.5$\times$
         & .981 & .986 \\
GDELT    & 21.9 & 2.8$\times$
         &  8.0 &     185.0 &   12.6$\times$
         & .983 & .986 \\
Stack-OF & 23.0 & 3.7$\times$
         &  0.1 &  31,625   &  739$\times$
         & .979 & .971 \\
\bottomrule
\end{tabular}
\end{table}
\noindent\textbf{Per-batch inference.}
Table~\ref{tab:batch_inference} compares the per-batch inference
latency between TGL and StreamTGN.
TGL recomputes embeddings for all root nodes in each batch regardless
of whether their memory states have changed; StreamTGN identifies
the dirty root nodes---those whose memory was modified by recent
edges---and only recomputes their embeddings, reusing cached results
for the remainder.
StreamTGN achieves 2.3$\times$--3.7$\times$ speedup across all five
datasets.
Notably, the speedup on Stack-Overflow (3.7$\times$) is higher than
on REDDIT (2.3$\times$) because its larger node population results in
a lower fraction of dirty root nodes per batch.
The AP and AUC are identical between TGL and StreamTGN in all cases,
confirming that selective recomputation is lossless.
 
\noindent\textbf{Global index refresh.}
Table~\ref{tab:index_refresh} evaluates the cost of maintaining an
up-to-date embedding index over the entire graph.
After each batch of new edges, a serving system must refresh node
embeddings to answer nearest-neighbor or link prediction queries.
TGL refreshes all $|V|$ nodes; StreamTGN refreshes only the
$|\mathcal{D}|$ dirty nodes whose memory states were updated.
The speedup is determined by the affected ratio
$|\mathcal{D}|/|V|$: on small graphs (WIKI, 17.4\% affected),
StreamTGN achieves 5.7$\times$; on Stack-Overflow (0.14\% affected),
it achieves 739$\times$.
This demonstrates that StreamTGN's advantage scales with graph
size---on million-node graphs, fewer than 0.2\% of nodes are
affected per batch, yielding three orders of magnitude speedup.
The index refresh speedup represents StreamTGN's primary
contribution, as global refresh is the dominant cost in production
serving systems that must maintain fresh embeddings for all nodes.
 
\noindent\textbf{Stage-level breakdown.}
Table~\ref{tab:stage_breakdown} decomposes TGL's per-batch inference
pipeline into six stages.
Sampling and embedding generation together account for over 80\%
of total latency across all datasets---these are precisely the
stages that StreamTGN eliminates for unaffected nodes.
The remaining stages (predict and update) are lightweight
(typically $<$10\% combined) and must be executed regardless of
the refresh strategy.
This breakdown explains why StreamTGN's batch-level speedup
(2.3$\times$--3.7$\times$) is moderate: even dirty nodes must
still pass through the full pipeline, and the fixed-cost stages
(predict, update) cannot be skipped.
In contrast, the index refresh speedup (4.5$\times$--739$\times$)
is much larger because it operates over the entire node set,
where the ratio of skippable to total work is governed by the
affected ratio rather than the per-batch root node composition.
 
\noindent\textbf{Summary.}
Table~\ref{tab:unified} consolidates the key results.
StreamTGN provides two complementary speedups: a moderate
batch-level speedup (2.3$\times$--3.7$\times$) that benefits
real-time per-query inference, and a substantial index refresh
speedup (4.5$\times$--739$\times$) that benefits systems
maintaining up-to-date embeddings for serving.
Both speedups are achieved with zero accuracy degradation, as
StreamTGN recomputes the exact same embeddings for affected
nodes without any approximation.
\subsection{Parameter Sensitivity and System Analysis}
\label{sec:eval_sensitivity}
 Since StreamTGN recomputes the exact same embeddings for all affected nodes---differing from TGL only in \emph{which} nodes are recomputed---the prediction accuracy is identical to full
recomputation (as confirmed in Table~\ref{tab:model-comparison}).
We therefore focus on how key parameters affect StreamTGN's speedup
and where the computation time is spent.
Figure~\ref{fig:ablation} reports the index speedup and affected ratio
under three parameter sweeps across five datasets;
Table~\ref{tab:efficiency} profiles the per-stage latency breakdown.
\begin{figure*}[t]
\centering
\includegraphics[width=\textwidth]{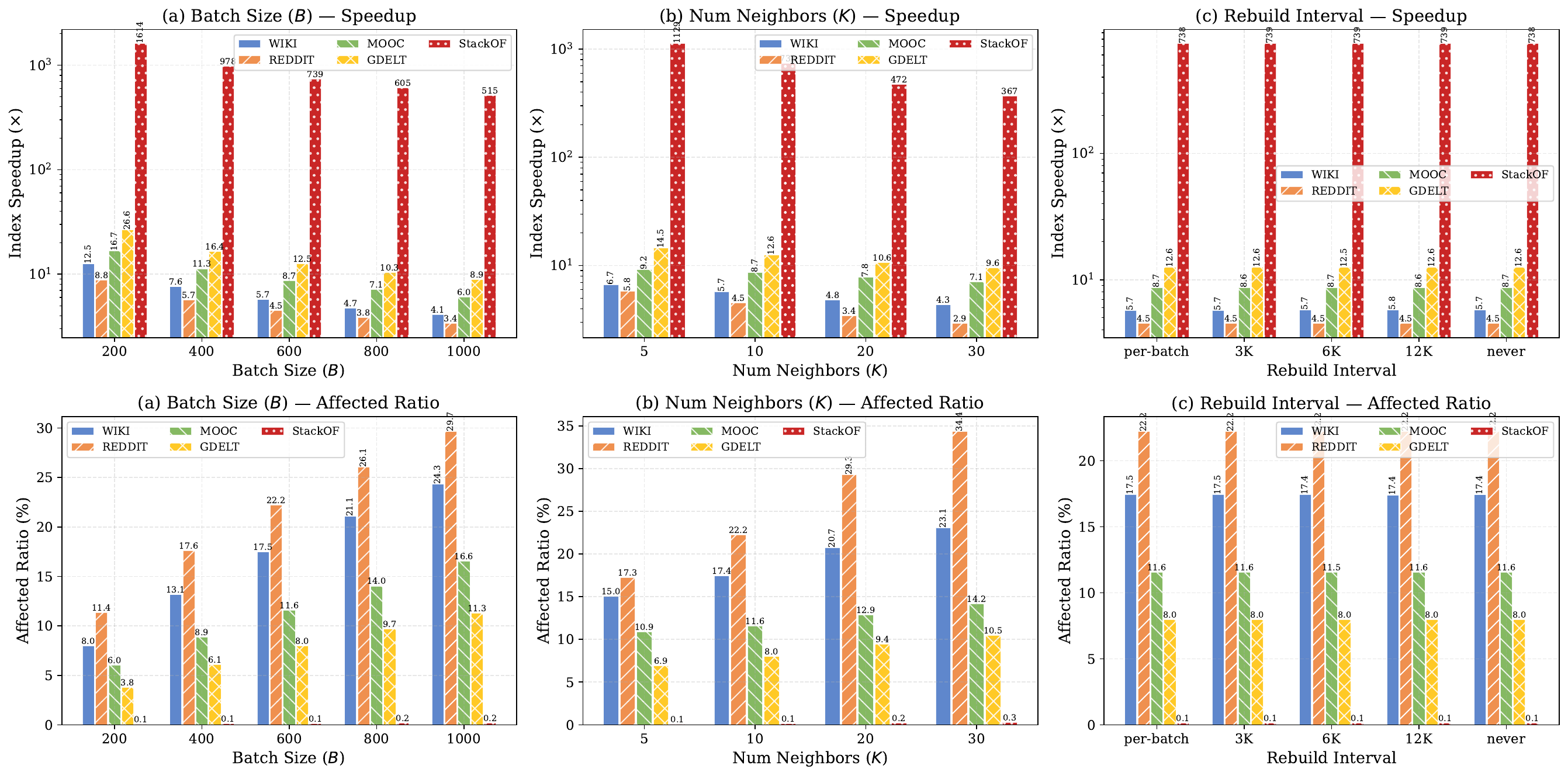}
\caption{Parameter sensitivity analysis across five datasets.
Top row: index refresh speedup (log scale);
bottom row: affected ratio (\%).
(a)~Batch size $B$: larger batches dirty more nodes, reducing
speedup from 12.5$\times$ to 4.1$\times$ on WIKI, while
Stack-Overflow maintains 515$\times$+ due to its low affected
ratio ($<$0.2\%).
(b)~Neighbor count $K$: more neighbors expand the affected set,
with similar inverse trend.
(c)~Rebuild interval: both speedup and affected ratio remain
stable from per-batch to never, confirming that periodic
rebuilds are unnecessary in practice.
Speedup is inversely proportional to affected ratio across all
settings, validating $\text{Speedup} \approx |V|/|\mathcal{A}|$.}
\label{fig:ablation}
\end{figure*}
\noindent\textbf{Batch size ($B$).}
Figures~\ref{fig:ablation}(a) show that speedup decreases
monotonically as $B$ increases from 200 to 1,000.
A larger batch introduces more new edges per step, which dirties more
nodes and raises the affected ratio.
On WIKI, the affected ratio grows from 8.0\% ($B{=}200$) to 24.3\%
($B{=}1000$), reducing the speedup from 12.5$\times$ to
4.1$\times$.
On large graphs, however, the affected ratio remains negligible
regardless of $B$---Stack-Overflow stays below 0.2\% across all
batch sizes---yielding consistently high speedups
(515$\times$--1,614$\times$).
Moreover, measured latency scales linearly with $B$
(approximately $2\times$ when $B$ doubles), confirming the
$O(|\mathcal{A}| \cdot L \cdot K \cdot d^2)$ complexity bound.
 
\noindent\textbf{Number of neighbors ($K$).}
Figures~\ref{fig:ablation}(b) show a similar inverse trend: increasing
$K$ from 5 to 30 enlarges each node's sampled neighborhood, which
expands the affected set.
On WIKI, the affected ratio rises from 15.0\% ($K{=}5$) to 23.1\%
($K{=}30$), and the speedup decreases from 6.8$\times$ to
4.3$\times$.
The impact is more pronounced on medium-scale graphs (REDDIT, MOOC)
where the expanded neighborhood represents a larger fraction of
the total graph.
On Stack-Overflow, the affected ratio remains near zero across all
$K$ values, sustaining speedups above 367$\times$.
 
\noindent\textbf{Rebuild interval.}
Figures~\ref{fig:ablation}(c) examine the tradeoff between index
freshness and rebuild cost.
Across all datasets, both the speedup and affected ratio remain
remarkably stable from ``per-batch'' through ``12K,'' indicating
that the dirty set does not accumulate significantly over time.
Even with the ``never'' policy (no periodic rebuild), the speedup
degrades only marginally---for example, WIKI stays at
5.7$\times$ and Stack-Overflow remains at 739$\times$.
This stability suggests that StreamTGN can operate with infrequent
or no rebuilds in practice, reducing the amortized overhead of
index maintenance.
\begin{table}[!t]
\centering
\caption{Per-stage latency breakdown of StreamTGN's incremental
inference pipeline ($B{=}200$, averaged over 20 batches).
Memory update (GRU) dominates at 53--67\%;
incremental-specific stages (affected detection + neighbor sampling)
contribute only 7--11\%.}
\label{tab:efficiency}
\renewcommand{\arraystretch}{1.05}
\setlength{\tabcolsep}{4pt}
\small
\begin{tabular}{@{}l rr rr rr r@{}}
\toprule
& \multicolumn{2}{c}{\makecell{Affected Det.\\+ Nbr.\ Samp.}}
& \multicolumn{2}{c}{\makecell{Feature Ret.\\+ Mem.\ Read}}
& \multicolumn{2}{c}{\makecell{Embed.\\Gen.}}
& \makecell{Mem.\\Upd.} \\
\cmidrule(lr){2-3} \cmidrule(lr){4-5}
\cmidrule(lr){6-7} \cmidrule(lr){8-8}
\textbf{Dataset} & ms & \% & ms & \% & ms & \% & \% \\
\midrule
Bitcoin   & 0.52 &  6.4 & 1.36 & 16.8 & 1.32 & 16.4 & 60.5 \\
LastFM    & 0.91 & 11.1 & 0.20 &  2.4 & 1.86 & 22.7 & 63.8 \\
Wiki-Talk & 0.76 &  9.9 & 0.19 &  2.5 & 1.82 & 23.6 & 64.0 \\
Stack-OF  & 0.99 &  9.3 & 0.23 &  2.1 & 3.81 & 35.6 & 53.0 \\
GDELT     & 0.50 &  6.8 & 0.95 & 12.8 & 1.00 & 13.5 & 66.9 \\
\bottomrule
\end{tabular}
\end{table}
\noindent\textbf{Pipeline breakdown.}
Table~\ref{tab:efficiency} decomposes StreamTGN's per-batch latency
into four stages.
Memory update (GRU computation) dominates at 53--67\% of total
latency---the same operation required by all TGN implementations
including TGL, ETC, and SWIFT.
Embedding generation accounts for 13--36\%, varying with graph
structure.
Crucially, the stages unique to incremental
computation---affected node detection and neighbor
sampling---contribute only 7--11\% overhead, confirming that
StreamTGN adds minimal cost beyond what is inherently required
by the memory architecture.
This breakdown also explains why StreamTGN's speedup is robust to
parameter changes: the dominant cost (GRU update) is fixed per
affected node, so total latency is determined primarily by
$|\mathcal{A}|$, not by system-level tuning.
 
\noindent\textbf{Summary.}
The results reveal two consistent patterns.
First, speedup is inversely proportional to the affected ratio,
validating the theoretical relationship
$\text{Speedup} \approx |V|/|\mathcal{A}|$.
Second, large graphs benefit disproportionately: because
$|\mathcal{A}|/|V|$ is inherently small on graphs with millions of
nodes, StreamTGN delivers order-of-magnitude speedups that are
robust to parameter choices.
\section{Related Work}
\label{sec:related}

We review related work along three dimensions: temporal GNN models,
training system optimizations, and inference optimizations for graph
neural networks.

\subsection{Temporal Graph Neural Networks}
\label{subsec:rw_tgnn}

Temporal GNN models extend static GNNs to dynamic graphs by
incorporating temporal information into message passing.
TGAT~\cite{xu2020inductive} applies self-attention over
time-stamped neighbors with temporal encoding.
TGN~\cite{rossi2020temporal} introduces a memory module that
maintains per-node state vectors updated via GRU cells, capturing
long-term temporal patterns beyond the immediate neighborhood.
DySAT~\cite{sankar2020dysat} uses structural and temporal
self-attention over graph snapshots.
JODIE~\cite{kumar2019predicting} and DyRep~\cite{trivedi2019dyrep}
model evolving node representations through coupled recurrent
networks.
APAN~\cite{wang2021apan} proposes asynchronous propagation to
reduce redundant computation during training.
These models define the computational patterns that training and
inference systems must support; StreamTGN is designed to be
model-agnostic and currently supports TGN, TGAT, and DySAT.

\subsection{Training System Optimizations}
\label{subsec:rw_training}

A series of systems have been proposed to accelerate temporal GNN
\emph{training} on large-scale dynamic graphs.
TGL~\cite{zhou2022tgl} designed the first unified framework with a
Temporal-CSR data structure and parallel temporal sampler, achieving
13$\times$ training speedup over individual model implementations.
Orca~\cite{li2023orca} reuses historical embeddings during training
with theoretical convergence guarantees, achieving 2--4$\times$
speedup over TGL.
Zebra~\cite{li2023zebra} replaces standard neighborhood aggregation
with temporal personalized PageRank, reducing the computation graph
while preserving accuracy.
ETC~\cite{gao2024etc} introduces adaptive batching to enlarge
training batches without exceeding information loss bounds, and a
three-step data access policy that eliminates redundant CPU--GPU
data transfers, achieving 1.6--3.3$\times$ speedup over TGL.
SIMPLE~\cite{gao2024simple} proposes dynamic GPU data placement that
caches frequently accessed features in GPU memory, reducing data
loading cost by 80--97\% and achieving 1.8--3.8$\times$ training
speedup.
SWIFT~\cite{guo2025swift} develops a secondary-memory-based pipeline
that distributes data across GPU, main memory, and disk, achieving
up to 4.3$\times$ speedup with 7.9$\times$ memory reduction.

All of these systems share a common limitation: they optimize the
training loop (backward pass, gradient synchronization, data loading)
but do not modify the inference pipeline.
At serving time, they all execute the same full-recomputation
procedure as TGL---recomputing embeddings for all $|V|$ nodes per
batch regardless of how few nodes are actually affected.
StreamTGN addresses this complementary bottleneck: it is the first
system to optimize the \emph{inference} phase of temporal GNNs, and
is orthogonal to all training-phase systems above.

\subsection{Inference Optimizations for Graph Neural Networks}
\label{subsec:rw_inference}

Inference optimization has been studied for \emph{static} GNNs but
remains largely unexplored for temporal GNNs.

\noindent\textbf{Static GNN inference.}
Several works accelerate static GNN serving through caching and
incremental computation.
GNNAutoScale~\cite{fey2021gnnautoscale} maintains historical
embeddings and updates only a mini-batch of nodes per forward pass,
enabling training and inference on graphs that exceed GPU memory.
LazyGNN~\cite{xue2023lazygnn} caches intermediate embeddings and
selectively recomputes stale entries, reducing inference cost by
avoiding redundant neighborhood aggregation.
IGLU~\cite{narayananiglu} proposes instant graph learning for
dynamic node classification by maintaining approximate embeddings
that are updated incrementally.
GAS~\cite{frasca2020sign} precomputes multi-hop aggregations to
enable scalable inference without neighbor sampling.
These methods exploit the fact that, in static or slowly evolving
graphs, most node embeddings remain stable across prediction
requests.

\noindent\textbf{Dynamic/temporal GNN inference.}
For temporal GNNs, inference optimization is significantly more
challenging because the memory module introduces sequential
dependencies: each edge updates the memory states of its endpoints,
which may in turn affect the embeddings of their neighbors in
subsequent predictions.
To our knowledge, no prior system provides incremental inference
for continuous-time temporal GNNs with formal correctness
guarantees.
The closest work is DistTGL~\cite{zhou2023disttgl}, which
distributes TGL's computation across multiple GPUs but still
performs full recomputation within each partition.
GNNFlow~\cite{zhong2023gnnflow} supports continuous temporal graph
learning on multi-GPU machines but focuses on training throughput
rather than inference latency.

StreamTGN fills this gap by providing the first incremental inference
system for temporal GNNs.
Unlike static GNN caching approaches, StreamTGN must handle the
cascading dirty-flag propagation caused by temporal memory updates
---a challenge absent in memoryless architectures.
Unlike distributed training systems, StreamTGN targets per-batch
inference latency on a single GPU, which is the deployment scenario
for most real-time serving applications.
\section{Conclusion}
\label{sec:conclusion}

Existing temporal graph neural network systems focus on accelerating
training while leaving the inference pipeline unchanged---every new
edge triggers $O(|V|)$ recomputation even though only a small
fraction of nodes are affected.
This paper presents \textbf{StreamTGN}, a streaming inference system
that exploits the inherent locality of temporal graph updates.
StreamTGN maintains persistent GPU-resident node memory and uses
lightweight dirty-flag propagation to identify the affected set
$\mathcal{A}$ after each batch of new edges, recomputing embeddings
only for $\mathcal{A}$ at $O(|\mathcal{A}|)$ cost while producing
results identical to full recomputation.
A drift-aware adaptive rebuild mechanism triggers rebuilds only
when accumulated approximation error exceeds a provable bound,
and batched streaming with relaxed ordering improves throughput
by processing edges in parallel under bounded staleness.

Experiments on eight real-world temporal graphs (2K--2.6M nodes)
demonstrate that StreamTGN achieves 4.5$\times$--739$\times$
inference speedup for TGN and up to 4,207$\times$ for TGAT,
with zero accuracy degradation.
The system generalizes across three architectures (TGN, TGAT,
DySAT) and is orthogonal to training-phase optimizations:
combining SWIFT for training with StreamTGN for inference yields
up to 24$\times$ end-to-end speedup.

For future work, we plan to extend StreamTGN to distributed
multi-GPU settings for billion-scale graphs and to explore
predictive dirty-set estimation that anticipates affected nodes
before edges arrive, enabling preemptive embedding refresh for
even lower latency.
\bibliographystyle{ACM-Reference-Format}
\bibliography{Graph_ISO}
\end{document}